\newtheorem{definition}{Definition}
\newtheorem{theorem}{Theorem}
\newtheorem{lemma}{Lemma}
\newtheorem{corollary}{Corollary}
\newtheorem{proposition}{Proposition}
\newcommand{\agents}{\mathcal{N}} %for set of agents
\newcommand{\items}{\mathcal{M}} %for set of items
\newcommand{\pickers}{\Pi} %for set of pickers
\begin{document}

\title{On picking sequences for chores}
\author{Uriel Feige\thanks{Weizmann Institute, Israel. {\tt uriel.feige@weizmann.ac.il}} ~and Xin Huang\thanks{Technion, Israel. {\tt xinhuang@campus.technion.ac.il}}}
\maketitle

\begin{abstract}
    We consider the problem of allocating $m$ indivisible chores to $n$ agents with additive disvaluation (cost) functions. It is easy to show that there are picking sequences that give every agent (that uses the greedy picking strategy) a bundle of chores of disvalue at most twice her share value (maximin share, MMS, for agents of equal entitlement, and anyprice share, APS, for agents of arbitrary entitlement). Aziz, Li and Wu (2022) designed picking sequences that improve this ratio to $\frac{5}{3}$ for the case of equal entitlement. We design picking sequences that improve the ratio to~1.733 for the case of arbitrary entitlement, and to $\frac{8}{5}$ for the case of equal entitlement. (In fact, computer assisted analysis suggests that the ratio is smaller than  $1.543$ in the equal entitlement case.) We also prove a lower bound of $\frac{3}{2}$ on the obtainable ratio when $n$ is sufficiently large.
    
    Additional contributions of our work include improved guarantees in the equal entitlement case when $n$ is small; introduction of the {\em chore share} as a convenient proxy to other share notions for chores; introduction of ex-ante notions of envy for risk averse agents; enhancements to our picking sequences that eliminate such envy;  showing that a known allocation algorithm (not based on picking sequences) for the equal entitlement case gives each agent a bundle of disvalue at most $\frac{4n-1}{3n}$ times her APS (previously, this ratio was shown for this algorithm with respect to the easier benchmark of the MMS).  
\end{abstract}

\section{Introduction}

We consider a setting of allocation of a set $\items$ on $m$ indivisible items to $n$ agents. An allocation $A = (A_1, \ldots, A_n)$ is a partition of $\items$ into $n$ disjoint bundles of items (some of the bundles might be empty), with the interpretation that for every $i$, agent $i$ gets bundle $A_i$. Every agent $i$ has a valuation function $v_i : 2^{\items} \rightarrow R$, assigning values to bundles of items. The utility of an allocation $A = (A_1, \ldots, A_n)$  for agent $i$ is $v_i(A_i)$.  Our allocation setting does not involve money (agents do not pay for the items, and cannot transfer utilities among themselves by paying each other). Consequently, we wish our allocation to satisfy some fairness properties. The fairness notions that we consider are discussed in Section~\ref{sec:fairness}. Here we just alert the reader that in defining these fairness notions, we shall consider both the common special case in which agents have equal entitlement to the items, and the general case of arbitrary (possibly unequal) entitlements. The entitlement of agent $i$ is denoted by $b_i$, and entitlements satisfy $b_i \ge 0$ and $\sum_{i=1}^n b_i = 1$. In the special case of equal entitlement, $b_i = \frac{1}{n}$ for every agent $i$.

In this paper we shall assume that all valuation functions are additive, meaning that for every $S \subset \items$, $v_i(S) = \sum_{e\in S} v_i(\{e\})$. Consequently, for every agent $i$ and item $e$, we can classify the item as a {\em good} if $v_i(e) \ge 0$, or as a chore $v_i(e) \le 0$ (if $v_i(e) = 0$ we may classify the item either way). We shall further assume in this paper that the allocation instance is either an instance in which all items are goods for all agents, or all items are chores for all agents. (Allocation instances in which this assumption does not hold are referred to as mixed manna, see~\cite{livanos2022almost}.) In settings of allocation of chores it will be convenient to think of items as having positive {\em disvalue} instead of negative value. We shall denote disvaluation functions by $c_i$ (where $c$ stands for {\em cost}), to distinguish them from the corresponding valuation function $v_i$ (where $v_i = - c_i$). In the context of chores we sometimes use the term {\em responsibility} instead of {\em entitlement}, so that agents with higher responsibility are expected to take upon themselves more of the chores. The notation $b_i$ will be used both for entitlements and for responsibility.

The focus of our work will be a special class of allocation mechanisms, referred to as {\em picking sequences}. To design a picking sequence for an allocation instance with a set $\agents$ of $n$ agents and a set $\items$ of $m$ items, one only needs to know the vector $(b_1, \ldots, b_n)$ of entitlements for the agents, but not the valuations $v_i$. One introduces a set $\pickers$ of identifiers for pickers, with $|\pickers| = |\agents| = n$. A picking sequence is then a vector $\pi = (\pi_1, \ldots, \pi_m)\in \pickers^m$. To use the picking sequence, there are two stages. In the preliminary stage, one chooses a bijection $f$ between agents $\agents$ and pickers $\pickers$. If both sets are $\{1, \ldots, n\}$ (which we denote by $[n]$), the bijection will often simply be the identity bijection, $f(i) = i$. However, we shall sometimes employ a more elaborate preliminary stage (see Section~\ref{sec:bijection}). Thereafter, in the main stage of using the picking sequences, there are $m$ rounds. In each round $r$, the agent $f^{-1}(\pi_r)$ whose identity is mapped to the identity of the picker $\pi_r$ gets to pick whichever single item she desires among those items not yet picked in previous rounds.

Picking sequences are used in practice. In combination with an appropriate preliminary stage, they are often perceived as fair. For example, they are used in order to allocate housing units to eligible individuals in the ``Mechir Lamishtaken" housing initiative in Israel (see website https://www.dira.moch.gov.il/, in Hebrew). There, the setting is largely that of equal entitlement, and the preliminary stage involves choosing a random permutation as the bijection between $\agents$ and $\pickers$. (This permutation is slightly modified later, to account for some aspects of unequal entitlement, but they are not of significant relevance to the current paper.) As another example, picking sequences are used by NBA teams to draft eligible basketball players (see https://www.nba.com/draft/2022). In this setting, agents (the NBA teams) have unequal entitlement (which depends on their performance in the past season), and the bijection between $\agents$ and $\pickers$ is partly deterministic (based on the entitlements), and partly random (the mapping to first four pickers is determined by a lottery, biased towards agents with higher entitlement). 

Picking sequences have several advantages that make them attractive allocation mechanisms.  The reporting burden that they enforce on agents is rather light: agents simply need to report those items that they pick when it is their turn to pick, and do not need to report their full valuation function (and not even the value that they assign to those items that they pick). Moreover, the computational burden for deciding which items to choose is often relatively small. In unit demand setting (such as the housing example above) in which an agent picks only once, all the agent needs to compute is which is the most preferred single item among the remaining items, and choosing this item is a dominant strategy for the agent (in our model, in which we assume no externalities and no collusion). In settings with additive valuations (as in this paper) and multiple picks per agent, a risk averse agent (who fears that other agents will pick those items that the agent values most) has a simple optimal strategy (in a max-min sense, maximizing the value of the worst possible resulting received bundle): in every round in which it is the agents's turn to pick, pick the item of highest value among those remaining. We refer to this strategy as the {\em greedy} picking strategy. Another advantage of picking sequences is that they are easy to understand and transparent: an agent that receives a bundle understands how it came about that this is the particular bundle that she received.

%There are allocation mechanisms that offer guarantees that are stronger than those offered by picking sequences.
{Though picking sequences are attractive allocation mechanisms, they are not optimal in terms of the guarantees that they offer.}
Even in the special case of unit demand valuations, where the use of random order picking sequences (also referred to as random serial dictatorship, RSD) is very common, there are allocation mechanisms such as the {\em Eating mechanism} of~\cite{BM01} that stochastically dominate (in terms of utility) RSD, and in fact, stochastically dominate every distribution over picking sequences. For additive valuations over goods, no picking sequence guarantees that every agent will get at least a constant fraction of her maximin share (MMS, see Definition in Section~\ref{sec:fairness}). As one of our major goals in this paper is to offer constant approximation for share notions such as the MMS and the APS (the anyprice share, see Definition in Section~\ref{sec:fairness}), most of the current paper will not be concerned with allocation of goods, but rather with allocation of chores. For chores, it is quite straightforward to design picking sequences that provide a factor~2 approximation to standard share notions (the MMS for equal entitlement, the APS for arbitrary entitlement). Moreover, as shown in~\cite{aziz2022approximate}, one can obtain approximation ratios better than~2 (specifically, $\frac{5}{3}$) in the case of equal entitlement. This last paper serves as an inspiration for the current work. Our main technical contributions will be the design of picking sequences for chores that improve over the ratio of~2 (compared to APS) for the arbitrary entitlement case, and improve over the ratio of $\frac{5}{3}$ (compared to MMS) in the equal entitlement case.

\subsection{Fairness notions}
\label{sec:fairness}

We shall distinguish between ex-post fairness notions, and ex-ante fairness notions. Ex-post fairness notions refer to properties of the final allocation, regardless of how the allocation was obtained. Ex-ante fairness notions refer to the whole range of allocations that the allocation mechanism might produce (not only the one actually produced), where the range of allocations may depend on randomness of the allocation mechanism, and various (possibly strategic) choices made by the agents when reporting their preferences to the allocation mechanism. 

We shall mostly be interested in outcome based fairness, in which fairness is gauged by the allocation produced (or in ex-ante fairness notions, by the possible allocations that might be produced). We shall also briefly consider procedural fairness, in which fairness is gauged by the procedure by which the allocation is produced, rather than by the outcome of this procedure. 

No single fairness notion is universally agreed upon as being the correct definition of fairness. So instead, we shall define properties that are associated with fairness, with emphasize on those properties that are used in this paper.

\subsubsection{Procedural fairness}

We find it difficult to propose an exact formal definition of procedural fairness, especially in settings of unequal entitlement and arbitrary valuation functions. However, informally, we say that an allocation mechanism satisfies {\em procedural fairness} if the following properties hold:

\begin{itemize}
    \item For every two agents with the same entitlement, the allocation procedure treats them in an identical manner.
    \item For every two agents of unequal entitlement (responsibility, respectively), when allocating goods (chores, resp.), the allocation procedure treats the agent with higher entitlement (lower responsibility, resp.) more favorably than it treats the other agent. % and when allocating chores, treats the agent with lower responsibility more favorably than it treats the other agent.
\end{itemize}

Let us illustrate what this procedural fairness means (and does not mean) in the context of picking sequences. For the case of agents with equal entitlement, one way of satisfying procedural fairness is by having a preliminary stage in which $\agents$ is mapped to $\pickers$ by a bijection chosen uniformly at random. With such a preliminary phase, the allocation procedure does not discriminate between agents of equal entitlements. For the case of allocating goods (chores, respectively) to agents of unequal entitlements (responsibility, resp.) and additive valuations, one way of satisfying procedural fairness is to ensure that in every prefix (suffix, resp.) of the picking sequence, an agent of higher entitlement (responsibility, resp.) gets to pick in at least as many rounds as an agent of lower entitlement (responsibility, resp.). 
%(For allocating chores to agents of arbitrary responsibility, the condition should hold for every suffix instead of every prefix.) 

\subsubsection{Share based fairness}

Share based fairness notions translate the entitlement of the agent and her valuation function into a value, and this value serves as a constraint to the allocation mechanism. Formally, a {\em share} $s$ is a function that maps a pair $(v_i,b_i)$ to a real value. An allocation is considered {\em acceptable} (with respect to the corresponding share notion) if it gives each agent $i$ a bundle $A_i$ satisfying $v_i(A_i) \ge s(v_i,b_i)$. In share based fairness notions, the agent is concerned only with the bundle that she herself receives, and is not concerned with how the remaining items are partitioned among the remaining agents. Note that the share only determines which allocations are acceptable. If there are several different acceptable allocations, one still needs a rule for selecting an allocation among the acceptable ones (or a distribution over acceptable allocations). This selection rule may be guided by the share notion (e.g., attempting to give agents a high as possible multiple of their shares), but may also be guided by other principles (e.g., maximizing welfare). In this section we mostly present various share notions, and do not address selection rules.

The {\em maximin share} (MMS) was formally defined in~\cite{{budish2011combinatorial}}. It applies only to setting with equal entitlement (namely, $b_i = \frac{1}{n}$). 

\begin{definition}
\label{def:MMS}
The MMS of agent $i$ is the minimum value of a bundle according to $v_i$, if agent~$i$ were to partition $\items$ into $n$ bundles so as to maximize this minimum. Formally, 
$$MMS(v_i,\frac{1}{n}) = \max_{A = (A_1, \ldots, A_n)} \min_{j \in [n]}[v_i(A_j)]$$ 
where $A$ ranges over all $n$-partitions of $\items$.
\end{definition}

The {\em anyprice share} (APS) was introduced in~\cite{babaioff2021fair}, and applies to setting with arbitrary entitlements. Let ${\cal{P}}_m$ denote the family of all vectors $P = (p_1, \ldots, p_m)$ of {\em prices} for the items, where $p_j \ge 0$ for every $j \in [m]$, and $\sum_{j=1}^m p_j = 1$. 
%\ufc{Decide whether p denotes picker or price, and change accordingly.} 
We view the entitlement of the agent as a budget $b_i$. 

\begin{definition}
\label{def:APS}
For goods (and an agent with valuation function $v_i$ and entitlement $b_i$), the APS is the highest value that the agent is guaranteed to be able to afford to buy, no matter how the goods are priced. Formally, the APS for goods is:
$$APS(v_i, b_i) = \min_{P \in {\cal{P}}} \max_{\{S\subset \items \; \mid \; \sum_{j \in S} p_j \le b_i\}} v_i(S).$$ 

For chores (and an agent with disvaluation function $c_i$ and responsibility $b_i$), the APS is the lowest disvalue that the agent is guaranteed to be able to spend her entire budget on, no matter how the chores are priced. Formally, $$APS(c_i, b_i) = \max_{P \in {\cal{P}}} \min_{\{S\subset \items \; \mid \; \sum_{j \in S} p_j \ge b_i\}} c_i(S).$$
\end{definition}

In this paper we shall introduce also another notion of a share, that we refer to as the {\em chore share} (CS). This share is applicable only for additive disvaluation functions over chores, which is the main class of valuation functions that we consider in this paper. For this class, the CS serves as a convenient replacement for the APS, and its value is never larger than that of the APS. See Definition~\ref{def:CS}. We remark that for chores, the following inequalities hold (the first inequality is applicable if and only if $\frac{1}{b_i}$ is an integer, as otherwise the MMS is not defined), and each inequality is sometimes strict:

$$MMS(c_i,b_i) \ge APS(c_i,b_i) \ge CS(c_i,b_i)$$

None of the share notions introduced above (MMS, APS, CS) is feasible: there are allocation instances with additive valuations over chores in which no allocation gives every agent her MMS~\cite{aziz2017algorithms, feige2021tight}. Hence the share notions that we shall consider are approximate versions of the above shares. For goods and $\alpha \le 1$, the $\alpha$-MMS and $\alpha$-APS are shares whose value equals $\alpha$ times the corresponding share value. For chores and $\beta \ge 1$, the $\beta$-MMS, $\beta$-APS and $\beta$-CS are shares whose value equals $\beta$ times the corresponding share value. In the current work, we design picking sequences for chores that give each agent no more than her $\beta$-CS, 
%(for equal entitlement) \ufc{is it $\beta$-CS} or $\beta$-CS (for arbitrary entitlement), 
for some value of $\beta < 2$.

%The fairness properties that we consider will mostly be based on notions of a {\em fair share}. Formally, a share function involves two input parameters. One is the valuation function $v_i$ of the underlying agent $i$. The other is the {\em entitlement} $b_i$ of the agent. In the case of equal entitlements, $b_i = \frac{1}{n}$ for every agent $i$. In the case of arbitrary (unequal) entitlement, the entitlements of agents can be arbitrary, as long as they are nonnegative and sum up to~1. That is, $b_i \ge 0$, and $\sum_{i=1}^n b_i = 1$.  Share functions serve as constraints on allocations, where an allocation that fails to give every agent a bundle of value at least equal to her share is unconsidered unfair (towards those agents that do not get their share). As many natural share functions of interest are not feasible (there are allocation instances in which no allocation gives every agent at least her share), we settle for relaxed notions of fairness, in which it suffices to give agents some constant approximation of their shares. For settings with equal entitlements, we shall adopt the {\em maximin share} (MMS) of~\cite{budish}, whereas for setting with arbitrary entitlements, we shall adopt the {\em anyprice share} (APS) of~\cite{BEF21a} (definitions will follow shortly).

%Problem of feasibility.

\subsubsection{Envy based fairness}

In envy based fairness notions, an agent is concerned with the bundle that she herself receives, and also with the bundles that other agents receive (how the remaining items are partitioned into bundles among the other agents). The standard definitions of envy freeness apply most naturally to settings with equal entitlement. There, an allocation is said to be envy free if every agent weakly prefers her own bundle over each of the other bundles allocated to the other agents. For indivisible items, envy free allocations often do not exist (e.g., if there are fewer goods than agents), and relaxations of envy freeness (such as envy free up to one item~\cite{budish2011combinatorial}, and others) are sometimes considered instead. In this paper we shall not be concerned with these relaxations. Instead, we shall consider ex-ante notions of envy freeness, which are feasible in settings with indivisible items. 

The standard notion of envy freeness is not suited for settings with arbitrary entitlements, as it is natural that agents with low entitlement will envy the bundle received by agents of higher entitlement (when allocating goods). There have been attempts to extend definitions of envy freeness to settings on unequal entitlement in some quantitative manner (considering ratios of values of bundles), but we shall follow a different approach here, which is qualitative rather than quantitative. For agents of unequal entitlement, we shall only require that the agent of higher entitlement does not envy the agent of lower entitlement. Likewise, for agents of unequal responsibility (for chores), we shall only require that the agent of lower responsibility does not envy the agent of higher responsibility.

The notion of ex-ante envy freeness is more subtle than that of ex-post envy freeness, as it involves beliefs of the agents as to what allocation (or distribution of allocations) the allocation mechanism will produce.  Allocation mechanisms that are based on picking sequences are often implemented as a multi-round game among the agents (in each round, one agent picks an item), rather as a one shot game in which all agents provide their valuation functions to some central authority, and this authority declares an allocation(as in the revelation principle). In these multi-round settings, the behavior of agents may depend on how they believe other agents will behave, and hence the output allocation is not necessary determined only by the actual valuation functions of the agents, but also by their beliefs. Consequently, our ex-ante envy notion involves some modeling of the agents (beyond just modeling them as having additive valuation functions). Specifically, we model them as being risk averse, in the sense that they believe that other agents are adversarial. For the case of allocation mechanisms based on picking sequences (the case considered in this paper), this turns out to be mathematically equivalent to the assumption that all agents have the same valuation functions, and moreover, that it is common knowledge that this is the case.  We shall refer to our ex-ante envy freeness notion as EF-RA ({\em Envy Free for Risk Averse agents}).  See more details in Section~\ref{sec:envy}.

\subsection{The preliminary stage for picking sequences}
\label{sec:bijection}

Recall that in the preliminary stage we map identities of agents $\agents$ to identities of pickers $\pickers$ in the picking sequence. 
As far as we are aware of, our paper is the first to consider picking sequences for agents of arbitrary entitlements. As such, we were faced with some new conceptual questions, which in turn lead us to a more systematic study of the preliminary stage of picking sequences. We distinguish between three types of preliminary stages.

\begin{itemize}
    \item Adversarial. The choice of bijection between $\agents$ and $\pickers$ is constrained by the vector of entitlements of the agents (the picker identity that an agent may assume may depend on the entitlement of the agent), and one may think of the given bijection as if chosen by an adversary from the set of all allowable bijections. This type of preliminary stage is mainly used for analysing ex-post share guarantees of picking sequences. 
    \item Oblivious. As in the adversarial case, there is a set of allowable bijections, but now there is a probability distribution over these bijections, and the given bijection is chosen randomly according to this probability distribution. This type of preliminary stage is mainly used for achieving improved ex-ante guarantees for picking sequences (without hurting the ex-post guarantees).
    \item Strategic. Here, the picking identities $\pickers$ are thought of as items to be allocated to agents (where the allocation needs to be a bijection), and one employs some strategic allocation mechanism for choosing the bijection. Unlike the oblivious case, the valuation functions of agents may affect which identity they get. This type of preliminary stage is used in our work in order to improve over the ex-ante guarantees offered by an oblivious preliminary stage. 
\end{itemize}

\subsection{Our results}

Some of the contributions of our work are conceptual. One such contribution is the introduction of the notion of a {\em chore share}, which serves as a convenient proxy for other share notions (MMS and APS) when agents have additive disvaluations over chores. Another such contribution is the introduction of a strategic preliminary stage for picking sequences, and illustrating its use for augmenting picking sequences that have strong ex-post share based guarantees so that they also have ex-ante envy-freeness guarantees.

%We are not aware of previous work that considered strategic preliminary stages for picking sequences. Our motivation for introducing a strategic preliminary stage was so as to handle ex-ante envy among agents of unequal entitlements that have additive valuations.

For agents with arbitrary entitlements, our main result concerning ex-post guarantees is the following.

\begin{theorem}
\label{thm:arbitrary}
When allocating indivisible chores to agents with additive valuations and arbitrary entitlements, for every vector of entitlements there is a picking sequence (that can be computed in polynomial time) in which every agent (that follows the greedy picking strategy) gets a bundle of disvalue at most $1.733$ times her chore share (and hence disvalue at most $1.733$ times her APS).
\end{theorem}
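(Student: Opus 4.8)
The plan is to fix an arbitrary agent $i$ with responsibility $b = b_i$ and bound the disvalue of the bundle she receives. First I would invoke the risk-averse reduction described in the introduction: since agent $i$ plays the greedy strategy and is protected only in a max-min sense against the behaviour of the others, the guarantee we must certify is equivalent, for picking sequences, to the assumption that every agent holds the \emph{same} additive disvaluation $c = c_i$. Under this assumption all pickers draw from the same cost order, so in round $r$ the $r$-th cheapest remaining chore is removed; hence if the picker assigned to agent $i$ occupies rounds $r_1 < r_2 < \cdots < r_k$ in the sequence $\pi$, she ends up with exactly the chores whose sorted disvalues are $c_{r_1},\ldots,c_{r_k}$, where $c_1 \le c_2 \le \cdots \le c_m$ is the sorted cost profile. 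Because $\pi$ may depend only on the entitlements, this collapses the theorem into a purely combinatorial statement: design $\pi$ from $(b_1,\ldots,b_n)$ alone so that for every sorted profile $c_1 \le \cdots \le c_m$ and every agent, $\sum_{l} c_{r_l} \le 1.733\cdot CS(c,b)$. Since $CS(c,b) \le APS(c,b)$, the APS claim in the theorem follows for free.

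Second, I would extract from Definition~\ref{def:CS} the precise lower bounds on the chore share that the analysis needs. Normalizing $c(\items)=1$, I expect two qualitatively different lower bounds to be relevant: a \emph{proportional-type} bound that grows like $b\cdot c(\items)$ and tracks the agent's total burden when costs are spread out, and a \emph{large-chore} bound reflecting indivisibility, namely that under any pricing the agent may be compelled to absorb responsibility mass $b$ and the cheapest way to do so can force an expensive chore upon her. It is the interplay of these two lower bounds — one tight when chores are many and nearly equal, the other tight when a few chores dominate the total — that will later be balanced against the two ways the constructed sequence can fail an agent.

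Third, I would construct the sequence. The governing principle for chores is the procedural-fairness ordering already singled out in the introduction: in every \emph{suffix} of the cost order (the $t$ most expensive chores), an agent of higher responsibility should pick at least as often as an agent of lower responsibility. Turning this into a quantitative invariant, I would aim for a sequence that is \emph{balanced in every suffix}: for each $t$, agent $i$ receives at most about $b_i t$ of the $t$ most expensive chores, up to a small additive rounding slack of $O(1)$. The natural way to realize this is a divisor/quota rule — scan the rounds and assign each to the agent whose current deficit relative to her entitlement-scaled target is largest, breaking ties to respect the suffix ordering — which simultaneously keeps agent $i$'s picks spread out (so the proportional term is controlled) and bounds how many of the most expensive chores she can bunch up on (so the large-chore term is controlled). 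Arbitrary real entitlements are handled by the continuous version of this apportionment, and the resulting sequence is computable in polynomial time.

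Finally I would carry out the extremal analysis, which I expect to be the main obstacle. Given the suffix-balance invariant, a majorization argument should show that in the continuous limit the ratio tends to $1$, so the entire excess over $1$ is driven by the integer rounding slack, and the worst instances are genuinely discrete (a moderate number of chores, where an off-by-one in a suffix quota is order-one relative to the share). The adversary's best profile should thus be forced into a one-parameter family trading the many-equal-chores regime against a few-dominant-chores regime; at the crossover the proportional and large-chore lower bounds on $CS$ coincide and the ratio peaks. The delicate part is making this bound hold \emph{uniformly} over all $m$ and all entitlement vectors, and checking the boundary rounding from indivisibility and from non-integer entitlement targets; I anticipate the extremal value to be the root of the quadratic produced by equating the two regimes, consistent with the target $1.733$ being essentially $\sqrt{3}$.
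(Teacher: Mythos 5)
Your first step (the reduction to a common, identically ordered disvaluation and the identification of a picking sequence with an allocation of the sorted cost profile) is correct and matches the paper's Lemma~\ref{lem:IDO} and Corollary~\ref{cor:IDO}. The gap is in your third and fourth steps. The suffix-balanced quota rule you propose --- each agent receives at most roughly $b_i t$ of the $t$ most expensive chores, up to $O(1)$ slack --- is exactly a rounding of the proportional fractional allocation, and that rule cannot beat ratio $2$. Concretely: take an agent with small $b$, one chore of cost $1$, and many tiny chores so that the total cost is $1/b$; then $CS = 1$ (both the proportional bound and the largest-chore bound equal $1$), yet whoever takes the big chore under a proportional quota rule also collects about a $b$ fraction of the remaining mass $1/b - 1$, for a bundle of cost about $2 - b$. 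Your hoped-for majorization argument cannot rescue this, because the $O(1)$ rounding slack is a whole chore whose cost can equal the entire share; the excess over $1$ is not vanishing, it is exactly $1$ in this instance. So the extremal analysis you anticipate (two regimes meeting at the root of a quadratic, $\sqrt{3}$) does not exist for your construction; note also that the paper's constant is not $\sqrt{3}$ but $1 + t/2$ with $t \approx 1.466$ the root of the transcendental equation $(t-1)\ln\frac{t}{t-1} + t - 1 = 1$.

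The idea you are missing is \emph{compensation}: to beat $2$, the agent who absorbs an expensive chore integrally must receive strictly \emph{less} than her proportional share of everything else. In the paper, each agent $i$ is assigned item $e_i$ integrally and in exchange gives up fractions totaling one full unit from the most expensive items (so that the first item she holds fractionally lies beyond position $\lfloor 1/b_i \rfloor$, where the pair bound $c(e_k) + c(e_{k+1}) \le CS$ from Definition~\ref{def:CS} --- a bound your step two never isolates --- caps the rounding-slack item at $CS/2$). The resulting shortfall in coverage is then repaired by scaling the remaining fractions of agent $i$ by a factor $s(B_i)$, where $B_i = \sum_{j\le i} b_j$ and $s$ is nondecreasing with $\int_0^1 s(x)\,dx \ge 1$ (Lemma~\ref{lem:noDeficit}); low-entitlement agents are scaled down (to about $t-1 \approx 0.466$) and high-entitlement agents scaled up (to $t$), and Lemma~\ref{lem:alphabeta} balances the two failure modes to give $\alpha_i + \beta_i \le 1 + s(1)/2 = 1.733$. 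Without some mechanism of this kind, deliberately breaking suffix-proportionality, no analysis of your sequence can get below $2$.
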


We remark that Theorem~\ref{thm:arbitrary} is the first two show that when allocating indivisible chores to agents with additive valuations and arbitrary entitlements, there is an allocation that gives every agent a bundle of value at most $\rho$ times her APS, for some $\rho < 2$. (For allocation of goods, a value of at least $0.6$ times the APS was shown in~\cite{babaioff2021fair}.)  

For agents with equal entitlements, our main result concerning ex-post guarantees is the following improvement over a previous result of~\cite{aziz2022approximate}, who designed picking sequences that guarantee a $\frac{5}{3}$ approximation to the MMS.

\begin{theorem}
\label{thm:equal}
When allocating indivisible chores to agents with additive valuations and equal entitlements, there is a picking sequence (that can be computed in polynomial time) in which every agent (that follows the greedy picking strategy) gets a bundle of disvalue at most $1.6$ times her chore share (and hence disvalue at most $1.6$ times her MMS).
\end{theorem}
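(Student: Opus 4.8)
The plan is to decouple the statement into an ``ordered-instance'' cost bound for the greedy strategy and a purely combinatorial problem of assigning picking positions, which I would then analyze as a small linear program.

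First I would establish the domination lemma that underlies all such arguments. Fix an agent $i$ and sort the items so that $c_i(e_1) \le c_i(e_2) \le \cdots \le c_i(e_m)$. Suppose that, under the chosen picking sequence and for any behavior of the other agents, agent $i$ (playing greedy) picks at global rounds $r_1 < r_2 < \cdots < r_k$. At the moment of her $j$-th pick exactly $r_j - 1$ items have already been removed, so among the $r_j$ cheapest items at least one remains; since greedy takes the cheapest available item, the $j$-th pick costs at most $c_i(e_{r_j})$. Summing, agent $i$'s realized disvalue is at most $D_i := \sum_{j=1}^{k} c_i(e_{r_j})$. This reduces the theorem to a statement that depends on the sequence only through the multiset $R_i = \{r_1,\dots,r_k\}$ of positions assigned to each agent, and on the valuations only through their sorted order.

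Next I would normalize using the chore share. Scaling $c_i$ so that $CS(c_i,\tfrac1n)=1$, Definition~\ref{def:CS} yields in particular the two elementary constraints $c_i(e_\ell)\le 1$ for all $\ell$ and $\sum_\ell c_i(e_\ell)\le n$ (the ``largest single chore'' and ``average'' lower bounds that the share dominates). The goal becomes: assign $\{1,\dots,m\}$ to the $n$ agents so that $D_i \le \tfrac85$ for every agent $i$ and every nonnegative nondecreasing cost vector obeying these constraints. For fixed $R_i$ the inner problem is a linear program whose optimum sits at an extreme point of the polytope of sorted, box- and sum-constrained vectors; these extreme points are ``staircases'' consisting of a suffix of positions at cost $1$, an optional single middle block at a fractional cost $t\in(0,1)$, and a prefix at cost $0$. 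It therefore suffices to bound, for every threshold pair $a\le b$, the quantity $|R_i\cap(b,m]| + t\,|R_i\cap(a,b]|$ against $\tfrac85\max\!\big(1,\tfrac1n[(b-a)t+(m-b)]\big)$, i.e.\ to control how many of each agent's positions fall into each suffix of the sorted order. I would then design an explicit balanced sequence with small suffix-counts. The naive round-robin $1,2,\dots,n,1,2,\dots$ already fails: when $m\equiv 1\pmod n$ the agent who picks first is also forced to take the single most expensive item, giving $D_i = 2-\tfrac1n\to 2$. The fix is to arrange the wrap-arounds, and crucially the final partial round, so that the ``extra'' late picks go to agents whose earlier picks were cheap and so that no agent occupies too large a fraction of any suffix; I would then verify the displayed inequality at every staircase extreme point, checking separately the short-suffix regime (governed by the $\max(\cdot,1)$ term and integer counts) and the long-suffix regime (governed by the average term), with $\tfrac85$ emerging as the value balancing the two worst cases.

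The main obstacle is precisely this construction: finding a sequence whose suffix-count profile simultaneously defeats every staircase cost vector with ratio at most $\tfrac85$, and in particular handling the boundary effects of the last partial round, which is exactly where round-robin degrades to $2$. Pushing the constant down to $\tfrac85$ rather than something weaker is the crux and is where the assignment of positions must be matched tightly to the extremal cost vectors. Once $D_i\le\tfrac85$ is established for every agent, the domination lemma gives realized disvalue at most $\tfrac85\,CS(c_i,\tfrac1n)$, and the chain $CS\le APS\le MMS$ stated in the excerpt upgrades this to the claimed $\tfrac85$-approximation of the MMS. Finally, since the sequence is given by an explicit rule it is computable in polynomial time, completing the proof.
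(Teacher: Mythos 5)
Your opening reduction (the ``domination lemma'') is correct and is exactly the paper's Lemma~\ref{lem:IDO}/Corollary~\ref{cor:IDO}: the problem depends only on the multiset of positions each agent occupies in the sorted order. But your normalization step has a fatal error: you keep only two of the three constraints defining the chore share in Definition~\ref{def:CS}, dropping the pair constraint $z \ge c_i(e_k)+c_i(e_{k+1})$ with $k=\lfloor 1/b_i\rfloor = n$, which after scaling gives $c_i(e_{n+1}) \le \frac{1}{2}$ (in the decreasing-order convention). With only the constraints $c_i(e_\ell)\le 1$ and $\sum_\ell c_i(e_\ell)\le n$, the inner LP you propose to beat is infeasible at ratio $\frac{8}{5}$, indeed at any ratio below $2$ asymptotically: take $n+1$ items each of disvalue $\frac{n}{n+1}$ and all other items of disvalue $0$. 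Both of your constraints hold with your share proxy equal to $1$, yet by pigeonhole some agent occupies two of the $n+1$ expensive positions under \emph{every} assignment, so $D_i \ge \frac{2n}{n+1} > \frac{8}{5}$ for $n\ge 4$. The pair constraint kills this example (it forces the share up to $\approx 2$ there), and the paper's entire analysis leans on it: the class-2 constraint $x_i+y_i+\frac{c_i}{p_i}\le\rho$ is only useful because $y_i\le\frac12$. So the problem you reduce to cannot be solved, and no choice of sequence will rescue the plan as stated.

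Beyond this, even with the corrected constraint set, what you call ``the main obstacle'' --- actually exhibiting a sequence whose suffix-count profile defeats all extremal cost vectors --- is precisely the content of the paper's proof, and it is nontrivial: the paper restricts to \emph{ridge} picking orders (agent $i$ takes items $e_i$ and $e_{2n-i+1}$), assigns each agent a period $p_i$ determined by her class, proves a doubling lemma ($\hat{r}_n\le\hat{r}_{2n}$, Lemma~\ref{lem:doublen}) to reduce all $n$ to multiples of $8$, and then gives an explicit ultimately periodic order for $8$ super-agents achieving $\frac{8}{5}$ (Lemma~\ref{lem:8r}), combined in Theorem~\ref{thm:85}. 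Your staircase/extreme-point framing of the adversary's cost vectors is a reasonable way to organize the verification (and roughly matches the worst cases the paper identifies for its three classes), but as written the proposal contains neither a valid target inequality nor a construction, so it does not constitute a proof.
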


We remark that the $1.6$ approximation ratio to the chore share in Theorem~\ref{thm:equal} is not best possible. For small $n$ we show better bounds, and in particular, an upper bound of $\frac{13}{9}$ for $n=4$ (and a lower bound of $\frac{10}{7}$).  For general $n$, computer assisted analysis that we performed suggests that the true ratio of our picking sequences is better than $1.543$ (for the bound of $1.6$ claimed in the theorem we present a full proof verifiable by hand). The following Theorem presents a lower bound on the best possible approximation ratio achieved by picking sequences.

\begin{theorem}
\label{thm:lowerbound}
When allocating indivisible chores to $n$ agents with equal entitlements, for sufficiently large $n$ and $m$, for every picking sequence and agents that follow the greedy picking strategy, there are input instances under which some agent gets a bundle of disvalue at least $1.5$ times her MMS (and hence, also disvalue  at least $1.5$ times her APS, and  disvalue  at least $1.5$ times her chore share).
\end{theorem}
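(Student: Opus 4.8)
The plan is to construct, for every picking sequence $\pi$, a family of adversarial input instances on which some agent who picks greedily is forced to accumulate disvalue at least $\tfrac{3}{2}$ times her MMS. Since all agents have equal entitlement $b_i = \tfrac1n$ and a picking sequence only depends on the entitlement vector, we are free to choose the instance after seeing $\pi$. The key idea is to make all agents share a single common disvaluation function $c$ (so that every picker faces the same greedy preferences), and to engineer $c$ so that the MMS — computed as an $n$-partition of $\items$ — is small (say normalized to $1$), while the particular bundle that the greedy strategy forces onto some agent has cost approaching $\tfrac32$.

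First I would fix the target: normalize so that $MMS(c,\tfrac1n)=1$, which means the total disvalue $c(\items)$ is close to $n$ and there is an $n$-partition into bundles each of cost at most $1$. I would then design $c$ so that the items come in a few discrete cost classes — for instance some ``heavy'' chores of cost near $\tfrac12$ and some lighter chores — chosen so that the MMS-optimal partition packs them neatly into unit-cost bundles, but the greedy picking order cannot. The point of the construction is that greedy picking is \emph{myopic}: an agent always grabs the least-costly available chore, and with a common valuation function the order in which chores are consumed is completely determined by $\pi$, so whoever is assigned the ``bad'' positions in the sequence is saddled with a collection of chores summing to roughly $\tfrac32$ regardless of how the tie-break bijection $f$ maps agents to pickers.

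The main technical step is the counting/averaging argument showing that \emph{no} picking sequence can avoid this. Since there are $m$ rounds and $n$ pickers, by averaging some picker appears in roughly $\tfrac{m}{n}$ rounds; I would track the total cost of chores that remain available at the rounds assigned to a fixed ``victim'' picker and argue that the structure of $c$ forces the cost of the items that picker ends up with to exceed $\tfrac32 \cdot MMS$ as $n,m \to \infty$. The quantifier order matters here — the theorem states ``for every picking sequence there are input instances'' — so the instance is allowed to depend on $\pi$, which is what makes an adversarial lower bound of this form attainable; the argument must show that every possible interleaving of pickers, in combination with the greedy rule on the common $c$, leaves some picker holding a $\tfrac32$-cost bundle.

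I expect the \textbf{main obstacle} to be pinning down the exact cost profile of $c$ (the number of heavy versus light chores, their precise values, and the total count $m$ as a function of $n$) so that three things hold simultaneously: the MMS is exactly the normalized target, the greedy dynamics on the common function are forced regardless of $\pi$, and the worst-off picker's accumulated cost provably tends to $\tfrac32$ rather than to some smaller constant like $\tfrac53$ or $\tfrac85$. A natural danger is that a clever sequence could balance the heavy chores across pickers and escape the bound; ruling this out requires showing that the asymptotic regime ($n$ and $m$ large) leaves no slack, likely via a pigeonhole argument on the positions of the heavy items in $\pi$. A secondary check is that the lower bound is stated against the MMS (the \emph{largest} of the three shares), so obtaining $\tfrac32$ against the MMS automatically yields the same bound against the APS and the chore share, and I would note this implication at the end rather than re-prove it.
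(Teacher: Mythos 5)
Your setup is sound as far as it goes: you correctly exploit the quantifier order (the instance may depend on $\pi$), and your reduction to a common ordering of chores so that the greedy dynamics are fully determined by $\pi$ is exactly the paper's IDO reduction (Lemma~\ref{lem:IDO} and Corollary~\ref{cor:IDO}). Your heavy/light cost classes (items of disvalue~$1$ and~$\tfrac12$) also match the instances the paper uses in its first step, Proposition~\ref{pro:ridge}, which shows that any picking order that is \emph{not} a ridge picking order (first $n$ items to agents $1,\dots,n$, next $n$ items to agents $n,\dots,1$) already suffers ratio at least $\tfrac32$ on such instances. But the heart of the theorem is what happens \emph{after} that reduction, and there your proposal has a genuine gap: the counting argument you sketch --- averaging over pickers plus a pigeonhole on the positions of the heavy items --- cannot work as stated, because it makes no essential use of $n$ being large. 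The bound you are trying to prove is \emph{false} for small $n$: ridge picking orders achieve ratios $\tfrac43$, $\tfrac75$, and $\tfrac{13}{9}$ for $n=2,3,4$ (Section~\ref{sec:smalln}), all strictly below $\tfrac32$. So any argument that applies uniformly in $n$, as a pigeonhole on heavy items does, must be wrong or incomplete; the danger you yourself flag ("a clever sequence could balance the heavy chores across pickers") is precisely what ridge orders accomplish for small $n$.

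What the paper does instead, and what is missing from your plan, is a rate (covering-constraint) argument that only bites asymptotically. For a ridge picking order with target ratio $\rho$, each agent $i$ must pick with period at least $p_i$, where $p_i$ is forced by an adversarial valuation tailored to $i$'s position in the ridge (all items equal for middle agents; a prefix of $i$ items of disvalue~$1$ for early agents; a prefix of $2n-i+1$ items of disvalue~$\tfrac12$ for late agents). Collectively the picking rates must satisfy $\sum_i \frac{1}{p_i} \ge 1$ or some item is never picked; approximating this sum by an integral as $n \to \infty$ yields
$$(\rho-1)\ln \tfrac{\rho}{\rho - 1} + 2\rho - 3 + 2(\rho - 1)\ln \tfrac{\rho}{2(\rho - 1)} \ge 1,$$
which fails for every $\rho < 1.524$ (Proposition~\ref{pro:largen}). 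The integral approximation is exactly where large $n$ enters, and it is this step --- not a pigeonhole on heavy items --- that rules out all ridge orders; combined with Proposition~\ref{pro:ridge} it gives the $\tfrac32$ bound for all picking sequences. One further point you elide: the rate argument naturally bounds the ratio against the chore share, and transferring it to the MMS requires observing that for these instances the MMS approaches the chore share as $m \to \infty$; your normalization $MMS = 1$ assumes this control rather than establishing it.
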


Being based on picking sequences, our results concerning ex-post guarantees easily extend to {\em best of both worlds} (BoBW) type results that offer both ex-post guarantees and ex-ante guarantees. This is done by including an appropriate preliminary stage. For ex-ante envy freeness guarantees, we use the notion of EF-RA (envy free for risk averse agents), defined in Section~\ref{sec:envy}.

For arbitrary entitlement, using a strategic preliminary stage, we get the following corollary.

\begin{corollary}
\label{cor:arbitrary}
When allocating indivisible chores to agents with additive valuations and arbitrary entitlements, for every vector of entitlements there is an allocation mechanism with the following properties:

\begin{itemize}
    \item It is ex-ante EF-RA (envy free for risk averse agents).
    \item Every agent (that follows the greedy picking strategy) gets ex-post a bundle of disvalue at most $1.733$ times her chore share.
\end{itemize}
 
\end{corollary}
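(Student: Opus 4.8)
The plan is to take the picking sequence $\pi$ of Theorem~\ref{thm:arbitrary} and wrap it in a strategic preliminary stage, so that its ex-post share guarantee is retained while ex-ante EF-RA is added. First I would invoke the reduction established in Section~\ref{sec:envy}, under which an EF-RA analysis may assume that all agents share a single disvaluation function $c$ and that this is common knowledge. Fixing $\pi$ and a consistent tie-breaking rule, under common $c$ and greedy play the whole run is deterministic, so each picker identity $j$ receives a fixed bundle $B_j$ of disvalue $d_j$, \emph{independently} of which agent holds which identity. Label identities so that $j$ is the one designed for the agent of $j$-th smallest responsibility. The first step is to establish the monotonicity $d_1 \le d_2 \le \cdots \le d_n$: this should follow from the responsibility-monotone structure of $\pi$ (in every suffix a higher-responsibility agent picks in at least as many rounds), since under common greedy play taking more of the later rounds yields a bundle of weakly larger disvalue.

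Next I would specify the strategic stage, treating the $n$ picker identities as items to be allocated by a bijection. Sort agents so that $b_1 \le \cdots \le b_n$ and process them in this (increasing responsibility) order; when the agent of rank $k$ is reached she may take any still-available identity of designed rank at most $k$, and her safe (risk-averse) strategy, computed against the common $c$, is to take the available permitted identity of smallest disvalue. Agents sharing the same responsibility form a block, and I would assign the block's identities to the block's agents by a uniformly random bijection. Availability is immediate: when rank $k$ is processed at most $k-1$ identities have been taken, and all of them have designed rank at most $k-1$, whereas there are $k$ identities of designed rank at most $k$; hence at least one permitted identity always remains, and by induction every agent ends with an identity of designed rank at most her own.

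The ex-post guarantee then follows quickly. Every agent $i$ holds an identity designed for a responsibility no larger than $b_i$; since the chore share is monotonically increasing in responsibility, Theorem~\ref{thm:arbitrary} gives her, under her \emph{true} valuation $c_i$, a bundle of disvalue at most $1.733$ times the chore share at the designed responsibility, which is at most $1.733 \cdot CS(c_i,b_i)$.

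For EF-RA I would return to the common-valuation model. With the monotonicity $d_1 \le \cdots \le d_n$ in hand, the block of agents at a given responsibility level jointly receives exactly its designated block of consecutive identities (the choices across blocks are forced by availability), so an agent whose responsibility is strictly smaller than another's obtains a bundle of disvalue no larger and does not envy her, as required; within an equal-responsibility block the uniformly random bijection makes the bundle distributions of the block's agents identical, so no agent in the block envies another. I expect the main obstacle to be reconciling the two requirements at once: the ex-post bound forces the rank restriction ``identity of designed rank at most $k$,'' whereas ex-ante no-envy needs the induced matching to be monotone, and the delicate point is to verify that the risk-averse safe strategy under this restriction really does reproduce the monotone matching (rather than some agent seizing a lower-ranked identity and disturbing the order) and that the random within-block assignment is simultaneously ex-post safe and ex-ante envy-free. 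The reduction of Section~\ref{sec:envy}, which lets risk-averse behavior be analyzed as common-valuation play, is precisely what makes this reconciliation manageable.
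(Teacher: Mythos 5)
Your plan founders on the monotonicity step, and the paper itself proves why it must. You claim that the sequence $\pi$ of Theorem~\ref{thm:arbitrary} has the suffix-domination property (in every suffix a higher-responsibility identity picks at least as often as a lower-responsibility one), and hence that the worst-case bundle disvalues satisfy $d_1 \le d_2 \le \cdots \le d_n$ under every common disvaluation. By Proposition~\ref{pro:EnvyCondition}, suffix domination is exactly the condition under which a lower-responsibility risk-averse picker never envies a higher-responsibility one; and the proposition that follows it in Section~\ref{sec:envy} shows that for every $n$ there is an entitlement vector for which \emph{any} picking sequence with this property hands some agent a bundle of disvalue $(2 - O(\frac{1}{n}))$ times her APS. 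Since the sequence of Theorem~\ref{thm:arbitrary} guarantees at most $1.733\cdot CS_i \le 1.733 \cdot APS_i$, for large $n$ it cannot satisfy suffix domination, so there exist common valuations under which $d_j > d_{j'}$ for some $j < j'$. This is fatal to your EF-RA argument: as you yourself note, your rank restriction forces the across-block assignment (the rank-$k$ agent necessarily ends with an identity of her own block), so your mechanism collapses to a within-block random bijection; whenever monotonicity fails across blocks, a strictly-lower-responsibility agent is locked into an identity whose guaranteed disvalue, under her own valuation, exceeds that of an identity held by a higher-responsibility agent --- precisely the envy EF-RA forbids. Your ex-post argument (rank restriction plus monotonicity of the chore share in the responsibility) is sound, but the restriction that delivers it is what destroys EF-RA.

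The paper resolves the tension in the opposite direction: it imposes no restriction on which identity an agent may take. In its PRSD preliminary stage, agents pick identities in order of increasing responsibility (ties broken uniformly at random), each taking \emph{any} available identity she likes. EF-RA is then immediate: a risk-averse agent never envies anyone who picks after her, since whatever identity that agent ends with was available at her own turn, and equal-responsibility agents are symmetric ex-ante. The ex-post bound survives without any restriction by a pigeonhole argument: when the rank-$k$ agent picks, at most $k-1$ identities are gone, so some identity $j'$ of designed rank at most $k$ remains; her risk-averse greedy choice is, in her own worst-case evaluation, at least as good as $j'$, and Theorem~\ref{thm:arbitrary} (whose guarantee holds for \emph{every} disvaluation assigned to identity $j'$, as the sequence depends only on the entitlements) together with the monotonicity of the chore share in the entitlement gives $c_i(A_{j'}) \le 1.733\, CS(c_i, b_{j'}) \le 1.733\, CS(c_i, b_i)$. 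In short, the correct use of your CS-monotonicity observation is as an upper bound on the unrestricted greedy choice, not as a constraint on the choice set.
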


For equal entitlements, using an oblivious preliminary stage (a random bijection between $\agents$ and $\pickers$), we obtain a stronger corollary that also includes ex-ante share based guarantees.

\begin{corollary}
\label{cor:equal}
When allocating indivisible chores to agents with additive valuations and equal entitlements, for every vector of entitlements there is an allocation mechanism with the following properies:

\begin{itemize}
    \item It is ex-ante EF-RA.
        \item Every agent (that follows the greedy picking strategy) gets ex-ante a bundle of expected disvalue at most her proportional share $\frac{1}{n} c_i(\items)$.
    \item Every agent (that follows the greedy picking strategy) gets ex-post a bundle of disvalue at most $\frac{8}{5}$ times her chore share. (As noted above, computer assisted analysis suggests that the ratio is in fact at most $1.543$.)

\end{itemize}
\end{corollary}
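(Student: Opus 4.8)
The plan is to take the picking sequence $\pi$ guaranteed by Theorem~\ref{thm:equal} and augment it with an oblivious preliminary stage: a bijection $f$ between $\agents$ and $\pickers$ drawn uniformly at random, with agents following the greedy picking strategy (which is optimal for risk-averse agents). The ex-post guarantee (the third bullet) should then come essentially for free, but I first want to argue that the bound proved in Theorem~\ref{thm:equal} is really a \emph{per-slot} guarantee. Because all agents have equal entitlement, the construction of $\pi$ treats all picker identities in $\pickers$ symmetrically, so the conclusion ``the agent occupying this slot gets disvalue at most $\frac{8}{5}$ times her chore share'' holds for every picker slot separately, and depends only on the occupying agent's own disvaluation $c_i$ (through $CS(c_i,\frac{1}{n})$), not on which agent occupies it. Consequently, for every realization of the random bijection $f$ each agent still receives a bundle of disvalue at most $\frac{8}{5}$ times her chore share; randomizing $f$ cannot spoil an ex-post bound that already holds for all $f$.

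For the two ex-ante bullets I would work inside the risk-averse model, which (as recalled in Section~\ref{sec:envy}) is mathematically equivalent to assuming that all agents share a common disvaluation function $c$, and that this is common knowledge. Under this assumption the sequence $\pi$ together with greedy play produces a \emph{deterministic} partition of $\items$ into bundles $B_1,\ldots,B_n$, where $B_p$ is collected by whichever agent is mapped to picker $p$. Since the sequence has length $m$ and every item is eventually picked, these bundles partition $\items$, so $\sum_{p=1}^n c(B_p)=c(\items)$. A uniform bijection assigns agent $i$ to a uniformly random picker slot, hence gives her a uniformly random one of $B_1,\ldots,B_n$, so her expected disvalue is $\frac{1}{n}\sum_{p=1}^n c(B_p)=\frac{1}{n}c_i(\items)$, exactly her proportional share, establishing the second bullet. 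For the EF-RA bullet I would invoke the same symmetry: with a uniform bijection and a common disvaluation function the mechanism is anonymous, i.e.\ invariant under relabelling the agents, so swapping the roles of any two agents leaves the induced distribution over outcomes unchanged and no risk-averse agent can strictly prefer another agent's position to her own, which is precisely ex-ante envy-freeness in the EF-RA sense. Here equal entitlement is essential: it is what lets a single oblivious, symmetric preliminary stage simultaneously certify EF-RA, whereas for arbitrary entitlements (Corollary~\ref{cor:arbitrary}) symmetry is unavailable and a strategic preliminary stage is needed instead.

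The main obstacle I anticipate is not any single hard calculation but the careful reconciliation of the ex-post and ex-ante regimes within one mechanism. The ex-post bound must be verified to be genuinely slot-wise and robust to an arbitrary choice of $f$ (so that randomizing $f$ is harmless), while the ex-ante guarantees rely precisely on the symmetry that the random $f$ creates; I must check that the particular sequence produced by Theorem~\ref{thm:equal} supports both readings at once. A secondary point requiring care is matching the informal symmetry argument for the first bullet to the formal definition of EF-RA in Section~\ref{sec:envy}, confirming that anonymity under the common-valuation model really implies the stated ex-ante envy-freeness condition, and that the proportional-share computation is carried out in the same risk-averse model in which EF-RA is defined.
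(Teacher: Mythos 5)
Your proposal is correct, and it follows the route the paper itself sketches in its overview: keep the picking sequence of Theorem~\ref{thm:equal} and prepend an oblivious preliminary stage (a uniformly random bijection between agents and picker labels). Two small calibrations, and one genuine difference from the paper's formal proof. First, the ex-post bullet survives every realization of the bijection not because the slots of $\pi$ are ``symmetric'' (they are not --- different labels pick in different rounds), but simply because Theorem~\ref{thm:equal} is quantified over all agents and all additive disvaluations: whichever agent occupies whichever label and plays greedily gets at most $\frac{8}{5}$ times her own chore share, so no symmetry of the sequence is needed. Second, your exact-proportionality computation lives inside the worst-case (common-valuation) model; for arbitrary behavior of the other agents the realized disvalue of a greedy agent in a given slot is only \emph{at most} her guaranteed value for that slot, so the expectation is at most (not exactly) the proportional share --- which is precisely what the corollary asserts. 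The genuine difference is that the paper's formal proof combines Theorem~\ref{thm:equal} with Proposition~\ref{pro:equal}, whose preliminary stage is not the uniform bijection but the \emph{random picking sequence for labels}: agents, in a uniformly random order, strategically pick their labels (RSD over labels). The proof of Proposition~\ref{pro:equal} contains your argument as its core step (the uniform average of the per-label guaranteed values equals the proportional share, and symmetry gives EF-RA), and then adds a stochastic-dominance argument showing that greedy label-picking only improves on the uniform assignment. What the strategic stage buys is strictly stronger ex-ante behavior --- for instance, IDO examples where every agent ends up strictly below her proportional share (property~3 of Proposition~\ref{pro:equal}) --- but none of that extra strength is needed for the three bullets of Corollary~\ref{cor:equal}; your oblivious stage suffices.
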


Previously, a similar corollary was stated in~\cite{BEFBoBW}, but with a weaker ex-post ratio of~$\frac{5}{3}$ instead of $\frac{8}{5}$ (and without referring to EF-RA, as that notion was not defined at the time).

For agents with equal entitlement (and additive valuations over chores), we show that allocation mechanisms not based on picking sequences provide better approximations with respect to the APS than those achievable by picking sequences (for which we present a lower bound in Theorem~\ref{thm:lowerbound}). We consider algorithm {\em AlgChores} of~\cite{BarmanK20}, for which it was shown in~\cite{BarmanK20} that it gives every agent a bundle of disvalue not larger than $\frac{4n-1}{3n}$ times her MMS. We extend this result to the more demanding benchmark of the APS.
%It is slightly different from goods for the envy-cycle elimination. 

\begin{theorem}
\label{thm:APS}
When $n$ agents have additive valuations over chores and equal entitlements, the AlgChores allocation algorithm gives every agent $i$ a bundle of cost at most $\frac{4n-1}{3n}APS_i$.
\end{theorem}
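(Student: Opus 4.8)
The plan is to reuse the existing analysis of AlgChores from~\cite{BarmanK20}, which already establishes the bound $c_i(A_i) \le \frac{4n-1}{3n} MMS_i$ for every agent $i$, and to upgrade its conclusion from $MMS_i$ to $APS_i$. Since for chores $APS_i \le MMS_i$ (as recorded in the chain $MMS_i \ge APS_i \ge CS_i$ in Section~\ref{sec:fairness}), replacing $MMS_i$ by $APS_i$ yields a strictly stronger statement, so monotonicity alone does not suffice: one cannot simply invoke $APS_i \le MMS_i$. Instead, I would open up the $\frac{4n-1}{3n}$ analysis and isolate the precise facts about the benchmark that it uses. The analysis of AlgChores never exploits the full combinatorial structure of the MMS partition; it only uses a short list of lower bounds on $MMS_i$ in order to charge the cost of each delivered bundle. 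My strategy is to show that each such lower bound holds verbatim with $APS_i$ in place of $MMS_i$, after which the same chain of inequalities delivers $c_i(A_i) \le \frac{4n-1}{3n} APS_i$.

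Concretely, I expect the relevant lower bounds to be the following three, each of which I would re-establish for $APS_i$ by exhibiting a single adversarial price vector in the definition of the APS for chores (recall $APS(c_i,b_i)=\max_{P} \min_{S:\,p(S)\ge b_i} c_i(S)$ with $b_i=\frac1n$). First, the proportionality bound $APS_i \ge \frac1n c_i(\items)$: take the proportional pricing $p_j = c_i(j)/c_i(\items)$, under which every set $S$ with $p(S)\ge \frac1n$ already has $c_i(S)\ge \frac1n c_i(\items)$. Second, the single-chore bound $APS_i \ge c_i(e)$ for every chore $e$: take the point mass $p_e=1$ and $p_{e'}=0$ otherwise, which forces every priced-$\ge\frac1n$ set to contain $e$, the cheapest such set being $\{e\}$. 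Third, the pigeonhole bound $APS_i \ge c_i(e_n)+c_i(e_{n+1})$ whenever $m\ge n+1$, where $e_1,\dots,e_{n+1}$ are the $n+1$ most expensive chores: spread the price uniformly over these $n+1$ chores, so that any priced-$\ge\frac1n$ set must contain at least two of them. Each of these pricings is feasible, so all three inequalities transfer from MMS to APS.

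The part I expect to be the main obstacle is not proving these individual inequalities but ensuring that the charging in the AlgChores analysis compares the cost of each agent's bundle against a \emph{single} one of these quantities rather than against a sum of them. The APS lower bounds above are one-sided: although $APS_i \ge \frac1n c_i(\items)$ and $APS_i \ge c_i(e)$ both hold, in general $APS_i$ need not dominate the sum $\frac1n c_i(\items)+c_i(e)$, so any step in the original proof that bounds a bundle by such a sum times a constant cannot be transplanted naively. I would therefore audit the analysis step by step, and wherever it implicitly uses a two-term bound, re-group the argument so that the dominant term alone carries the bound, using the refined threshold separating ``large'' from ``small'' chores, which is exactly where the improvement over the trivial factor~$2$ originates. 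A secondary obstacle arises if AlgChores proceeds sequentially or recursively: then I would additionally need a self-reducibility lemma stating that, after a bundle is removed, the APS of the remaining agents on the reduced chore set (with appropriately scaled budget) does not fall below the quantity the analysis charges against. Establishing such a lemma is delicate for APS, since its $\max$-$\min$-over-pricings definition is less transparent under item removal than the partition-based MMS; proving this reduced-instance APS bound is the step I would budget the most effort for.
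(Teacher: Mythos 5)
Your three pricing arguments are individually correct (proportional pricing, a point mass, and uniform pricing on the top $n+1$ chores do certify the three inequalities you state). But the overall strategy has a fatal obstruction, and it is one the paper itself points out. For equal entitlements and IDO instances, the three quantities you propose to use are exactly the three quantities whose maximum defines the chore share $CS_i$ (Definition~\ref{def:CS}). So any analysis of AlgChores that accesses the benchmark only through these three inequalities would in fact prove the stronger claim that AlgChores gives every agent a bundle of cost at most $\frac{4n-1}{3n}CS_i$. That claim is false: as noted right after the proof of Lemma~\ref{lem:APS}, for the instance of Section~\ref{sec:CS} with $2n+1$ chores each of cost $\frac{n}{2n+1}$ one has $CS_i=1$, yet every allocation (in particular the one AlgChores outputs) gives some agent at least three chores, of total cost $\frac{3n}{2n+1}$, and $\frac{3n}{2n+1}>\frac{4n-1}{3n}$ for every $n\ge 2$. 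Hence no amount of ``re-grouping'' of the charging can make the transplant work; the problem is not how the bounds are combined but that every inequality in your toolkit is also true of $CS_i$. Concretely, the regime your bounds cannot handle is a critical round $t$ with $n+1\le t\le 2n$ and $c_i(e_t)\in\left(\frac{1}{3}APS_i,\frac{1}{2}APS_i\right]$: there, proportionality together with $c_i(e_t)\le\frac{1}{2}APS_i$ only yields roughly $\frac{3n-1}{2n}APS_i\approx\frac{3}{2}APS_i$.

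What the paper does instead---and what your proposal is missing---is to exploit the maximization over price vectors in the APS definition \emph{adaptively}, choosing the prices as a function of the algorithm's state at the critical round $t$: each chore in bin $B_j^{t-1}$ is priced $\frac{1-\epsilon}{|B_j^{t-1}|}$, chore $e_t$ is priced $n\epsilon$, and all later chores are priced $0$. The APS then certifies a bundle $B$ of price at least the budget and cost at most $APS_i$. If $B$ contains at least three chores, then (since all positively priced chores cost at least $c_i(e_t)$ under IDO) $c_i(e_t)\le\frac{1}{3}APS_i$, and the proportionality-plus-envy-freeness computation closes the case, giving $\frac{4n-1}{3n}$. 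If $B$ contains exactly two chores, the pricing forces one of them to be the unique chore of a singleton bin $B_k^{t-1}$, and then $c_i(B_k^{t-1})+c_i(e_t)\le c_i(B)\le APS_i < c_i(B_i^{t-1})+c_i(e_t)$ shows agent $i$ envied bin $k$, contradicting the algorithm's choice of recipient. This interaction between the price vector and the algorithm's bins is exactly the information that static, algorithm-independent lower bounds (yours, or equivalently $CS_i$) cannot capture. Incidentally, the issue you budgeted the most effort for---a self-reducibility lemma for APS under item removal---never arises: the paper's argument does not recurse on reduced instances; it prices the full item set once, at the critical round.
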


\subsection{Related work}

The maximin share (MMS) was defined by Budish~\cite{budish2011combinatorial} in studying allocation problems to agents with equal entitlements.  Kurokawa, Procaccia and Wang~\cite{procaccia2014fair} showed that there are allocation instances in which agents have additive valuations, yet no MMS allocation exists. They initiated the study of allocations that give each agent at least a $\rho$ fraction of her MMS, for $\rho$ as close to~1 as possible. Following several works~\cite{barman2018finding, GargMT19, amanatidis2015approximation, ghodsi2018fair, garg2019improved, feige2021tight}, the highest value of $\rho$ known for additive valuations over goods is $\frac{3}{4} + \frac{1}{12n}$, whereas there are examples in which a value higher than $\frac{39}{40}$ is not attainable.  For the case of chores, following several works~\cite{aziz2017algorithms, BarmanK20, huang2021algorithmic, feige2021tight},  the lowest value of $\rho$ known for additive valuations is $\frac{11}{9}$, whereas there are examples in which a value lower than $\frac{44}{43}$ is not attainable.

The Anyprice share (APS) was defined by Babaioff, Ezra and Feige~\cite{babaioff2021fair}, and is applicable also to allocation problems in which agents have unequal entitlements. For additive valuations over goods they gave an allocation algorithm that gives every agent at least $\frac{3}{5}$ of her APS, whereas for additive valuations over chores they observed that known techniques provide allocations in which every agent gets a bundle of disvalue at most twice her APS.

In our work we study picking sequences for agents with additive valuations over chores, and derive approximation guranteess for them with respect to the MMS (for the case of equal entitlements) and the APS (for the case of arbitrary entitlements). Our study was inspired by the work of Aziz, Li and Wu~\cite{aziz2022approximate} who designed picking sequences for the equal entitlements case that give each agent a bundle of disvalue at most $\frac{5}{3}$ times her MMS. In our work, we design new picking sequences, improving the approximation ratio in the equal entitlements case, and getting the first nontrivial ratios (better than~2) in the arbitrary entitlement case. As a side result, unrelated to picking sequences, we analyze the performance of the allocation algorithm {\em AlgChores} of~\cite{BarmanK20}, and show that the approximation ratio of $\frac{4n-1}{3n}$ previously proved compared to the MMS, also holds with respect to the APS (in the equal entitlement setting). 
%From algorithm {\em AlgChores}, we get the same approximation ratio $\frac{4}{3}$ for APS as for MMS.   studied picking sequence algorithm for ordinal preference. The picking sequence algorithm could give $\frac{5}{3}$ approximation of MMS. In this work, we applied picking sequence method in~\cite{aziz2022approximate} to unequal entitlements. 

%Maximin share of goods receives more attentions. After $\frac{2}{3}$-approximation of maximin share has been found, a line of works~\cite{barman2018finding,garg2018approximating,amanatidis2015approximation} go further for a polynomial time algorithm.  Later the approximation ratio is improved to $\frac{3}{4}$~\cite{ghodsi2018fair,garg2019improved}.

Additional themes in the study of fair allocation of indivisible items can be found (for example) in the survey~\cite{aziz2022algorithmic}. We briefly mention some of these themes. There are fairness notions not based on shares, and among them those based on variations on envyfreeness (such as EF1 and EFX) receive much attention (see for example~\cite{caragiannis2016unreasonable,DBLP:conf/sigecom/ChaudhuryGM20,chaudhury2021improving,berger2022almost}). 
Besides setting with either only goods or only chores, there are studies of fair allocation of {\em mixed manna} that contain both goods and chores~\cite{aziz2018fair,livanos2022almost}. 
There are also studies of fair allocation in settings in which the valuation functions of agents go beyond additive. See for example~\cite{ghodsi2022fair,chaudhury2022fair,li2022fair}. 
%For a survey for some of the work that followed, see, for example.
%envy-based notions. Envy-free up to one item (EF1) and envy-free up to any item (EFX)

\subsection{Overview of our proof techniques}

Our main technical results involve picking sequences for chores for agents with additive disvaluation functions. In this section we provide an overview for the main ideas used in our proofs for these results.

\subsubsection{The chore share}
\label{sec:CS}

Rather than analyse the approximation ratios of our picking sequences directly compared to the APS, we introduce the {\em chore share} (CS), which is a lower bound on the APS (for additive disvaluation functions over chores). Hence, an approximation of $\rho$ with respect to the CS is also an approximation ratio of $\rho$ with respect to the APS (recall, that agents wish to receive a bundle of small disvalue). The advantage of using the CS as a proxy to the APS is that it simplifies the analysis, compared to direct use of the APS. 

For the purpose of defining $CS_i$, the chore share of agent $i$ that has additive disvaluation function $c_i$ over chores, we assume that chores are ordered from highest disvalue (chore $e_1$) to lowest disvalue (chore $e_m$) according to $c_i$.

\begin{definition}
\label{def:CS}
The {\em chore share} of agent $i$ with disvaluation function $c_i$ and responsibility (entitlement) $b_i$, denoted by $CS_i$, is the minimum value of $z$ satisfying the following three inequalities:

\begin{enumerate}
    \item $z \ge b_i \cdot v_i(\items)$. (The chore share is at least the proportional share.)
    \item $z \ge v_i(e_1)$. (The chore share is at least the disvalue of the item with highest disvalue.)
    \item $z \ge v_i(e_k) + v_i(e_{k+1})$, where $k = \lfloor \frac{1}{b_i} \rfloor$. (For example, if $b_i = 0.3$, then $z \ge v_i(e_3) + v_i(e_4)$. Note that $APS_i \ge v_i(e_3) + v_i(e_4)$, by pricing each of the first four items at $0.25$.)
\end{enumerate}

\end{definition}

As an example of a gap between the $CS_i$ and $APS_i$, consider an agent $i$ of entitlement $\frac{1}{n}$, and $2n + 1$ chores, each of value $\frac{n}{2n+1}$. In this example $CS_i = 1$, whereas $APS_i = MMS_i = \frac{3n}{2n+1}$. 

\subsubsection{Identically ordered (IDO) instances}
\label{sec:IDO}

{When agents have additive disvaluations (and likewise, additive valuations), {\em identical ordering} (IDO) input instances are of special interest. 

\begin{definition}
\label{def:IDO} 
We say that an input instance with additive disvaluation functions over indivisible chores has {\em identical ordering} (IDO) if the disvaluation functions of agents are such that for every disvaluation function $c_i$ and every two items $e_j$ and $e_k$ with $j < k$, it holds that $c_i(e_j) \ge c_i(e_k)$.
\end{definition} 

For picking sequences for chores when agents have additive disvaluation functions, we  consider the {\em greedy picking strategy}: at every round in which it is the agent's round to pick, she picks the item of smallest disvalue among those remaining.

\begin{lemma}
\label{lem:IDO}
For every picking sequence and every disvaluation function of an agent, the worst case for an agent who uses the greedy picking strategy (in terms of the disvalue of the final bundle received) is when the input instance is IDO and all other agents also use the greedy picking strategy.
\end{lemma}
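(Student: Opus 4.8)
The plan is to fix the picking sequence $\pi$ and the agent's disvaluation function $c_i$, derive a \emph{universal} upper bound on the disvalue $c_i(A_i)$ of the bundle $A_i$ that agent $i$ (playing greedy) can receive --- a bound that holds no matter what the disvaluation functions and picking choices of the other agents are --- and then exhibit the IDO-all-greedy scenario as one attaining this bound with equality. Order the items so that $c_i(e_1) \ge c_i(e_2) \ge \cdots \ge c_i(e_m)$, and let $r_1 < r_2 < \cdots < r_t$ be the rounds in which $\pi$ assigns the turn to agent $i$.

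First I would establish a per-round bound. At round $r_j$ exactly $r_j - 1$ items have already been removed (by agent $i$ in her earlier rounds and by the other agents), so the set $R$ of remaining items has size $m - r_j + 1$. Since agent $i$ is greedy she picks the item of smallest disvalue in $R$, whose disvalue equals $\min_{e \in R} c_i(e)$. For any set $R$ of size $s = m - r_j + 1$ this minimum is at most $c_i(e_s)$, because the size-$s$ subset maximizing its own minimum disvalue is $\{e_1, \ldots, e_s\}$ (any other size-$s$ set contains some $e_k$ with $k > s$, forcing its minimum down to at most $c_i(e_k) \le c_i(e_s)$). Hence the item agent $i$ receives in round $r_j$ has disvalue at most $c_i(e_{m - r_j + 1})$, and summing over her rounds gives $c_i(A_i) \le \sum_{j=1}^t c_i(e_{m-r_j+1})$. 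Crucially this bound is independent of how the other agents value or pick, so I need not reason about their valuation functions at all.

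Next I would show that the IDO instance in which every agent shares agent $i$'s ordering, and in which every agent plays greedy, attains this bound. In such an instance every picker --- agent $i$ and the others alike --- removes the globally cheapest remaining item at her turn, so the items leave the pool exactly in the order $e_m, e_{m-1}, \ldots, e_1$, and the item removed in round $r$ is $e_{m-r+1}$. In particular agent $i$ receives precisely $\{e_{m-r_j+1} : 1 \le j \le t\}$, giving $c_i(A_i) = \sum_{j=1}^t c_i(e_{m-r_j+1})$ and meeting the upper bound. Therefore IDO-all-greedy maximizes agent $i$'s disvalue over all instances and all behaviors of the other agents, which is the claim. This also clarifies why both hypotheses are needed: in a non-IDO instance a greedy other agent removes an item cheapest in her own order but possibly not cheapest for $c_i$, which can only help agent $i$, and a non-greedy other agent likewise need not remove the items agent $i$ most wants.

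I expect the only delicate point to be the \emph{simultaneity} of the per-round bounds: a priori one might worry that forcing round $r_j$ to its maximum hurts the bound at some later round. The argument above sidesteps this by exhibiting a single concrete scenario meeting every per-round bound at once rather than optimizing round by round; the structural fact making this possible is that under IDO-all-greedy the set of items removed before round $r_j$ is exactly the $r_j - 1$ cheapest items, which is simultaneously the worst possible prefix of removals for each of agent $i$'s rounds. An alternative route --- an exchange argument showing that replacing any non-cheapest pick of an adversarial other agent by the cheapest available item never decreases $c_i(A_i)$ --- would also work, but it requires a more cumbersome coupling between the two induced processes, so I would prefer the direct bound-and-attain approach.
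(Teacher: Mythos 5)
Your proof is correct and takes essentially the same approach as the paper: the paper's one-sentence argument (that under IDO with all agents greedy, the set remaining at each of agent $i$'s turns is the worst possible, namely the $m-r+1$ chores of highest disvalue) is exactly your bound-and-attain argument in compressed form. Your write-up simply makes explicit the per-round bound $c_i(e_{m-r_j+1})$ and its simultaneous attainment, which the paper leaves implicit.
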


\begin{proof}
For IDO instances, when all agents use the greedy picking strategy, in every round $r$ in which it is the agent's turn to pick, the selection of chores remaining to pick from is the worst possible (the $m-r+1$ chores of highest disvalue). 
\end{proof}

We analyse the approximation guarantee compared to the chore share for agents that follow the greedy picking strategy. By Lemma~\ref{lem:IDO}, we may assume for this purpose that the input instance is an IDO instance, and all agents follow the greedy picking strategy. (By a well known lemma from~\cite{bouveret2016characterizing}, this assumption can be made also when the allocation mechanism is not based on a picking sequence, a fact that we use in the proof of Theorem~\ref{thm:APS}.) Under this assumption, there is a one to one correspondence between picking sequences and allocations. The agent picking in round $r$ gets item $e_{m-r+1}$ in the allocation, and conversely, the agent that gets item $i$ is the one who picks in round $m-i+1$. This leads to the following corollary.

\begin{corollary}
\label{cor:IDO}
Consider an arbitrary allocation instance with $n$ agents and $m$ chores $e_1, \ldots e_m$. Consider an arbitrary allocation $A = (A_1, \ldots, A_n)$, and associate with it a picking sequence $\pi$ in which every agent $i$ picks in those rounds $r$ for which $e_{m - r + 1} \in A_i$. Let $\rho_i(A)$ be the worst possible approximation ratio $\frac{c_i(A_i)}{APS_i}$ over all additive disvaluation functions for agent $i$ with respect to which the input instance is an IDO instance ($c_i(e_j) \ge c_i(e_k)$ for every $j < k$). Let $\rho_i(\pi)$ be the worst possible approximation ratio (of the disvalue of the bundle received compared to the APS) for agent $i$ that follows the greedy picking strategy in picking sequence $\pi$, taken over all additive disvaluation functions $c_i$, and over all picking strategies for the other agents. Then $\rho_i(A) = \rho_i(\pi)$. Moreover, the same equality holds with $APS_i$ replaced by $CS_i$, and in the case of equal entitlement, also with $APS_i$ replaced by $MMS_i$.
\end{corollary}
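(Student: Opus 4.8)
The plan is to combine Lemma~\ref{lem:IDO} with the round-to-item correspondence noted just above the corollary, and to exploit the fact that each of the benchmarks $APS_i$, $CS_i$ and (for equal entitlement) $MMS_i$ depends on $c_i$ only through the multiset of values that $c_i$ assigns to the items, and not on how these values are attached to the labels $e_1,\dots,e_m$. I would fix the picking sequence $\pi$ associated with $A$, and let $R_i = \{r : e_{m-r+1}\in A_i\}$ be the set of rounds in which agent $i$ picks. By Lemma~\ref{lem:IDO}, for every fixed disvaluation function $c_i$ the worst outcome for agent $i$ (over all valuations and strategies of the other agents) is attained when the instance is IDO and all agents pick greedily; hence in computing $\rho_i(\pi)$ it suffices to consider this regime.

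In the IDO-plus-greedy regime the bundle received by agent $i$ is completely determined: since all agents pick greedily, in round $r$ the remaining item of smallest disvalue is the one ranked $r$-th from the bottom in the common order, so agent $i$ receives exactly the items ranked $r$-th from the bottom for $r\in R_i$. Writing $v_{(1)}\le \cdots \le v_{(m)}$ for the sorted multiset of values of $c_i$, the disvalue of agent $i$'s bundle is therefore $\sum_{r\in R_i} v_{(r)}$. The key observation is that this quantity, as well as each of $APS_i$, $CS_i$, $MMS_i$, depends on $c_i$ only through its multiset of values; consequently the supremum defining $\rho_i(\pi)$ is unchanged if we restrict attention to disvaluation functions $c_i$ that are IDO-ordered with respect to the fixed labels, i.e.\ $c_i(e_1)\ge\cdots\ge c_i(e_m)$.

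For such an IDO-ordered $c_i$ the $r$-th smallest value is $v_{(r)}=c_i(e_{m-r+1})$, and as $r$ ranges over $R_i$ the index $j=m-r+1$ ranges over $\{j : e_j\in A_i\}$; hence $\sum_{r\in R_i} v_{(r)} = \sum_{e_j\in A_i} c_i(e_j) = c_i(A_i)$, which is exactly the numerator appearing in the definition of $\rho_i(A)$. Taking the supremum over all IDO-ordered $c_i$ then gives $\rho_i(\pi)=\rho_i(A)$. Since the denominator played no role beyond being a multiset-invariant of $c_i$, the identical reasoning yields the same equality with $APS_i$ replaced by $CS_i$ or, in the equal-entitlement case, by $MMS_i$.

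I expect the only real subtlety, rather than a genuine obstacle, to be the bookkeeping that licenses the passage from an arbitrary $c_i$ to an IDO-ordered $c_i$: one must check that reordering the labels of $c_i$ leaves both the worst-case bundle disvalue $\sum_{r\in R_i} v_{(r)}$ and all three benchmarks invariant, so that the maximization over orderings comes for free and is realized precisely by the ordering that aligns $c_i$ with the labels used to build $\pi$ from $A$. Verifying that $APS_i$, $CS_i$ and $MMS_i$ are label-independent functions of the value multiset is immediate from Definitions~\ref{def:APS}, \ref{def:CS} and \ref{def:MMS}, so no substantial difficulty remains.
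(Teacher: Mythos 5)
Your proof is correct and follows essentially the same route as the paper: invoke Lemma~\ref{lem:IDO} to reduce $\rho_i(\pi)$ to the IDO-plus-greedy regime, then use the round-$r$ $\leftrightarrow$ item-$e_{m-r+1}$ correspondence to identify the resulting bundle with $A_i$. The only difference is that you make explicit the label-invariance (multiset) argument for $APS_i$, $CS_i$ and $MMS_i$ that licenses restricting the supremum to IDO-ordered $c_i$, a step the paper leaves implicit; this is a sound and worthwhile clarification, not a deviation.
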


With Corollary~\ref{cor:IDO} in mind, when describing our picking sequences, we shall sometimes describe them as allocations (e.g., agent~1 gets item $e_1$) rather than as picking sequences (e.g., agent~1 gets to pick in round $m$).} 

\subsubsection{Picking sequences for chores for agents with arbitrary entitlements} 

We now present an overview for the proof of Theorem~\ref{thm:arbitrary}.  We assume without loss of generality that the instance is IDO. Then, we consider a fractional allocation of the chores, and round this fractional allocation to get an integral allocation. {By Corollary~\ref{cor:IDO},} this integral allocation defines a picking sequence (the agent getting chore $e_j$ in the integral allocation gets to pick in round $m-j+1$ in the picking sequence). Lemma~\ref{lem:FracToSequence} (in Section~\ref{sec:arbitrary}) is a key lemma connecting between the properties of the fractional allocation and the approximation guarantees of the resulting picking sequence. Starting from the {\em proportional fractional allocation} (in which every agent $i$ gets a fraction $b_i$ of each chore), the lemma implies that after rounding, the picking sequence will approximate the chore share within a ratio no worse than~2. To improve over the ratio of~2, we do not immediately round the proportional fractional allocation, but instead modify it gradually to a different mixed allocation (partly integral, partly fractional), in which the first $n$ items are allocated integrally, and the remaining items are allocated fractionally. We show that this can be done in such a way that applying Lemma~\ref{lem:FracToSequence} to the mixed allocation gives an improved approximation ratio of 1.733.

The approach that we use for handling the arbitrary entitlement case (rounding a fractional allocation using Lemma~\ref{lem:FracToSequence}) has the advantage that the analysis does not get too complicated (despite the fact that we need to handle arbitrary vectors of entitlements $(b_1, \ldots, b_n)$). However, it does not lead to best possible approximation ratios. 

\subsubsection{Picking sequences for chores for agents with equal entitlements} 
\label{sec:equalOverview}

We present an approach for constructing picking sequences that has the potential to produce best possible approximation ratios. Analysing this approach is rather complicated, and hence we use this approach only in the simpler case of agents with equal entitlements. This approach is used in the proof of Theorem~\ref{thm:equal}, which improves over approximation ratios proved in~\cite{aziz2022approximate} for other picking sequences.

Our approach for proving Theorem~\ref{thm:equal} is as follows. As in the proof of Theorem~\ref{thm:arbitrary}, we assume without loss of generality that the input instance is IDO. Recall the correspondence between picking sequences and allocations {(Corollary~\ref{cor:IDO})}.
Our picking sequences correspond to allocations that start with what we shall refer to as a {\em ridge}. Namely, the first $n$ items are given to agents~1 to~$n$ (in increasing order), and the next $n$ items are given to agents $n$ to~1 (in decreasing order). In other words, among the first $2n$ items, every agent $i$ gets items $e_i$ and $e_{2n-i+1}$. We refer to allocations that start with a ridge as {\em ridge picking orders}. {(The distinction that we make between picking orders and picking sequences is that in picking orders it is assumed that agents select in their turn the worst possible chore instead of the best possible chore. See Section~\ref{sec:order}.)} For every fixed $n$, among all ridge picking orders (that differ from each other by the allocation of chores not in the ridge), we attempt to design the ridge picking order with best approximation ratio $\hat{r}_n$ compared to the chore share. For fixed small $n$, we determine $\hat{r}_n$ exactly, whereas for large $n$, we obtain upper bounds on $\hat{r}_n$. 

To determine the best possible value $\hat{r}_n$, we design a procedure that checks for a candidate value $r$ whether there is a ridge picking order with approximation ratio no worse than $r$ (and thus establish that $\hat{r}_n \le r$).

In our ridge picking orders, every agent $i$ is associated with a period $p_i$ for receiving items. This period need not be an integer. The period $p_i$ determines a sequence of thresholds on the indices of items that agent is allowed to receive. That is, the $t$th item that agent $i$ receives has index no smaller than the $t$th threshold in her sequence of thresholds. The period $p_i$ of each agent $i$ is chosen as the smallest possible value that ensures that the disvalue of the chores received by the agent does not exceed $r$ times her chore share.

Our choice of periods $p_i$ will need to ensure that that all items can be  allocated, without violating any of the thresholds. Equivalently, it will need to ensure that for every $j$, the number of thresholds of value at most $j$ in all lists is at least $j$. We refer to this as the {\em covering constraints}.  A necessary requirement for the covering constraint to hold (when $m$ tends to infinity) is that $\sum_{i=1}^n \frac{1}{p_i} \ge 1$. We refer to this as the {\em fractional covering constraint}. However, satisfying the fractional covering constraint is not a sufficient condition for satisfying all integer covering constraints. Moreover, there does not seem to be a simple characterization for those choices of periods $(p_1, \ldots, p_n)$ that satisfy the covering constraints for ridge orders. This is the main technical difficulty in determining the approximation ratios obtainable by ridge picking orders. 

We employ the following approach to obtain a nearly tight upper bound on $\hat{r} = \sup \hat{r}_n$. We prove in Lemma~\ref{lem:doublen} that for every $n$, $\hat{r}_n \le \hat{r}_{2n}$. Consequently, $\hat{r}_n \le \hat{r}_{2^kn}$ for every integer $k \ge 0$. Then, in Lemma~\ref{lem:8r} we prove that for every $n$ divisible by~8, $\hat{r}_n \le \frac{8}{5}$. The proof of this lemma involves partitioning the agents into~8 blocks of super agents, and designing a picking order for the super agents that can be lifted back to a picking order for the original agents, without losing in the approximation ratio. The combination of the two lemmas then implies that for every $n$,  $\hat{r}_n \le \hat{r}_{8n} \le \frac{8}{5}$.

Our approach can lead to improved bounds by considering values of $n$ divisible by higher powers of~2 (instead of by~8, as in done in Lemma~\ref{lem:8r}). However, then the picking orders become longer to describe, and verifying correctness by hand becomes tedious. Running a computer program written for the purpose of designing and verifying these picking orders shows that $\hat{r} \le 1.543$, but we have not produced a readable proof for such a claim.

\subsubsection{A lower bound}

In the proof of Theorem~\ref{thm:lowerbound}, we make use of ridge picking orders and of the fractional covering constraint, both described in Section~\ref{sec:equalOverview}. We first observe that for every $n \ge 2$, if the picking order is not a ridge picking order, then the approximation ratio compared to the MMS is no better than $\frac{3}{2}$. Hence for the purpose of proving Theorem~\ref{thm:lowerbound}, we may consider only ridge picking orders. Then, using the fractional covering constraint, we prove that for sufficiently large $n$, there is no ridge picking order with approximation ratio better than 1.52, compared to the chore share. (In contrast, for small $n$, ridge picking orders do achieve approximation ratios better than 1.5.) Finally, we note that as $m$ grows, the MMS for our negative examples approaches the chore share, and hence when $m$ is sufficiently large, the 1.52 approximation lower bound holds also with respect to the MMS.

\subsubsection{BoBW results}

In the proof of Corollary~\ref{cor:arbitrary}, we add a preliminary strategic stage so as to enhance the picking sequence designed for Theorem~\ref{thm:arbitrary} by an ex-ante EF-RA property. A risk averse agent $i$ translates a picking sequence $\pi$ and an identity $j$ in the picking sequence to a bundle $A_j$ (for every round $r$ in which $j$ picks an item according to $\pi$, the bundle $A_j$ contains chore $e_{m-r+1}$). Hence agent $i$ may think of each identity $j$ as a an ``item" of disvalue $c_i(A_j)$. To prevent ex-ante envy among risk averse agents of equal entitlement, we may simply assign identities to agents via a uniformly random bijection (and this is used in the proof of Corollary~\ref{cor:equal}). For the case of arbitrary entitlement (or responsibility, for chores), we need agents of low responsibility not to envy agents of high responsibility.  This is achieved by a strategic preliminary phase composed of a picking sequence for identities, where in this preliminary picking sequence agents of low responsibility get to pick before agents of high responsibility (and for agents of equal responsibility, the order among them in the picking sequence is determined uniformly at random). 
We may refer to this preliminary picking sequence as {\em priority random serial dictatorship} (PRSD). 

Two remarks are in order here. One is that the use of a strategic PRSD preliminary stage (or just RSD, in this case)  may be useful even if all agents have equal entitlements (in our proof of 
Corollary~\ref{cor:equal} we just use a random bijection), as its outcome for assigning identities to agents stochastically dominates that of a random bijection (if agents are risk averse and associate disvalues with identities as explained above). The other is that the use of a preliminary PRSD stage may be useful also for picking sequences for goods, not only for chores. (For goods, picking sequences do not offer good ex-post guarantees compared to the APS, but nevertheless, they might still be used in practice).

\subsection{Discussion and open problems}

In this work we consider picking sequences for chores for agents with additive valuations. We presented two techniques for designing such picking sequences. One is based on rounding of fractional allocations, and was used in the case of arbitrary entitlement (the proof of Theorem~\ref{thm:arbitrary}). The other is based on ridge sequences that satisfy certain covering contraints, and was used in the case of equal entitlement so as to get improved approximation ratios (the proof of Theorem~\ref{thm:equal}). This second technique can be adapted to the case of arbitrary entitlement, but we leave open the question of what approximation ratio it gives in that case.

More generally, we leave open the question of whether the best approximation ratios possible (compared to either the chore share, or the APS) in the case of arbitrary entitlement are the same as they are in the case of equal entitlement. This question is open for allocation algorithms in general, and for picking sequences in particular.

Another question concerns {\em best of both worlds} (BoBW) results, as in corollaries~\ref{cor:arbitrary} and~\ref{cor:equal}. Known techniques can be used in order to derive a BoBW result showing that for agents with arbitrary entitlement (and additive disvaluation functions over chores), there is an allocation mechanism that ex-ante gives each agent a bundle of expected value not larger than her proportional share, and ex-post not larger than twice the APS. Moreover, such a mechanism can be implemented as a distribution over picking sequences. (The proof of the above claims can be obtained by starting with the proportional fractional allocation, and transforming it into a distribution over picking sequences, using principles as in the proof of Lemma~\ref{lem:FracToSequence}. Further details are omitted.)
However, it is open whether there is such  BoBW result where the ex-post ratio is strictly smaller than~2.

\section{Picking sequences for arbitrary entitlements}
\label{sec:arbitrary}

In this section we consider $n$ agents with arbitrary entitlements and additive disvaluation functions over $m$ indivisible chores. The entitlement (also referred to as responsibility) of agent $i$ is denoted by $b_i$, her disvaluation function is denoted by $c_i$, and her chore share (see Definition~\ref{def:CS}) is denoted by $CS_i$. 
%The additive valuation functions represent cost (they are nonnegative and the higher the value, the less desirable the chore is). 
We assume (without loss of generality) that the instance is an IDO instance (see Definition~\ref{def:IDO}), and hence $c_i(e_j) \ge c_i(e_k)$ for every agent $i$ and every $j < k$. We also assume (without loss of generality) that agents are ordered in order of increasing entitlement ($b_1 \le b_2 \ldots \le b_n$).

In constructing our picking sequences, we shall consider fractional allocations. Let us first introduce some notation.
Fix the entitlements $(b_1, \ldots, b_n)$ of the agents. A fractional allocation $A^f$ is a collection of nonnegative coefficients  $\{a_{ij}\}_{i \in \agents, j\in \items}$, where $a_{ij}$ denotes the fraction of item $e_j$ allocated to agent $i$, and for every item $e_j$, $\sum_i a_{ij} = 1$. For each agent $i$, let $f_i$ denote the index of the first item $e_k$ for which $a_{ik} \not= \{0,1\}$ ($i$ gets a strict fraction of $e_k$). We may assume without loss of generality that $f_i$ indeed exists for every $i$ (for example, by adding to the instance a single chore $e_{m+1}$ of disvalue~0 to all agents, and setting $a_{i, m+1} = b_i$ for every agent $i$).

We use the following lemma.

\begin{lemma}
\label{lem:FracToSequence}
Consider arbitrary entitlements $(b_1, \ldots, b_n)$ for the agents and an arbitrary fractional allocation $A^f$, and recall the notation above. Fix some $\rho > 1$.
Suppose that for every $i$, for every additive disvaluation function $c_i$ (that one can associate with agent $i$) it holds that 

$$c_i(e_{f_i}) + \sum_{e_j \in {\items}} a_{ij} \cdot c_i(e_j) \le \rho \cdot CS_i$$

Then there is a picking sequence for $n$ agents and entitlements $(b_1, \ldots, b_n)$ in which every agent $i$ with an additive disvaluation function that uses the greedy picking strategy gets a bundle of disvalue at most $\rho \cdot CS_i$. Moreover, given $A^f$, such a picking sequence can be designed in polynomial time.
\end{lemma}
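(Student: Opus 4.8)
The plan is to invoke Corollary~\ref{cor:IDO}: since the instance is IDO, it suffices to produce an integral allocation $A = (A_1, \ldots, A_n)$ of the chores for which $c_i(A_i) \le \rho \cdot CS_i$ holds for every agent $i$ and every IDO-consistent disvaluation $c_i$; such an allocation corresponds to a picking sequence with the claimed guarantee (the agent receiving $e_j$ picks in round $m-j+1$). So the task reduces to \emph{rounding} the fractional allocation $A^f$ to such an integral $A$, and the whole difficulty is to control, simultaneously for all agents, how much each integral bundle can exceed the corresponding fractional one.

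First I would reformulate the target inequality via summation by parts. Writing $c_k = c_i(e_k)$ with the convention $c_{m+1} := 0$, and setting $x_i(k) = \sum_{j \le k} a_{ij}$ and $y_i(k) = |A_i \cap \{e_1, \ldots, e_k\}|$, Abel summation gives $c_i(A_i) = \sum_{k=1}^m (c_k - c_{k+1}) y_i(k)$ and $\sum_j a_{ij} c_i(e_j) = \sum_{k=1}^m (c_k - c_{k+1}) x_i(k)$, while telescoping gives $c_i(e_{f_i}) = \sum_{k=1}^m (c_k - c_{k+1}) \mathbf{1}[k \ge f_i]$. Since the instance is IDO the coefficients $c_k - c_{k+1}$ are nonnegative and otherwise arbitrary, so the hypothesis $c_i(A_i) \le c_i(e_{f_i}) + \sum_j a_{ij} c_i(e_j) \le \rho \cdot CS_i$ will hold for every IDO-consistent $c_i$ as soon as $A$ satisfies the \emph{prefix-domination} condition $y_i(k) \le x_i(k) + \mathbf{1}[k \ge f_i]$ for every $i$ and every $k$. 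Moreover, because all chores of index below $f_i$ are assigned integrally to agent $i$, the quantity $x_i(k)$ is an integer for $k < f_i$; hence $\lceil x_i(k)\rceil = x_i(k)$ there, while $\lceil x_i(k)\rceil \le x_i(k) + 1$ always. Thus it suffices to find an integral allocation with $y_i(k) \le \lceil x_i(k)\rceil$ for all $i,k$.

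The heart of the proof is this rounding step, and it is where I expect the main obstacle, since the ceiling constraints couple all prefixes of all agents at once. My plan is to linearize the per-agent prefix constraints into a bipartite matching. For agent $i$, let $U_i = \lceil x_i(m)\rceil$ and create slots $s = 1, \ldots, U_i$, where slot $s$ may receive only chores of index at least $\tau_i(s) := \min\{k : \lceil x_i(k)\rceil \ge s\}$. A standard greedy/Hall argument for chains shows that a set $T_i$ of chores satisfies $|T_i \cap \{e_1,\ldots,e_k\}| \le \lceil x_i(k)\rceil$ for all $k$ precisely when $T_i$ can be matched into these slots respecting the thresholds. I would then form the bipartite graph between the $m$ chores and the $\sum_i U_i \ge m$ slots, with an edge from $e_j$ to slot $(i,s)$ whenever $j \ge \tau_i(s)$. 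Splitting the fractional mass $a_{ij}$ across agent $i$'s slots by water-filling yields a fractional perfect matching covering every chore (feasible exactly because $x_i(k) \le \lceil x_i(k)\rceil$ for all $k$, which is the fractional Hall condition on prefixes). Since the bipartite matching polytope is integral, an integral perfect matching of all chores exists and is found in polynomial time by a single max-flow computation; assigning to each agent the chores placed in her slots produces $A$, which satisfies $y_i(k) \le \lceil x_i(k)\rceil$ by construction.

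Finally I would assemble the pieces: the allocation $A$ so obtained meets the prefix-domination condition, hence $c_i(A_i) \le \rho \cdot CS_i$ for every IDO-consistent $c_i$, and translating $A$ into a picking sequence through Corollary~\ref{cor:IDO} gives the claimed picking sequence, computable in polynomial time. The single point I expect to require the most care is the two-sided equivalence between the prefix constraints and the slot-matching formulation, together with the fractional feasibility of $A^f$ as a matching; once those are established, everything else rests on the integrality of bipartite matching and on routine bookkeeping.
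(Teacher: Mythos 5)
Your proposal is correct, and it reaches the same guarantee through genuinely different machinery than the paper. The shared core is the invariant you isolate: every prefix count must satisfy $y_i(k) \le \lceil x_i(k)\rceil$, with the $+1$ slack usable only at indices $\ge f_i$ (your observation that $x_i(k)$ is integral for $k < f_i$ is exactly right, though the phrasing ``assigned integrally \emph{to agent} $i$'' should read ``agent $i$'s fraction of each such chore is $0$ or $1$''). Where you differ is in both the construction and the analysis. The paper builds the allocation \emph{greedily and online}: chore $e_t$ is given to any agent $i$ whose cumulative fraction exceeds her current count, $\sum_{j\le t} a_{ij} > n_i^{t-1}$; feasibility is a one-line averaging argument ($\sum_i n_i^{t-1} = t-1 < t = \sum_i \sum_{j \le t} a_{ij}$), and this rule implicitly maintains precisely your ceiling invariant. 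The value bound is then proved by a direct charging argument (a fractional matching of each received chore, except the one indexed $i_{q+1} \ge f_i$, to fractions of chores of no larger index), rather than by your Abel-summation reformulation. What each approach buys: the paper's construction is elementary, runs online, and is trivially polynomial; your route is more modular, cleanly separating the analytic step (Abel summation gives an exact, two-sided characterization of when the bound holds for \emph{all} IDO-consistent disvaluations) from the combinatorial step (a slot-based bipartite matching certified feasible by water-filling and made integral by LP integrality/max-flow), at the cost of heavier machinery. Both are complete and correct proofs of the lemma.
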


\begin{proof}
Given a fractional allocation $A^f$, we design a family $\cal{F}$ of picking sequences, and every member of the family will satisfy the conclusions of the lemma. Recall (from Section~\ref{sec:IDO}) that there is a one to one correspondence between picking sequences and allocations for IDO instances. We shall present our picking sequences by describing their corresponding allocations. 

For every agent $i$ and $t \ge 1$, let $n_{i}^t$ denote the number of chores received by agent $i$ among the first $t$ chores, with $n_i^0 = 0$. Then a picking sequence is in $\cal{F}$ if and only if, for every $t \ge 1$, the agent who receives chore $e_t$ is an agent $i$ for which the inequality $\sum_{j=1}^t a_{ij} > n_i^{t-1}$ holds.

The above indeed defines a picking sequence in which all items are allocated. Namely, for every $t \ge 1$, there is an agent $i$ for which the inequality $\sum_{j=1^t} a_{ij} > n_i^{t-1}$ holds. This holds by an averaging argument, observing that $\sum_{i \in \agents} n_i^{t-1} = t-1$ whereas $\sum_{i \in \agents} \sum_{j \le t} a_{ij} = t > t-1$. 

We now prove that every agent $i$ receives a bundle $A_i$ of disvalue at most $v_i(e_{f_i}) + \sum_{e_j \in {\items}} a_{ij} \cdot v_i(e_j)$ (and hence at most $\rho \cdot CS_i$). Fixing $i$, consider only nonzero variables $a_{ij}$. Let $q$ denote the number of variables $a_{ij}$ of value~1 whose $j$ index precedes $f_i$. Let $i_1, i_2, \ldots $ be the indices of chores in $A_i$ (hence,  $A_i = \{e_{i_1}, e_{i_2}, \ldots \}$). Then by the definition of $\cal{F}$ it holds that $i_{q+1} \ge f_i$. Remove chore $e_{i_{q+1}}$ from $A_i$. We can now have a fractional matching in which every item $e_{i_k}$ in $A_i \setminus \{e_{i_{q+1}}\}$ is matched with one unit $u_{i_k} = 1$, where $u_{i_k}$ is obtained as a sum of fractions of items in $\items$ whose index is not larger than $i_k$, and moreover, for every item $e_{\ell} \in \items$ the sum of fractions of $e_{\ell}$ that is matched is at most $a_{i\ell}$. This matching combined with the fact that the instance is ordered implies that the sum of disvalues of items picked by agent $i$, excluding item $e_{i_{j+1}}$, is at most $\sum_{e_j \in {\items}} a_{ij} \cdot c_i(e_j)$. Using the fact that $i_{q + 1} \ge f_i$ we have that $c_i(e_{q+1}) \le c_i(e_{f_i})$, and thus  $c_i(A_i) \le c_i(e_{f_i}) + \sum_{e_j \in {\items}} a_{ij} \cdot c_i(e_j)$.
\end{proof}

We now construct fractional allocations satisfying the condition of Lemma~\ref{lem:FracToSequence} with $\rho$ smaller than~2. 
%In our construction we shall not attempt to obtain the smallest possible value of $\rho$ \ufc{is this still true?} (and indeed, it will be evident that there is slackness in our construction), but rather to obtain a reasonably small value of $\rho$, while keeping the analysis relatively simple. 
Recall that the chore share CS has the following properties (for every agent $i$ with entitlement $b_i$):

\begin{itemize}
    \item $CS_i \le APS_i$.
    \item $c_i(e_1) \le CS_i$.
    \item $c_i(e_k) \le \frac{1}{2}CS_i$ for every item $k$ with $k > \lfloor \frac{1}{b_i} \rfloor$.
    \item $CS_i \ge b_i \cdot v_i(\items)$.
\end{itemize}

To simplify notation and terminology, and without affecting the generality of the results, we assume that $CS_i = 1$ %$v_i(e_1) \le 1$ and $v_i(e_k) \le \frac{1}{2}$ for $k$ as above 
(this can be obtained by scaling $c_i$ by a multiplicative factor), and that $CS_i = b_i \cdot c_i(\items)$ (this can be obtained by raising the values of some low valued items, and thus every bundle that now has value at most $\rho \cdot CS_i$ also had such a value before the raise). Hence the properties above become:

\begin{itemize}
    \item $CS_i = 1 \le APS_i$.
    \item $c_i(e_1) \le 1$.
    \item $c_i(e_k) \le \frac{1}{2}$ for every item $k$ with $k > \lfloor \frac{1}{b_i} \rfloor$.
    \item $CS_i = b_i \cdot c_i(\items)$. (The chore share equals the proportional share.)
\end{itemize}

For vectors $\bar{\alpha} = (\alpha_1, \ldots, \alpha_n)$ and $\bar{\beta} = \{\beta_1, \ldots, \beta_n\}$ we refer to a fractional allocation $A^f$ as a $(\bar{\alpha}, \bar{\beta})$ allocation if for every agent $i$ it holds that $\sum_{e_j \in {\items}} a_{ij} \cdot c_i(e_j) \le \alpha_i \cdot CS_i$ and $c_i(e_{f_i}) \le \beta_i \cdot CS_i$. Using the convention that $CS_i = 1$ this becomes $\sum_{e_j \in {\items}} a_{ij} \cdot v_i(e_j) \le \alpha_i$ and $v_i(e_{f_i}) \le \beta_i$.

Observe that Lemma~\ref{lem:FracToSequence} implies that every $(\bar{\alpha},\bar{\beta})$ fractional allocation can be transformed into a picking sequence with $\rho \le \max_i[\alpha_i + \beta_i]$. For the proportional fractional allocation (every agent $i$ gets a $b_i$ fraction of every item) we have that $\alpha_i, \beta_i \le 1$ and hence this gives a picking sequence with $\rho \le 2$ (for agents with arbitrary entitlements). We now present an algorithm (running in polynomial time) that transforms the proportional fractional allocation into a new fractional allocation with $\alpha_i + \beta_i \le 1.733$. Crucially, the algorithm only uses the entitlements $b_i$, but not the disvaluation functions $c_i$ (so that Lemma~\ref{lem:FracToSequence} can be applied on this fractional allocation). Denoting the vector $(b_1, \ldots, b_n)$ of entitlements by $\bar{b}$, we denote our fractional allocation by $A^f(\bar{b})$. 

We present the steps of constructing the fractional allocation $A^f(\bar{b})$. To clarify these steps, we shall have two running examples. In example $E1$ there are $n$ agents of equal entitlement. In example $E2$ there are $n=3$ agents with entitlements $(\frac{1}{8}, \frac{3}{8}, \frac{1}{2})$. In both cases, the number of items can be thought of as infinite (by adding items of~0 value), and the instance is ordered (earlier items have disvalue at least as high as later items, for all agents).

Preliminary step:

\begin{itemize}

     \item Sort the agents in order of increasing entitlement $b_1 \le b_2 \le \ldots \le b_n$. For every $i \le n$ we set $B_i = \sum_{j \le i} b_j$. In $E_1$, $B_i = \frac{i}{n}$ for every $1 \le i \le n$. In $E2$, $B_1 = \frac{1}{8}$, $B_2 = b_1 + b_2 = \frac{1}{2}$, and $B_3 = b_1 + b_2 + b_3 = 1$.
    
\end{itemize}

Main steps:   

\begin{enumerate}

    \item Start with the proportional fractional allocation (in which every agent $i$ gets a fraction $b_i$ of every item). We refer to this initial allocation as $A^1$. In $E1$ every agent gets a $\frac{1}{n}$ fraction of every item. In $E2$, of every item, the agents get fractions $\frac{1}{8}, \frac{3}{8}, \frac{1}{2}$, respectively. 
    
    \item Modify $A^1$ into a fractional allocation $A^2$, where $A^2$ is possibly not a legal fractional allocation, in the sense that for some items the total fraction allocated might be more than~1 (they have a {\em surplus}), and for some other items it might be less than~1 (they have a {\em deficit}). $A^2$ is the sum of two allocations, an integral allocation $A^{2I}$ and a fractional allocation $A^{2f}$. 

    \begin{enumerate}

    \item In the integral allocation $A^{2I}$, for every $1 \le i \le n$, item $e_i$ is given to agent $i$ integrally, whereas all other items are not allocated.

    \item The fractional allocation $A^{2f}$ is a modification of the proportional fractional allocation $A^1$. In this modification, every agent $i$ gives up fractions of items that total~1, starting from $e_1$ (and thus ending at item $e_j$ for $j = \lceil \frac{1}{b_i} \rceil$). 

    \end{enumerate}

    We now describe $A^{2f}$ and the resulting $A^2$ for our two running examples.

    \begin{itemize}

        \item In $E1$, allocation $A^1$ gives every agent a fraction of $\frac{1}{n}$ in every item. Consequently, in $A^{2f}$ every agent gives up her fractions in the first $n$ items. Thus $A^{2f}$ does not allocate any of the first $n$ items, and allocates each of the remaining items in a proportional way. Consequently, $A^2 = A^{2I} + A^{2f}$ is a legal fractional allocation.

        \item For $E2$, in allocation $A^{2f}$ agent~1 gives up her fraction in the first~8 items, agent~2 gives up her fraction in the first two items, and also decreases her fraction in item $e_3$ from $\frac{3}{8}$ to $\frac{1}{8}$, and agent~3 gives up her fraction in the first two items. Consequently, $A^2 = A^{2I} + A^{2f}$ is not a legal fractional allocation. In $A^2$, items $e_1$ and $e_2$ are each allocated once (in $A^{2I}$), but item $e_3$ is allocated to an extent of $1 + \frac{5}{8}$ (allocated once in $A^{2I}$, and $0 + \frac{1}{8} + \frac{1}{2} = \frac{5}{8}$ in $A^{2f}$), whereas each of items $\{e_4, \ldots, e_8\}$ is allocated only to the extent of $\frac{7}{8}$ ($0 + \frac{3}{8} + \frac{1}{2} = \frac{7}{8}$ in $A^{2f}$). Hence item $e_3$ has a surplus of $\frac{5}{8}$, and each of items $\{e_4, \ldots, e_8\}$ has a deficit of $\frac{1}{8}$. Observe that by construction, the sum of surpluses equals the sum of deficits.

    \end{itemize}
    
    \item In this step, we modify $A^2$ to a legal allocation $A^3$. The modification is done by moving fractions from items with surpluses (the fractions moved from these items are the fractions contributed by $A^{2f}$, so that the respective item remains allocated integrally according to $A^{2I}$) to items with deficits in an arbitrary way, so as to eliminate all surpluses and deficits. 
    For $E1$, this modification step is empty, because $A^2$ was legal, and then $A^3 = A^2$. For $E2$ we need to move the surplus of $0 + \frac{1}{8} + \frac{1}{2} = \frac{5}{8}$ from $e_3$ to cover the deficits of $\{e_4, \ldots, e_8\}$. For concreteness, we choose to move the fraction associated with agent~2 to item $e_4$, and to partition the fraction associated with agent~3 among the items $\{e_5, \ldots, e_8\}$. Hence allocation $A^3$ is as follows. Items $\{e_1, e_2, e_3\}$ are allocated integrally to agents~1,~2 and~3. Item $e_4$ is allocated in ratios $(\frac{1}{2}, \frac{1}{2})$ to agents~2 and~3. Each of items $\{e_5, \ldots, e_8\}$ is allocated in ratios $(\frac{3}{8}, \frac{5}{8})$ to agents~2 and~3. Each of the remaining items is allocated in ratios $(\frac{1}{8}, \frac{3}{8}, \frac{1}{2})$ to the three agents.
    
    \item In this step we modify the legal fractional allocation $A^3$ to an illegal fractional allocation $A^4$. Crucially, $A^4$ will be illegal only in the sense that items $e_j$ with $j > n$ may have a surplus, but no item will have a deficit, a fact that will be proved in Lemma~\ref{lem:noDeficit}. The modification is as follows. The allocation of the first $n$ items does not change (recall that they are allocated integrally according to $A^{2I}$). As to the remaining items $e_j$ (with $j > n$), for every agent $i$, her fractional allocation is scaled by a multiplicative factor. This multiplicative factor is a function of $B_i$ (recall the definition of $B_i$ from the preliminary step), and this function will be denoted by $s(x)$. For the purpose of illustrating this step on our examples $E_1$ and $E_2$, we shall tentatively use the function $s(x) = \frac{1}{2} + x$. Hence for $j > n$ we have $A^4_{ij} = (\frac{1}{2} + B_i)A^3_{ij}$. We alert the reader that the actual function $s(x)$ used in constructing our fractional solution is a function different than $\frac{1}{2} + x$, and will be described later. (The choice $s(x) = \frac{1}{2} + x$ can be shown to imply an approximation ratio of $\rho \le \frac{57}{32}$, and our later choice of $s(x)$ is designed so as to give an even lower value of $\rho$.)
    
    For $E1$, for every $j > n$ and agent $i$, item $e_j$ is allocated in $A^4$ to an extent of $(\frac{1}{2} + \frac{i}{n})\frac{1}{n}$ to agent $i$. Hence the item is allocated to an extent of $\sum_{i=1}^n (\frac{1}{2} + \frac{i}{n})\frac{1}{n} = \frac{1}{2} + \frac{1}{n^2} \cdot \frac{n(n+1)}{2} = 1 + \frac{1}{2n}$, and has a surplus of $\frac{1}{2n}$ (that tends to~0 as $n$ grows). 
    
    % \xin{Check with different increasing function to reach $\frac{7}{4}$.}
    
    For $E2$ the scaling factors for the agents are $\frac{1}{2} + \frac{1}{8} = \frac{5}{8}$, $\frac{1}{2} + \frac{1}{2} = 1$, and $\frac{1}{2} + 1 = \frac{3}{2}$, respectively. Hence item $e_4$ is allocated in fractions $(\frac{1}{2}, \frac{3}{4})$ to agents~2 and~3, each of items $\{e_5, e_6, e_7, e_8\}$ is allocated in fractions $(\frac{3}{8}, \frac{15}{16})$ to agents~2 and~3, and every item $e_j$ with $j > 8$ is allocated in ratios  $(\frac{5}{64}, \frac{3}{8}, \frac{3}{4})$. Observe that indeed every item $e_j$ with $j > n = 3$ has a surplus.

    \item In this step we modify the illegal fractional allocation $A^4$ to a legal fractional allocation $A^5$, that will serve as our final fractional allocation $A^f(\bar{b})$. For every item that has a surplus, reduce in an arbitrary way the fractional allocations of that item until there is no surplus. For $E1$, this can be done by reducing the fractional allocation of agent~$n$ from $\frac{3}{2n}$ to $\frac{1}{n}$, for every item $e_j$ with $j > n$. For $E2$, this can be done by allocating every item $e_j$ with $j > n = 3$ in ratios $(\frac{3}{8}, \frac{5}{8})$ to agents~2 and~3 (and then agent~1 gets only item $e_1$). 
    
\end{enumerate}

\begin{lemma}
\label{lem:noDeficit}
Suppose that the function $s(x)$ is monotone non-decreasing for $0 \le x \le 1$, and moreover, that $\int_0^1 s(x) \; dx \ge 1$. Then in the procedure described above, the final fractional allocation $A^5$ is a legal allocation (every item is allocated exactly once).
\end{lemma}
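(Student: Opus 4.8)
\emph{Plan.} The entire content of the lemma is that $A^4$ has no deficit: items $e_1,\dots,e_n$ are allocated integrally (mass exactly $1$) in $A^4$ and are untouched by Step~5, and Step~5 only \emph{reduces} surpluses, so once we know that every item $e_j$ with $j>n$ already carries mass at least $1$ in $A^4$, the reduction yields a legal $A^5$. Thus it suffices to prove $\sum_i s(B_i)\,A^3_{ij}\ge 1$ for every $j>n$. I would prove this by comparing the distribution $\{A^3_{ij}\}_i$ of such an item against the proportional distribution $\{b_i\}_i$. The first task is to localize the surpluses and deficits created in Step~2. Writing $r_{ij}$ for the mass agent $i$ removes from $e_j$ in forming $A^{2f}$ from the proportional allocation, one has $r_{ij}\le b_i$ for all $i,j$, hence $\sum_i r_{ij}\le\sum_i b_i=1$. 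Therefore for $j\le n$ the total mass of $e_j$ in $A^2$ is $2-\sum_i r_{ij}\ge 1$ (surplus or exact), while for $j>n$ it is $1-\sum_i r_{ij}\le 1$ (deficit or exact). So all surpluses lie on items with index $\le n$ and all deficits on items with index $>n$, and the redistribution of Step~3 moves mass \emph{only onto} items $e_j$ with $j>n$.

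Next I would split the agents. An agent can contribute to the surplus of some $e_{j'}$ with $j'\le n$ only if she retained a positive fraction of $e_{j'}$ in $A^{2f}$, which forces $\lceil 1/b_i\rceil\le n$; call such agents \emph{heavy} and the rest \emph{light}. Since the agents are sorted by increasing entitlement, the light agents form a prefix and the heavy agents a suffix. Reading off $A^3$ on an item $e_j$ with $j>n$: a heavy agent keeps her full share $b_i$ of $e_j$ (because $\lceil 1/b_i\rceil\le n<j$) and may additionally receive moved mass, so $A^3_{ij}\ge b_i$; a light agent receives no moved mass (only heavy agents donate) and retains at most $b_i$, so $A^3_{ij}\le b_i$. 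Because $\{i:\lceil 1/b_i\rceil\ge j\}$ is a prefix and every agent whose share has \emph{increased} is heavy, there is a threshold $L_j$ with $A^3_{ij}\le b_i$ for $i\le L_j$ and $A^3_{ij}\ge b_i$ for $i>L_j$, while $\sum_i A^3_{ij}=\sum_i b_i=1$. In words: relative to the proportional distribution, $A^3$ has transported mass from low-index (small $B_i$) agents to high-index (large $B_i$) agents, across a single threshold.

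The proof then closes with two monotonicity facts. Since $s$ is non-decreasing and $B_i$ is non-decreasing in $i$, the weights $s(B_i)$ are non-decreasing, and moving mass upward across one threshold can only increase a sum weighted by non-decreasing weights; a one-line exchange argument using $\sum_i(A^3_{ij}-b_i)=0$ then gives $\sum_i s(B_i)\,A^3_{ij}\ge\sum_i s(B_i)\,b_i$. On the other hand, $0=B_0<B_1<\dots<B_n=1$ partitions $[0,1]$ with $b_i=B_i-B_{i-1}$, so $\sum_i s(B_i)\,b_i$ is the right-endpoint Riemann sum of $s$, which for a non-decreasing $s$ dominates $\int_0^1 s(x)\,dx\ge 1$. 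Chaining the two inequalities yields $\sum_i s(B_i)\,A^3_{ij}\ge 1$ for every $j>n$, so $A^4$ has no deficit and $A^5$ is legal.

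I expect the main obstacle to be the structural claim of the second paragraph, established for an \emph{arbitrary} Step~3 redistribution, that the passage from the proportional distribution to $A^3_{\cdot j}$ transports mass only ``upward'' across a single index threshold. This is exactly what prevents an adversarial redistribution from loading a $j>n$ item onto low-$s(B_i)$ agents, and it rests on the two localization observations above: surpluses live only on items of index $\le n$, and only heavy (high-$b_i$, suffix) agents can donate moved mass. Once this upward-shift structure is in hand, the two closing inequalities are routine.
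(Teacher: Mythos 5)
Your proposal is correct and follows essentially the same route as the paper's proof: both rest on the observation that Step~3 only moves mass toward higher-entitlement (suffix) agents on items of index greater than $n$, so that $A^3$ dominates the proportional allocation there, and both then conclude via monotonicity of $s$ together with the histogram/right-endpoint Riemann-sum bound $\sum_i s(B_i)\,b_i \ge \int_0^1 s(x)\,dx \ge 1$. The only difference is organizational: you treat all items $e_j$ with $j>n$ uniformly through the single-threshold exchange argument (and spell out the surplus/deficit localization that the paper only asserts parenthetically), whereas the paper splits these items into the exactly-proportional case $j > \lceil 1/b_1 \rceil$ and the intermediate case $n < j \le \lceil 1/b_1 \rceil$.
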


\begin{proof}
Inspection of the procedure above shows that the only nontrivial part of the lemma is to prove that in allocation $A^4$, no item has a deficit. For this, we partition the items into three classes.

\begin{enumerate}

    \item Items $\{e_1, \ldots, e_n\}$. In $A^4$, each of these items is allocated integrally as in $A^{2I}$, and hence has no deficit.

    \item Items $e_j$ with $j > \lceil \frac{1}{b_1} \rceil$. For these items, $A^3$ allocated them in a proportional way ($b_i$ fraction to agent $i$). Hence the fractions allocated in $A_4$ are 
    
    $$\sum_{i=1}^n s(B_i) \cdot b_i = \sum_{i=1}^n s(\sum_{j=1}^i b_j) \cdot b_i \ge \int_0^1 s(x) \; dx \ge 1$$
    
    %$$\sum_{i=1}^n (\frac{1}{2} + B_i)b_i = \frac{1}{2} + \sum_{i=1}^n (\sum_{j=1}^i b_j) \cdot b_i > \frac{1}{2} + \int_0^1 x \; dx = 1$$ 
    %\xin{When $f(x)$ is monotone increasing, $$\sum_{i=1}^n f(B_i)b_i >  \int_0^1 f(x) \; dx $$  So we need two conditions to make $\sum_{i=1}^n f(B_i)b_i\ge 1$. 1) $f(x)$ is non-decreasing. 2) $\int_0^1 f(x) \; dx\ge 1$.}
    
    To verify that the first inequality above, observe that $\sum_{i=1}^n s(\sum_{j=1}^i b_j) \cdot b_i$ is the area under a histogram with bars of width $b_i$ and height $s(\sum_{j=1}^i b_j)$, with total width $\sum_1^n b_i = 1$. As $s(x)$ is monotone non-decreasing, the height of the histogram at every point $x$ is at least $s(x)$.

    \item Items $e_k$ with $n < k \le \lceil \frac{1}{b_1} \rceil$. For each of these items, the allocation in $A^3$ is no longer proportional (in particular, agent~1 does not get a $b_1$ fraction of these items), but it dominates a proportional allocation in the sense that for every suffix of agents, the total fraction of the item allocated to these agents  under $A^3$ is at least as large as it is under the proportional allocation. (This is a consequence of the fact that in the process of creating $A^3$ from $A^2$, a surplus that covers a deficit always involves a surplus of a later agent covering a deficit of an earlier agent.) As later agents are scaled by larger fractions than earlier agents, it follows that the extent to which item $e_k$ is covered in $A^4$ is at least as large as items $e_j$ with $j > \lceil \frac{1}{b_1} \rceil$, 
    %(which in $A^3$ are allocated in a proportional way), 
    implying that $e_k$ has no deficit.
\end{enumerate}
\end{proof}

We next show that $A^5$ is an $(\bar{\alpha}, \bar{\beta})$ fractional allocation with $\max_i[\alpha_i + \beta_i] \le \rho$, for some $\rho < 2$. But first, we give some intuition regarding why such a statement may be true.

For the fractional allocation $A^1$, every agent $i$ receives a fraction $b_i$ of item $e_1$. Hence $\alpha_i = 1$.  As $1 < b_i < 1$, we have that $f_i = 1$. If $v_i(e_1) = 1$, then $\beta_i = 1$ as well, and then $\alpha_i + \beta_i = 2$, which is too large.

Allocation $A^2$ has the property that $f_i > \lfloor \frac{1}{b_i} \rfloor$ for every agent $i$. Consequently, by the properties of the chore share we have that $v_i(e_{f_i}) \le \frac{1}{2}$, and consequently $\beta_i \le \frac{1}{2}$. However, the total fractional value of allocation $A^2$ to agent $i$ might be larger than that of $A^1$, and consequently it might be that $\alpha_i > 1$. Hence, it is not clear whether $\alpha_i + \beta_i \le \rho < 2$ for all agents. The value of $\alpha_i$ with respect to $A^2$ is one aspect that we shall need to address in our proof. Here we make a qualitative observation, which is that the difficulties with $\alpha_i$ only arise for small values of $i$, but not for large values of $i$. The extremes are $\alpha_1$ which might be as large as $2 - b_1$ (if $v_1(e_1) = 1$ and all remaining items are equally valuable, with their number tends to infinity), which approaches~2 for small $b_1$, and $\alpha_n$ which cannot be larger than~1 (because for agent $n$, all items for which fractional values were decreased in $A^{2f}$ are at least as valuable than $e_n$).

Allocation $A^3$ causes no difficulties, because agents only replace fractions of items by fractions of less valuable items. Hence neither the value of $\alpha_i$ not the value of $\beta_i$ increases compared to their value in $A^2$.

If $s(x)$ is monotone non-decreasing with $\int_0^1 s(x) = 1$, then allocation $A^4$ modifies $A^3$ by decreasing the fractional value for the early agents (those are the agents that might enter this stage with large $\alpha_i$ and hence need it to be decreased) and increasing the fractional value for the late agents (those are the agents that enter this stage with small $\alpha_i$ and can afford to have it increased). This is done in such a way that no item suffers a deficit (as is proved in Lemma~\ref{lem:noDeficit}). In our proof we show that for every agent $i$, the combined effect of steps~2 and~4 is good for the agent. That is, all agents improve $\beta_i$ in step~2. The early agents suffer an increase of $\alpha_i$ in step~2, but they are compensated for it in step~4 to a sufficient extent, leading to $\alpha_i + \beta_i \le \rho$. The late agents are very happy after step~2 (the inequality $\alpha_i + \beta_i \le \rho$ is satisfied with slackness), and the loss that they suffer in step~4 is not larger than the slackness.

\begin{lemma}
\label{lem:alphabeta}
Suppose that $s(x)$ is monotone non-decreasing with $s(1) \le 2$, and $s(x)$ satisfies both $(1 - x)s(x) \le 1$ and $1 + s(x) - x s(x) \le s(1)$, for every $0 \le x \le 1$. Then in allocation $A^5$ the inequality $\alpha_i + \beta_i \le 1 + \frac{s(1)}{2}$ holds for every agent $i$, for every additive valuation function over indivisible chores.
\end{lemma}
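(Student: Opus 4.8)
The plan is to fix an agent $i$, note that $A^5$ depends only on the entitlement vector (and not on $c_i$), and bound $\alpha_i+\beta_i$ by maximising over all admissible disvaluations. Writing $x_j:=c_i(e_j)$, admissibility means $x_1\ge x_2\ge\cdots\ge 0$, $x_1\le 1$, $x_j\le\tfrac12$ for $j>\lfloor 1/b_i\rfloor$, and $\sum_j x_j=1/b_i$ (the normalisation $CS_i=b_i c_i(\items)=1$ is the worst case, since extra total disvalue only helps the adversary). First I would follow how $(\alpha_i,\beta_i)$ evolve through the five construction steps. Step~2 replaces the proportional allocation by $A^{2I}+A^{2f}$, so agent $i$'s total value becomes $x_i+(1-G_i)$, where the \emph{give-up value} $G_i=b_i\sum_{j\le \lfloor 1/b_i\rfloor}x_j+(1-b_i\lfloor 1/b_i\rfloor)\,x_{\lceil 1/b_i\rceil}$ is the disvalue of the highest chores of total mass $1$. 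Steps~3 and~5 never raise any agent's value (they replace chore-fractions by fractions of less valuable chores, and then shave surpluses), while step~4 scales the fractional part on the chores $e_j$ with $j>n$ by the factor $s(B_i)$. Hence $\alpha_i\le x_i+s(B_i)\,(1-G_i)$, and it remains to control $\beta_i$.

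The $\beta$ side rests on the invariant that, from $A^2$ onward, every fractional chore held by agent $i$ has index strictly larger than $\lfloor 1/b_i\rfloor$; this gives $\beta_i=x_{f_i}\le\tfrac12$ and $f_i\ge\lceil 1/b_i\rceil$. For $A^2$ it is immediate, since agent $i$ relinquishes precisely her fractions of $e_1,\dots,e_{\lceil 1/b_i\rceil}$, so $f_i=\lfloor 1/b_i\rfloor+1$. I would then verify that step~3 preserves it: by the structural fact already used in Lemma~\ref{lem:noDeficit} (a surplus of a later agent always covers a deficit of an earlier agent), any fraction that step~3 moves to agent $i$ lands on a chore whose index exceeds one on which $i$ already held a post-give-up fraction, hence exceeds $\lfloor 1/b_i\rfloor$; steps~4 and~5 only scale or shave fractions, which can only push $f_i$ later. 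With the two bounds in hand, the whole statement reduces to proving $x_i+s(B_i)(1-G_i)+x_{f_i}\le 1+\tfrac{s(1)}2$ for every admissible profile.

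For that inequality I would argue that the maximiser is a two-step profile and split on the position of $e_i$. If $i>\lfloor 1/b_i\rfloor$ (a \emph{late} agent) then $x_i\le\tfrac12$, $x_{f_i}\le\tfrac12$, and $G_i\ge\max(x_i,x_{f_i})$, so $x_i+x_{f_i}+s(B_i)(1-G_i)\le x_i+x_{f_i}+s(B_i)\bigl(1-\max(x_i,x_{f_i})\bigr)\le 1+\tfrac{s(B_i)}2\le 1+\tfrac{s(1)}2$, which uses only $s(1)\le2$ and monotonicity. If $i\le\lfloor 1/b_i\rfloor$ (an \emph{early} agent) I would set $h=x_i\le1$, $q=x_{f_i}\le\tfrac12$, $\theta=i\,b_i\le1$, note that the smallest value of $G_i$ compatible with these is $G_i=\theta h+(1-\theta)q$, and thereby reduce the claim to a two-variable linear program, $\max\bigl[h(1-s(B_i)\theta)+q(1-s(B_i)(1-\theta))\bigr]+s(B_i)$ over $\{0\le q\le\min(h,\tfrac12),\ h\le1\}$ constrained by $\sum_j x_j=1/b_i$. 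The decisive observation is $B_i=\sum_{j\le i}b_j\le i\,b_i=\theta$ (earlier entitlements are no larger), so at the dominant vertex $(h,q)=(1,\tfrac12)$, whose value is $\tfrac32+\tfrac12 s(B_i)(1-\theta)\le\tfrac32+\tfrac12 s(B_i)(1-B_i)$, the hypothesis $1+s(B_i)(1-B_i)\le s(1)$ delivers exactly $\le 1+\tfrac{s(1)}2$; the remaining feasible vertices, which occur when the budget $\sum_j x_j=1/b_i$ pins the lower step below $\tfrac12$ (the near-integer $1/b_i$ regime), are handled by the hypothesis $(1-B_i)s(B_i)\le1$.

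The main obstacle I anticipate is precisely this early-agent program, and specifically the coupling imposed by the budget constraint $\sum_j x_j=1/b_i$ together with the non-integrality of $1/b_i$: this constraint rules out some seemingly-worst configurations (for instance it forbids $x_i=1$ and $x_{f_i}=\tfrac12$ simultaneously once $\{1/b_i\}<\tfrac12$), so a careful enumeration of the genuinely feasible extreme profiles is needed, and it is there that the first hypothesis on $s$ is consumed while the second settles the dominant vertex. A secondary point to pin down rigorously is the index invariant $f_i>\lfloor 1/b_i\rfloor$ across steps~3--5, since both the late-agent case and the constraint $q\le\tfrac12$ in the early-agent case rest on it.
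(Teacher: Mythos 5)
Your proposal is correct and follows essentially the same route as the paper's own proof: the same bound $\alpha_i \le x_i + s(B_i)(1-G_i)$ with the give-up value lower-bounded by $i b_i x_i + (1-i b_i)x_{f_i}$, the same invariant $f_i > \max(n,\lfloor 1/b_i\rfloor)$ yielding $\beta_i \le \frac{1}{2}$, the same early/late case split exploiting $B_i \le i b_i$, and the same linear-program vertex analysis in which the hypothesis $1+s(x)-xs(x)\le s(1)$ settles the vertex $(h,q)=(1,\frac{1}{2})$. The one remark worth making is that your anticipated main obstacle, the coupling from the budget constraint $\sum_j x_j = 1/b_i$, is a non-issue: exactly as in the paper, one may simply drop that constraint, since the relaxed maximum over the vertices $(0,0)$, $(1,0)$, $(1,\frac{1}{2})$, $(\frac{1}{2},\frac{1}{2})$ is already at most $1+\frac{s(1)}{2}$ under the stated hypotheses on $s$ (with $s(1)\le 2$ and monotonicity handling the vertices your sketch assigns to the first hypothesis).
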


\begin{proof}
Let $k = \max[n, \; \lfloor \frac{1}{b_i} \rfloor] + 1$. Then with respect to $A^5$, $f_i \ge k$. Namely the first item that $A^5$ allocates to agent $i$ in a strictly fractional manner (neither~0 nor~1) comes no earlier than $e_k$. Consequently $\beta_i \le v_i(e_k)$. Recall that $v_i(e_k) \le \frac{1}{2}$ (because $k \ge \; \lfloor \frac{1}{b_i} \rfloor + 1$). 

Consider first the case that $i \ge \lceil \frac{1}{b_i} \rceil$. (This necessarily holds for $i = n$, and maybe also for some smaller $i$.) In this case, the value of $A^{2f}$ to agent $i$ is no larger than $1 - v_i(e_i)$ (because every item whose fractional value was reduced in $A^{2f}$ comes no later than $e_i$). Consequently, the fractional value of $A^4$ is no larger than $v_i(e_i) + s(B_i)(1 - v_i(e_i)) \le v_i(e_i) + s(1)(1 - v_i(e_i))$. As $\beta_i \le v_i(e_k)$ we have that $\alpha_i + \beta_i \le s(1) + (1 - s(1))v_i(e_i) + v_i(e_k) \le s(1) + (2 - s(1))v_i(e_k)$ (the last inequality holds because $s(1) \ge 1$, and because $k > i$ implies that $v_i(e_k) \le v_i(e_i)$). As $v_i(e_k) \le \frac{1}{2}$ we have that $\alpha_i + \beta_i \le 1 + \frac{s(1)}{2}$. 

Consider now the case that $i < \lceil \frac{1}{b_i} \rceil$. (This necessarily holds for $i = 1$, and maybe also for some larger $i$.) In this case, the value of $A^{2f}$ to agent $i$ is no larger than $1 - i \cdot b_i \cdot v_i(e_i) - (\frac{1}{b_i} - i) \cdot b_i \cdot v_i(e_k) = 1 - v_i(e_k) - ib_i(v_i(e_i) - v_i(e_k))$. Consequently, the fractional value of $A^4$ is no larger than $v_i(e_i) + s(B_i)(1 - v_i(e_k) - ib_i(v_i(e_i) - v_i(e_k)))$. As $B_i \le ib_i$ and $v_i(e_i) \ge v_i(e_k)$ we have that $\alpha_i \le v_i(e_i) + s(B_i)(1 - v_i(e_k) - B_i(v_i(e_i) - v_i(e_k)))$. Using $\beta_i \le v_i(e_k)$ and rearranging we have:

$$\alpha_i + \beta_i \le s(B_i) + (1 - B_i s(B_i)) v_i(e_i) + (1 - s(B_i) + B_i s(B_i)) v_i(e_k)$$

As $(1 - x)s(x) \le 1$ for every $0 \le x \le 1$, the worst case is when $v_i(e_k)$ is as large as possible (and recall that we have the constraint $v_i(e_k) \le \frac{1}{2}$). If $1 \ge B_i s(B_i)$ then the worst case requires also $v_i(e_i)$ to be as large as possible (and recall that we have the constraint $v_i(e_i) \le 1$). We get that $\alpha_i + \beta_i \le s(B_i) + (1 - B_i s(B_i)) + \frac{1}{2} (1 - s(B_i) + B_i s(B_i)) = 1 + \frac{1}{2}(1 + s(B_i) - B_i s(B_i)) \le 1 + \frac{s(1)}{2}$ (using our assumption that $1 + s(x) - x s(x) \le s(1)$ for all $0 \le x \le 1$). If $1 < B_i s(B_i)$ then the worst case requires  $v_i(e_i)$ to be as small as possible, but we have the constraint $v_i(e_i) \ge v_i(e_k)$. Setting equality in that constraint, we get that $\alpha_i + \beta_i \le s(B_i) + (2 - s(B_i)) v_i(e_k)$. This is maximized (subject to the constraints $v_i(e_k) \le \frac{1}{2}$ and $B_i \le 1$, and the assumption that $s(1) \le 2$) when $v_i(e_k) = \frac{1}{2}$ and $B_i = 1$, giving $\alpha_i + \beta_i \le 1 + \frac{s(1)}{2}$.
\end{proof}

Given the conditions imposed on $s(x)$ in Lemmas~\ref{lem:noDeficit} and~\ref{lem:alphabeta}, we choose $s(x)$ to be of the following form, with $1 \le t \le 2$ to be chosen later:

\begin{equation*}
 s(x) =
    \begin{cases}
      \frac{t-1}{1-x} & 0\le x\le \frac{1}{t}\\
      t &  \text{for } x>\frac{1}{t}\\
    \end{cases}       
\end{equation*}

One may easily see that $s(x)$ is monotone non-decreasing in the range $0 \le x \le 1$, that $f(1) \le 2$ (for $t \le 2$). Lemma~\ref{lem:alphabeta} requires that $(1 - x)s(x) \le 1$, and this holds because $(1 - x)s(x) \le t - 1 \le 1$.  Lemma~\ref{lem:alphabeta} further requires that $1 + s(x) - x s(x) \le s(1)$. For $x \le \frac{1}{t}$ this translates to $1 + t - 1 \le t$ which holds with equality, and for $x > \frac{1}{t}$ this translates to $1 + t - xt \le t$ which holds with strict inequality.  

Lemma~\ref{lem:noDeficit} requires that $\int_0^1 s(x) \; dx \ge 1$, or equivalently, $(1-t)\ln(1-\frac{1}{t})+t-1 \ge 1$. Solving the inequality numerically we get that it holds when $t\ge 1.466$. As Lemma~\ref{lem:alphabeta} shows an approximation ratio is $1 + \frac{s(1)}{2} = 1+t/2$, we shall choose $t$ as small as possible, namely, $t = 1.466$.

We can now prove Theorem~\ref{thm:arbitrary}.

%\begin{theorem}
%\label{thm:arbitrary}
%When allocating indivisible chores to agents with additive valuations and arbitrary entitlements, for every vector of entitlements there is a picking sequence (that can be computed in polynomial time) in which every agent (that follows the greedy picking strategy) gets a bundle of value at most $1.733$ times her chore share (and hence of value at most $1.733$ times her APS).
%\end{theorem}

\begin{proof}
Choose $s(x)$ as above with $t = 1.466$, and recall that $s(1) = t$. As shown above, this choice of $s(x)$ satisfies all conditions of Lemmas~\ref{lem:noDeficit} and~\ref{lem:alphabeta}. Hence $A^5$ is a legal fractional allocation, and in $A^5$, the inequality $\alpha_i + \beta_i \le 1 + \frac{s(1)}{2} = 1.733$ holds for every agent $i$. By Lemma~\ref{lem:FracToSequence}, $A^5$ can be transformed (in polynomial time) into a picking sequence satisfying the guarantees of the theorem.
\end{proof}

%The proof of Theorem~\ref{thm:arbitrary} has slackness (e.g., step~4 of the algorithm generates a fractional allocation with surplus), so the ratio of $\frac{57}{32}$ proved in that theorem is not best possible. 

{
\section{Picking sequences for equal entitlements}

In~\cite{aziz2022approximate} it was shown that there is a periodic picking sequence for indivisible chores that gives every agent a bundle of disvalue at most $\frac{5}{3}$ times her MMS (with better ratios for small $n$). Here we propose non-periodic picking sequences, prove that they offer a ratio no worse than $\frac{8}{5}$, and provide evidence that their ratio is no worse than $1.543$. Moreover, all our upper bounds on the approximation ratios hold compared to the chore share, and not just compared to the MMS. We complement our positive results by some lower bounds, and they hold compared to the MMS, and not just compared to the chore share.

\begin{theorem}
Consider allocation of indivisible chores to agents with additive valuation function and equal entitlements. For every $n$ and every $m$ there is a (polynomial time computable) picking sequence in which every agent that follows the greedy picking strategy gets a bundle of disvalue at most $\frac{8}{5}$ times her chore share (and thus also at most $\frac{8}{5}$ times her MMS). For sufficiently large $n$ and $m$, for every picking sequence, there is a way of assigning additive valuation functions to the agents so that if agents follow the greedy picking strategy, at least one agent gets a bundle of disvalue at least $\frac{3}{2}$ times her MMS. 
\end{theorem}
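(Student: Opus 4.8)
The plan is to treat the two directions separately, but both rest on the reduction to IDO instances (Lemma~\ref{lem:IDO}) and the bijection between picking sequences and allocations for IDO instances (Corollary~\ref{cor:IDO}), which lets me design and analyze \emph{allocations} of the ordered chores $e_1, \ldots, e_m$ rather than picking sequences directly. Normalizing $CS_i$ to $1$, the facts I would use are $c_i(e_1) \le 1$, $c_i(e_k) \le \frac12$ for $k > n$, and $\frac1n c_i(\items) \le 1$. I would restrict attention to \emph{ridge} orders, in which agent $i$ receives $e_i$ and $e_{2n-i+1}$ among the first $2n$ items and thereafter receives items according to a threshold schedule generated by a period $p_i$: the $t$-th item she gets has index at least the $t$-th threshold in her list. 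The smallest period that keeps her worst-case disvalue (over all IDO-consistent valuations with $CS_i=1$) below a target $r$ is computable, so the whole question reduces to whether the resulting periods $(p_1, \ldots, p_n)$ satisfy the \emph{covering constraints}: for every index $j$, at least $j$ thresholds across all lists are $\le j$, so that every item can actually be placed.

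For the upper bound I would establish $\hat r_n \le \frac85$ in two steps. First, a doubling lemma (Lemma~\ref{lem:doublen}) that, from a feasible ridge schedule for $2n$ agents of ratio $\rho$, produces a ridge schedule for $n$ agents of ratio at most $\rho$, giving $\hat r_n \le \hat r_{2n}$ and hence $\hat r_n \le \hat r_{8n}$ by iterating; the construction takes the $2n$-agent schedule and folds it into an $n$-agent one while controlling each agent's disvalue and preserving covering. Second, a construction (Lemma~\ref{lem:8r}) for $n$ divisible by $8$: group the agents into $8$ equal blocks, treat each block as a single super-agent, hand-design a ridge order for the $8$ super-agents of ratio $\le \frac85$ whose thresholds can be ``unfolded'' back to the original agents without increasing any disvalue or violating covering, and check the finitely many covering inequalities by hand. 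Combining, $\hat r_n \le \hat r_{8n} \le \frac85$ for every $n$, which yields the claimed polynomial-time picking sequence.

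For the lower bound I would show that every picking order suffers ratio at least $\frac32$ against the MMS for large $n, m$. I first dispose of non-ridge orders: a short case analysis on the allocation of the first $2n$ items shows that any deviation from the pairing $e_i \leftrightarrow e_{2n-i+1}$ forces some agent to hold two early items whose combined disvalue is at least $\frac32$ of her MMS on a suitable instance. It then remains to bound ridge orders, where I would invoke the \emph{fractional covering constraint} $\sum_i \frac1{p_i} \ge 1$ (a necessary condition as $m \to \infty$): since attaining a small target ratio forces each period $p_i$ to be large (the agent must receive items only rarely), I would show that for any ratio below about $1.52$ the forced periods make $\sum_i \frac1{p_i} < 1$ once $n$ is large, so no ridge order attains a ratio better than $1.52$ against the chore share. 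Finally, by building the witness instances from many small items so that the MMS converges to the chore share as $m$ grows, the bound $1.52 > \frac32$ against the chore share transfers to a lower bound of $\frac32$ against the MMS.

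The main obstacle is precisely the gap between the fractional covering constraint and the true integer covering constraints: $\sum_i \frac1{p_i} \ge 1$ is necessary but not sufficient, and there is no clean characterization of the period vectors that schedule feasibly under a ridge. This is what blocks a direct optimization over $(p_1, \ldots, p_n)$ and forces the indirect doubling-plus-super-agent route for the upper bound; arranging the super-agent schedule in Lemma~\ref{lem:8r} to unfold back to the individual agents while simultaneously respecting every covering inequality and the per-agent disvalue budget is the delicate part. On the lower-bound side the analogous difficulty is milder, since there the fractional constraint is used only as a necessary condition, so it suffices to derive a contradiction from $\sum_i \frac1{p_i} \ge 1$ together with the forced lower bounds on the periods.
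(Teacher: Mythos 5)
Your proposal is correct and follows essentially the same route as the paper: the IDO reduction and the picking-sequence/allocation correspondence, ridge orders governed by periods and covering constraints, the doubling lemma $\hat{r}_n \le \hat{r}_{2n}$ (folded in the correct direction, from a $2n$-agent schedule to an $n$-agent one) combined with the eight-super-agent construction for the $\frac{8}{5}$ upper bound, and the disposal of non-ridge orders plus the fractional covering argument (with the MMS approaching the chore share as $m$ grows) for the $\frac{3}{2}$ lower bound. You also correctly identify the gap between the integer and fractional covering constraints as the central difficulty, which is exactly what drives the paper's indirect doubling-plus-super-agent strategy.
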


Agents are ordered from~1 to~$n$ arbitrarily.
We shall assume that the instance is an IDO instance, an assumption that can be made without loss of generality (see Section~\ref{sec:IDO}). The items are ordered from highest value $e_1$ to lowest value $e_m$. 
%Moreover, we may assume that all agents have the same valuation function, because for picking sequences, these are the instances that give the worst approximation ratio. 
For each $n$, our respective picking sequence will be designed for an infinite number of items, and for every finite number of items $m$, one can take the prefix of length $m$ of our sequence.

\subsection{Notation, conventions, and picking orders}
\label{sec:order}

{We use $v_i$ (rather than $c_i$) to denote the disvaluation function of agent $i$.}
As noted above, we may assume that 
%all agents have the same disvaluation function $c$, and that 
the instance is IDO ($v_i(e_j) \ge v_i(e_k)$ for every agent $i$ and every $j < k$). 
%\xhc{we can make assumption of IDO instance, but I am not sure about the same disvaluation function} \ufc{we can assume that all agents have the same disvaluation function, but I removed this assumption as it is not really used. Consequently, I changed $c$ to $v_i$ in this paragraph} \xhc{It looks good now.} 
Recall that the chore share is $\max[v_i(e_1), v_i(e_n) + v_i(e_{n+1}), \frac{1}{n}\sum_{j= 1}^m v_i(e_j)]$. Scaling $v_i$ so that the chore share is~1, we have that $v_i(e_1) \le 1$, $v_i(e_{n+1}) \le \frac{1}{2}$, and $\sum_{j\le m} v_i(e_j) \le n$. We show that under such conditions our picking sequences never give an agent a bundle of value more than $\rho$, implying that $\rho$ is (an upper bound on) the approximation ratio of the allocation compared to the chore share. The approximation ratio compared to the MMS is at least as good, as the MMS is at least as large as the chore share.

We shall use the following notation.  For $n$ agents and $m$ items and a picking sequence $S$,  $\rho_{n,m}(S)$ denotes the approximation ratio of $S$ compared to the MMS. When comparing to the chore share, we shall use the notation $r_{n,m}$ instead. Observe that $\rho_{n,m}(S) \le r_{n,m}(S)$. We use $\rho_{n,m}$ to denote $\min_S[\rho_{n,m}(S)]$, the approximation ratio of the best picking sequence. We note that for every fixed $n$, both sequences $r_{n,m}$ and $\rho_{n,m}$ are nondecreasing as a function of $m$, as we can always pad a set of $m$ items by additional items of value~0. Hence $sup_m[\rho_{n,m}] = \lim_{n \rightarrow \infty} \rho_{n,m}$ (and similarly for $r_{n,m}$). We use $\rho_n$ to denote $sup_m[\rho_{n,m}]$ and $r_n$ to denote $sup_m[r_{n,m}]$.

{In describing our picking sequences, we shall use the correspondence between picking sequences and allocations for IDO instances, as described in Corollary~\ref{cor:IDO}. For convenience, we shall describe these allocations as if they are the result of what we shall refer to as {\em picking orders}. In contrast to picking sequences, in picking orders we assume that agents pick in their turns the worst possible chores (instead of best possible chores). The picking sequences that correspond to these allocations under Corollary~\ref{cor:IDO} are not the picking orders, but rather the reverse of these picking orders (e.g., the agent who picks in round~1 in the picking order picks in round~$m$ in the picking sequence).} 

\subsection{Ridge picking orders}
\label{sec:ridge}

%This asymptotic approximation ratio also holds simultaneously for all finite $m$ (, though for small finite $m$ ridge picking sequences may give approximation ratios that are somewhat better than those that we present for infinite $m$. 

{Our allocations can be described by} picking orders that start with what we shall refer to as a {\em ridge}. Namely, the first $n$ items are given to agents~1 to~$n$ (in increasing order), and the next $n$ items are given to agents $n$ to~1 (in decreasing order). In other words, every agent $i$ gets items $e_i$ and $e_{2n-i+1}$. We refer to picking orders that start with a ridge as {\em ridge picking orders}. When restricting attention to ridge picking orders, we use notation of $\hat{\rho}$ and $\hat{r}$ instead of $\rho$ and $r$. Note that $\hat{\rho}_n \ge \rho_n$ and $\hat{r}_n \ge r_n$.

We shall show how, given a value of $n$, one can design the ridge picking order $S$ for which $\hat{r}_{n}(S) = \hat{r}_{n}$. Namely, this $S$ is the ridge picking order with best approximation ratio compared to the chore share. For fixed small $n$, this will allow us to determine $\hat{r}_n$ exactly, whereas for large $n$, we shall obtain upper bounds on $\hat{r}_n$. Of course, these upper bounds hold also for $r_n$ and for $\rho_n$.

Our approach also allows to determine $\hat{r}_{n,m}$ (for finite $m$), but our presentation will focus on the more difficult case of $\hat{r}_n$ in which $m$ is not bounded, leading to ridge picking orders of infinite length. 

%For every value of $n$, we design a picking order that is optimal among all ridge picking order, where optimality means here that its asymptotic approximation ratio compared to the chore share is best possible. 

In our ridge picking orders, every agent $i$ is associated with a period $p_i$ for picking items. This period need not be an integer. The period $p_i$ determines a sequence of thresholds on the rounds in which an agent is allowed to pick items. That is, agent $i$ is not allowed to pick her $t$th item at a round earlier than the $t$th threshold in her sequence of thresholds. The sequence of thresholds for the agents are listed below. For this purpose, the agents are partitioned into three classes. 

\begin{itemize}
    \item Class~0 contains agents of intermediate values of $i$, and for them the period starts immediately. 

$$\lceil p_i \rceil, \; \lceil 2p_i \rceil, \; \lceil 3p_i \rceil, \ldots$$

Importantly, two {\em ridge constraints} will hold for every agent $i$ in class~0. The constraints are $i \ge \lceil p_i \rceil$ and $2n-i+1 \ge \lceil 2p_i \rceil$ (as $i$ is an integer, we may remove the ceiling notation), indicating that the ridge does not violate the sequence of thresholds.

\item Class~1 contains agents of small values of $i$, and for them the period starts after they get their first item.

$$i, \; i+\lceil p_i \rceil, \; i+\lceil 2p_i \rceil, \; i+\lceil 3p_i \rceil, \ldots$$

For every agent $i$ in class~1 one ridge constraint is required to hold. The constraint is $2n-i+1 \ge i + \lceil p_i \rceil$, indicating that the ridge does not violate the sequence of thresholds.
    
    \item Class~2 contains agents of high values of $i$, and for them the period starts after they get their second item.

$$i, \; 2n-i+1, \; 2n-i+1+\lceil p_i  \rceil, \; 2n-i+1+\lceil 2p_i  \rceil, \; 2n-i+1+\lceil 3p_i \rceil, \ldots$$

\end{itemize}

%For our choice of $p_i$, it will happen that some items appear in the lists of several agents, whereas other items appear in no list. Starting for the earliest item that is in multiple lists, the item remains in the list of the agent with lowest index, and in the lists of the  other agents the item is replaced by the next item (the index of the item is increased by~1). 
Our choice of periods $p_i$ will need to ensure that that all items can be  allocated, without violating any of the thresholds. Equivalently, it will need to ensure that for every $j$, the number of thresholds of value at most $j$ in all lists is at least $j$. We refer to this as the {\em covering constraints}. Observe that for $j \le 2n$ the respective covering constraint holds (due to the ridge constraints). We will need to ensure that the covering constraints holds also for $j > 2n$. A necessary requirement for the covering constraint to hold (when $m$ tends to infinity) is that $\sum_{i=1}^n \frac{1}{p_i} \ge 1$. We refer to this as the {\em fractional covering constraint}.

%Given the covering constraint we wish to minimize $\rho$ such that no agent gets a bundle of value more than $\rho$ times her MMS. 

Recall that $v_i$ denotes the disvaluation function of agent $i$, and that her chore share is assumed to be~1. 
In addition to the covering constraints, we have for each agent $i$ a constraint $F_i$, specifying that the agent got a bundle of cost at most $\rho$.  Define $x_i = v_i(e_i)$, $y_i = v_i(e_{2n-i+1})$. Note that necessarily $x_i \le 1$ and $y_i \le \frac{1}{2}$, and that $x_i \ge y_i$. We also define $a_i = \sum_{j=1}^i v_i(e_j)$ (the disvalue of the prefix), $b_i = \sum_{j=i+1}^{2n-i+1} v_i(e_j)$ (the disvalue of the middle portion) and $c_i = \sum_{j=2n-i+2}^m v_i(e_j)$ (the disvalue of the suffix). Then %\ufe{the constraint $F_i$ says that} 
$a_i + b_i + c_i \le n$.
%and $a_i \le i$, and it can also be shown that necessarily $b_i \le n-i+\frac{1}{2}$ \uri{needs a proof}. Observe that $x_i \le \frac{a_i}{i} \le 1$ and that $y_i \le \frac{b_i}{2n - 2i + 1} \le \frac{1}{2}$. Additionally, we may infer that if $x_i > \frac{2}{3}$ then $y_i \le \frac{1}{3}$ \uri{needs a proof}. Of course, we also have $y_i \le x_i$.

The expression for (an upper bound on) the cost of the bundle received by agent $i$ depends on the class of the agent. This affects the form of the constraints $F_i$. For {agent $i$ in} class~0, the constraint {$F_i$} is $\frac{n}{p_i} \le \rho$, for {agent $i$ in} class~1 {the constraint} {$F_i$} is $x_i + \frac{b_i + c_i}{p_i} \le \rho$, and for {agent $i$ in} class~2 {the constraint} {$F_i$}
is $x_i + y_i + \frac{c_i}{p_i} \le \rho$. 
%Hence subject to all constraints involving $x_i,y_i,a_i,b_i,c_i$ listed above, we also have the constraint $x_i + y_i + \frac{c_i}{p_i n} \le \rho$, which we refer to as constraint $F_i$. 
In particular, as $c_i$ might be arbitrarily close to~$n$ (as $a_i$ and $b_i$ can be arbitrarily small, if $m$ is sufficiently large), the constraint $p_i \ge \frac{n}{\rho}$ is implied. In general, the constraint $F_i$ pushes the value of $p_i$ to be large, whereas the covering constraints push the $p_i$ values to be small. We wish to find the smallest value of $\rho$ for which these two contradicting goals are feasible.

To test whether an approximation ratio of $\rho$ is feasible (for a particular number $n$ of agents), we first extract from each constraint $F_i$ a value of $p_i$ that ensures that the constraint $F_i$ holds for the particular value of $\rho$. Thereafter, we check if the covering constraints are satisfied with these $p_i$ values. If they are, then $\rho$ is feasible. Else, $\rho$ is not feasible (for ridge picking orders). In this paper, we shall present the analysis for $m$ tending to infinity. We remark that the same principles can be used for any fixed finite $m$, and the value of $\rho$ would be no larger (and possibly smaller) than the value derived for infinite $m$.

Let us now explain how we determine which agents belong to which class, and how $p_i$ is computed for each class.

\begin{itemize}
    \item Class~0. Here we set $p_i = \frac{n}{\rho}$. Recall that the ridge constraints require that $i \ge  p_i $ and $2n-i+1 \ge 2p_i$. Hence an agent $i$ belongs to class~0 if $ \frac{n}{\rho} \le i \le 2n + 1 -  \frac{2n}{\rho}$.
    
    \item Class~1. Here $i \le  \frac{n}{\rho}$. %\ufc{removed: "Here the worst case is when $x_i = 1$, and $y_i$ is negligible." Is the new text clearer?} 
    {To give some intuition for our choice of $p_i$ in this case, suppose that $x_i = 1$ and that $y_i$ is negligible (we shall soon explain why these are worst case values for $x_i$ and $y_i$, for our choice of $p_i$).} 
    %\xhc{Need some explanation for the worst case. It is not that obvious to see this formula.} 
    Then $F_i$ becomes $1 + \frac{n - i}{p_i} \le \rho$, implying that $p_i \ge \frac{n - i}{\rho - 1}$. We choose the smallest possible $p_i$ (with an eye towards satisfying the covering constraint), namely, $p_i = \frac{n - i}{\rho - 1}$. Note that for this choice of $p_i$ (and $i \le \frac{n}{\rho}$) we have that $p_i \ge i$. This justifies making $x_i$ as large as possible, as from the prefix we select with higher frequency than from the suffix. The ridge constraint for class~1 
    is $p_i  \le (2n - i + 1) - i$. For $p_i = \frac{n - i}{\rho - 1}$ the ridge constraint becomes $\rho \ge \frac{3n-3i+1}{2n-2i+1}$. For large $n$ we shall have $\rho \ge \frac{3}{2}$ and then the ridge constraint holds. (For small $n$ the ridge constraint will be checked by hand.) The ridge constraint justifies making $y_i$ as small as possible, as from the middle portion we select with lower frequency than from the suffix.
    
    \item Class~2. Here $i > 2n + 1 -  \frac{2n}{\rho}$. 
    %\ufc{removed: "Here the worst case is when $x_i = y_i$ and $y_i$ is as large as possible." Is the new text clearer?} 
    {To give some intuition for our choice of $p_i$ in this case, suppose that $x_i = y_i$ and that $y_i$ is as large as possible (we shall soon explain why these are worst case values for $x_i$ and $y_i$, for our choice of $p_i$).} %\xhc{Need some explanation for the worst case. It is not that obvious to see this formular.} 
    The largest possible value for $y_i$ is $\frac{1}{2}$. Then $F_i$ becomes $1 + \frac{i-1}{2p_i} \le \rho$, implying that $p_i \ge \frac{i-1}{2(\rho - 1)}$, {and we choose $p_i \ge \frac{i-1}{2(\rho - 1)}$}. From the combination of the prefix and middle we select at a higher frequency than from the suffix, justifying making $y_i$ as large as possible. {From the prefix we select at a lower frequency than from the suffix, justifying making $x_i$ as large as possible (conditioned on first maximizing $y_i$).}
    
    %\xhc{I can accept the explanation now.}
    
\end{itemize}

}

{
\subsection{A lower bound on $\rho$}\label{subsec-largen}

In this section we prove Theorem~\ref{thm:lowerbound}.

A family of picking orders (one for each value of $n$) referred to as Sesqui Round Robin (SesquiRR) is considered in~\cite{aziz2022approximate}. It achieves $\rho_2(SesquiRR) = \frac{4}{3}$ and $\rho_3(SesquiRR) = \frac{7}{5}$. Both ratios are best possible. For $n=4$ it achieves $\rho_4(SesquiRR) = \frac{3}{2}$. It is claimed in~\cite{aziz2022approximate} {that SesquiRR is not optimal for $n \ge 4$, and} that $1.405 < \rho_4 < 1.499$ (the proof is omitted from that paper). For larger value of $n$, it is proved that  $\rho_n(SesquiRR) \le \frac{5}{3}$. We note that for $n \le 3$ SesquiRR is a ridge picking order, but for $n \ge 4$ it is not. 
%\xhc{Is there a contradiction between their result of $\rho_4$ and Proposition \ref{pro:ridge}. If it is so, should we address this more clearly?} \ufc{We write above that $\rho_4(SesquiRR) = \frac{3}{2}$, so obviously their bound of $\rho_4 \le 1.499$ is for a picking order that is not SesquiRR. If they have no error in their proof, then this other picking order is a ridge picking order. However, their paper does not say what this other picking order is, so I do not know if it is a ridge picking order or not (maybe it is not and they have an error in their proof?). I prefer to just state the bounds that they claim, without discussing them any further.} \xhc{I agree with this. But I think we need to delete the last sentence, which looks like a contradiction.}

\begin{proposition}
\label{pro:ridge}
For every $n \ge 2$ and $m \ge 2n$, if a picking order $S$ is not a ridge picking order, then $\rho_{n,m}(S) \ge \frac{3}{2}$.
\end{proposition}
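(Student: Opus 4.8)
The plan is to exploit only the combinatorial structure of the first $2n$ rounds of $S$; all later rounds will be made irrelevant by assigning disvalue $0$ to items $e_{2n+1}, \ldots, e_m$. Since agents are labelled arbitrarily, I read ``$S$ is a ridge picking order'' in a label-free way: writing $\sigma(\ell)$ for the agent who receives item $e_\ell$ (equivalently, who picks in round $\ell$ of the picking order), $S$ is a ridge iff the set partition of $\{1, \ldots, 2n\}$ induced by $\sigma$ is exactly $\{\{i, 2n+1-i\} : i \in [n]\}$. For a lower bound it suffices to exhibit a single additive disvaluation function, which I will let all agents share, under which some agent receives a bundle of cost at least $\frac32$ times the (common) MMS; recall that for chores the MMS is $\min_{\text{partition into } n}\max_{\text{bundle}}(\text{cost})$. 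I would split the non-ridge assumption into two exhaustive cases: (I) the first $2n$ rounds are not balanced, i.e.\ some agent receives at least three of $e_1, \ldots, e_{2n}$; and (II) every agent receives exactly two of these items but the induced pairing differs from the ridge pairing.

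For case (I) I would take the uniform instance $v(e_\ell) = 1$ for $\ell \le 2n$ and $v(e_\ell) = 0$ otherwise. The balanced partition puts two items per bundle, so $\mathrm{MMS} = 2$, while the distinguished agent collects at least three unit-cost items, giving a bundle of cost at least $3 = \frac32 \cdot 2$. This already yields $\rho_{n,m}(S) \ge \frac32$.

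For case (II) the key observation is an averaging argument on pair sums: the $n$ pairs partition $\{1,\ldots,2n\}$, whose elements sum to $n(2n+1)$, so the pair sums average to $2n+1$; since the pairing is not the ridge pairing (which is the unique pairing with every sum equal to $2n+1$, because $\ell$ is forced to be matched with $2n+1-\ell$), some agent holds a pair $\{a,b\}$ with $a < b$ and $a + b \le 2n$, which in turn forces $a \le n-1$. For this agent I would use the two-valued IDO instance $v(e_\ell) = 1$ for $\ell \le a$, $v(e_\ell) = \frac12$ for $a < \ell \le b$, and $v(e_\ell) = 0$ for $\ell > b$. Being balanced, the agent holds \emph{only} $\{a,b\}$ among the first $2n$ items, so its bundle costs exactly $1 + \frac12 = \frac32$. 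It then remains to verify $\mathrm{MMS} = 1$: place each of the $a$ unit items in its own bundle and pack the $b-a$ half-items two per bundle into the remaining $n-a$ bundles; this fits precisely because $a+b \le 2n$ is equivalent to $b-a \le 2(n-a)$, and $\mathrm{MMS}\ge 1$ since any bundle containing a unit item already costs $1$. Hence the ratio is $\frac32$.

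I expect the main obstacle to be case (II): one must locate the correct two-threshold instance, verify the packing inequality, and use balancedness to ensure the distinguished agent's bundle costs exactly $\frac32$ (not more, which would still help, but the clean equality is what the argument naturally produces). The only other delicate point is the label-free reading of ``ridge,'' which is what makes cases (I) and (II) jointly exhaustive and renders the entire argument invariant under relabelling of the agents.
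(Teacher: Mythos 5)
Your proof is correct, and it rests on the same engine as the paper's: adversarial IDO instances with disvalues in $\{1,\tfrac12,0\}$ that make a structural violation of the ridge expensive. The organization, however, is genuinely different. The paper assumes $\rho_{n,m}(S) < \tfrac32$ and derives three structural constraints sequentially (no agent gets two of the first $n$ items, no agent gets three of the first $2n$, and agent $i$'s second item cannot arrive before position $2n-i+1$), then needs a final inductive forcing step ("only agent $n$ can take $e_{n+1}$, then only agent $n-1$ can take $e_{n+2}$, \dots") to conclude that $S$ is a ridge order. Your balanced/unbalanced dichotomy absorbs that last step entirely: once each agent holds exactly two of the first $2n$ items, the averaging argument on pair sums (total $n(2n+1)$, ridge pairing uniquely achieving all sums equal to $2n+1$) directly produces a pair $\{a,b\}$ with $a+b\le 2n$, and your two-threshold instance with packing condition $b-a \le 2(n-a) \Leftrightarrow a+b \le 2n$ finishes it. What your route buys is a cleaner, label-free, relabeling-invariant argument with two exhaustive cases and no concluding induction; what the paper's route buys is slightly more explicit structural information along the way (it pins down round by round who must pick each of items $e_{n+1},\dots,e_{2n}$), which is in the spirit of how ridge orders are used later in the paper. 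One small remark: your case (II) also covers configurations (e.g., one agent holding $e_1$ and $e_2$) that the paper dispatches separately with a ratio-$2$ instance; you only certify $\tfrac32$ there, which is all the proposition asks.
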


\begin{proof}
For a given value of $n$ and $m \ge 2n$, consider a picking order $S$ for which $\rho_{n,m}(S) < \frac{3}{2}$.

It cannot be that in $S$ some agent picks two items among the first $n$, as then her approximation ratio will be~2 if each of the first $n$ items has disvalue~1. Hence we may assume that for every $i \le n$, agent $i$ picks item $i$.

It cannot be that in $S$ some agent picks three items among the first $2n$, as then her approximation ratio will be~$\frac{3}{2}$ if each of the first $2n$ items has disvalue~$\frac{1}{2}$. Hence we may assume that each agent picks only one item from items $\{e_{n+1}, \ldots, e_{2n}\}$.

It cannot be that in $S$, for some $i$, the second item picked by agent $i$ is strictly earlier than $2n - i + 1$, as then her approximation ratio will be~$\frac{3}{2}$ if each of the first $i$ items has value~1, and the next $2(n-i)$ items each has value~$\frac{1}{2}$.

The above implies that $S$ is a ridge picking order. This can be proved by observing that the only agent that can pick item $e_{n+1}$ is agent $n$, then the only agent that can pick $e_{n+2}$ is agent $n-1$, and so on.
\end{proof}

\begin{corollary}
\label{cor:ridge}
If $\rho_{n,m} \le \frac{3}{2}$, then $\rho_{n,m} = \hat{\rho}_{n,m}$. In other words, ridge picking orders are optimal for those values of $n$ (and $m$) for which at approximation ratio no worse than $\frac{3}{2}$ is possible.
\end{corollary}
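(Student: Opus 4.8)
The plan is to prove the two inequalities $\rho_{n,m} \le \hat{\rho}_{n,m}$ and $\hat{\rho}_{n,m} \le \rho_{n,m}$ separately, under the hypothesis $\rho_{n,m} \le \frac{3}{2}$. The first inequality is immediate and needs no hypothesis: every ridge picking order is in particular a picking order, so restricting the minimization defining $\rho_{n,m}$ to ridge orders can only increase the optimum, giving $\rho_{n,m} \le \hat{\rho}_{n,m}$ (this is the inequality already noted in Section~\ref{sec:ridge}). All the content is in the reverse direction, which I would establish by exhibiting a ridge order whose worst-case ratio is at most $\rho_{n,m}$. To this end, let $S^{*}$ be a picking order attaining $\rho_{n,m}(S^{*}) = \rho_{n,m}$; such a minimizer exists because, for fixed $n$ and $m$, there are only finitely many picking orders and each has a well-defined (finite) ratio. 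The key tool is Proposition~\ref{pro:ridge}, whose contrapositive asserts that any picking order with ratio strictly below $\frac{3}{2}$ must be a ridge order. I would therefore split on whether the optimum $\rho_{n,m}$ is strictly below or exactly equal to $\frac{3}{2}$.

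If $\rho_{n,m} < \frac{3}{2}$, then $\rho_{n,m}(S^{*}) < \frac{3}{2}$, so Proposition~\ref{pro:ridge} forces $S^{*}$ to be a ridge picking order. Hence $\hat{\rho}_{n,m} \le \rho_{n,m}(S^{*}) = \rho_{n,m}$, and combined with the easy direction this yields $\hat{\rho}_{n,m} = \rho_{n,m}$. This disposes of the entire strict regime, which already accounts for the small cases of genuine interest (by the values quoted in Section~\ref{subsec-largen}, $\rho_2 = \frac{4}{3}$, $\rho_3 = \frac{7}{5}$, and $\rho_4 < 1.499$, all strictly below $\frac{3}{2}$).

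The remaining, genuinely delicate case is $\rho_{n,m} = \frac{3}{2}$, where $S^{*}$ need \emph{not} be a ridge order and I must instead manufacture a ridge order of ratio at most $\frac{3}{2}$. Here I would run a straightening argument driven by the three defects isolated inside the proof of Proposition~\ref{pro:ridge}. First observe that the defect ``some agent receives two of $e_1,\dots,e_n$'' forces ratio $2 > \frac{3}{2}$, hence is \emph{absent} from $S^{*}$; consequently the first $n$ positions already go to $n$ distinct agents, which I relabel $1,\dots,n$ by position. It then remains to eliminate the two other defects --- an agent receiving three of the first $2n$ items, and an agent $i$ whose second pick precedes position $2n-i+1$ --- by local swaps of positions among agents, and to complete the reversed second row of the ridge. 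The main obstacle is to verify that each such swap never raises the worst-case ratio above $\frac{3}{2}$: this requires re-using the per-agent bundle-cost estimates of the class~0/1/2 analysis in Section~\ref{sec:ridge} to show that pushing each agent's second item out to exactly position $2n-i+1$ can only help against an adversarial valuation. Once all defects are removed, the resulting order is a ridge order with ratio $\le \frac{3}{2} = \rho_{n,m}$, which gives $\hat{\rho}_{n,m} \le \rho_{n,m}$ and hence the claimed equality. (Alternatively, one may observe that this boundary case is essentially vacuous in the relevant range, since small $n$ give $\rho_{n,m} < \frac{3}{2}$ strictly while large $n$ give $\rho_{n,m} > \frac{3}{2}$ by Theorem~\ref{thm:lowerbound}, so the hypothesis fails there; but the straightening argument makes the statement self-contained.)
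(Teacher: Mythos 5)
Your easy direction plus the strict case $\rho_{n,m} < \frac{3}{2}$ is exactly the paper's own (unwritten) argument: the corollary appears with no proof, and the intended justification is precisely the contrapositive of Proposition~\ref{pro:ridge} applied to an optimal picking order, which exists by finiteness of the set of orders for fixed $n$ and $m$. In every place the paper actually invokes the corollary (e.g., Proposition~\ref{pro:r4}, where $\hat{r}_4 = \frac{13}{9} < \frac{3}{2}$), the strict regime is all that is needed, so on the substance you and the paper agree.

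Where your proposal has a genuine gap is the boundary case $\rho_{n,m} = \frac{3}{2}$ --- and, to your credit, you correctly noticed that this case is \emph{not} covered by Proposition~\ref{pro:ridge}; the paper silently glosses over it. The case is not empty: SesquiRR with $n=4$ is a non-ridge order of ratio exactly $\frac{3}{2}$. But your straightening argument does not close it. The key claim, that pushing an agent's second pick out to position $2n-i+1$ ``can only help against an adversarial valuation,'' is one-sided: every such swap necessarily advances some \emph{other} agent's pick to an earlier position, and in an IDO instance that agent is strictly hurt. The class~0/1/2 estimates of Section~\ref{sec:ridge} cannot be reused to control this, since they are derived for orders that already have a ridge prefix and periods chosen to satisfy the covering constraints, not for an arbitrary order undergoing local edits; so the assertion that the edited order stays at ratio $\le \frac{3}{2}$ is unsupported. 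Your fallback --- that the boundary case is vacuous --- also does not follow from the paper's results: Theorem~\ref{thm:lowerbound} gives $\rho_n \ge \frac{3}{2}$ for \emph{sufficiently large} $n$, not a strict inequality, and says nothing about intermediate $n$; note moreover that if some $n$ in the range of Proposition~\ref{pro:largen} had $\rho_n = \frac{3}{2}$ exactly, the corollary's conclusion would force $\hat{\rho}_n = \frac{3}{2}$, contradicting that proposition, so any proof of the boundary case must in effect rule such $n$ out --- something a purely local swap argument would have to establish, not assume. In short: your proof matches the paper where the paper has content, but the extra boundary case you attempt remains open in your write-up (as it does, implicitly, in the paper).
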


\begin{proposition}
\label{pro:largen}
For every $\rho \le 1.524$, there is a sufficiently large $n_{\rho}$ such that for every $n \ge n_{\rho}$, $\hat{\rho}_n \ge \rho$. That is, for sufficiently large $n$, ridge picking orders do not offer an approximation ratio better than $1.524$ (when $m$ is sufficiently large).
\end{proposition}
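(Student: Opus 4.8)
The plan is to prove the contrapositive in two stages: first I would establish the lower bound with respect to the \emph{chore share} (that is, $\hat r_n \ge \rho$ for large $n$) using the fractional covering constraint, and then transfer it to the MMS. Throughout I restrict attention to the regime $\tfrac32 < \rho \le 1.524$, where the three-class partition of Section~\ref{sec:ridge} is nondegenerate: the class-$0$ window $[\,n/\rho,\; 2n+1-2n/\rho\,]$ is nonempty precisely when $\rho \ge \tfrac32$. The case $\rho \le \tfrac32$ then follows for free by monotonicity in the target ratio, since the bound obtained in the main regime approaches $\approx 1.524 > \tfrac32$.

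Fix a ridge picking order $S$ and suppose, for contradiction, that it achieves approximation ratio at most $\rho$ with respect to the chore share, for \emph{all} additive disvaluations with chore share $1$. To each agent $i$ I associate a period $p_i$ measuring the asymptotic density $1/p_i$ of the rounds $>2n$ in which $i$ picks; this quantity is determined by the combinatorial order $S$ alone. The heart of the argument is to show that the constraint $F_i$, which bounds $i$'s bundle cost against her own worst-case valuation, forces $p_i \ge p_i^{\min}(\rho)$, where $p_i^{\min}$ is given by the class formulas of Section~\ref{sec:ridge}: $\tfrac{n-i}{\rho-1}$ in class~$1$, $\tfrac{n}{\rho}$ in class~$0$, and $\tfrac{i-1}{2(\rho-1)}$ in class~$2$. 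For each agent separately I exhibit the saturating valuation — uniform tiny items in class~$0$; the first $i$ items of value $1$ followed by a fine tail in class~$1$; a plateau $v_i(e_i)=v_i(e_{2n-i+1})=\tfrac12$ of length $2n-i+1$ followed by a fine tail in class~$2$ — and since $S$ must beat $\rho$ on the instance carrying that valuation, $F_i$ is forced to hold, capping $i$'s density at $1/p_i^{\min}(\rho)$. Crucially, I choose each of these valuations so that its items become fine-grained as $m\to\infty$; this is what lets the bound transfer to the MMS at the end.

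Next comes the covering step. Since $S$ allocates every one of the $m$ items, the picking densities sum to $1$, which is exactly the fractional covering constraint $\sum_{i=1}^n 1/p_i \ge 1$. Combining this with $p_i \ge p_i^{\min}(\rho)$ gives
\[
1 \;\le\; \sum_{i=1}^n \frac{1}{p_i} \;\le\; \sum_{i=1}^n \frac{1}{p_i^{\min}(\rho)} .
\]
Writing $i = xn$ and letting $n\to\infty$, the right-hand side is a Riemann sum converging to $I(\rho) := \int_0^1 g_\rho(x)\,dx$, where $g_\rho(x)=\tfrac{\rho-1}{1-x}$ on $(0,\tfrac1\rho]$, $g_\rho(x)=\rho$ on $(\tfrac1\rho,\,2-\tfrac2\rho]$, and $g_\rho(x)=\tfrac{2(\rho-1)}{x}$ on $(2-\tfrac2\rho,\,1]$ (the $O(1)$ corrections from ceilings and the $+1$ offsets wash out in the limit). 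Evaluating the three pieces,
\[
I(\rho) \;=\; (\rho-1)\ln\!\frac{\rho}{\rho-1} \;+\; (2\rho-3) \;+\; 2(\rho-1)\ln\!\frac{\rho}{2(\rho-1)} .
\]
One checks that $I$ is increasing on $(\tfrac32,2)$ with $I(\tfrac32)=\tfrac12\ln 3+\ln\tfrac32\approx 0.955$ and $I(1.524)\approx 0.99986 < 1$, so $I(\rho)<1$ for every $\rho\le 1.524$. Hence for all sufficiently large $n$ the displayed inequality $1 \le I(\rho)+o(1)$ fails, a contradiction; therefore no ridge picking order attains chore-share ratio $\le\rho$, i.e.\ $\hat r_n \ge \rho$. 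Finally, because each saturating valuation above has $\mathrm{MMS}\to\mathrm{CS}$ as $m\to\infty$ (fine-grained or $\tfrac12$-plateau instances admit near-perfectly-balanced $n$-partitions into bundles of cost $1$), the very same instances witness $\hat\rho_n \ge \rho$ once $m$ is large, which is the claim.

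The main obstacle, where most of the effort belongs, is the second paragraph: rigorously turning the pick-positions of an \emph{arbitrary} ridge order into a well-defined density $1/p_i$ and proving class-by-class that $F_i$ genuinely forces $p_i\ge p_i^{\min}(\rho)$ — handling the offsets (classes~$1$ and~$2$ begin their periods only after the first, respectively second, pick), the ceilings, and finite-$m$ boundary effects — while simultaneously arranging that the saturating valuations have $\mathrm{MMS}$ collapsing onto the chore share so the bound survives the passage to the MMS. The integral evaluation and the monotonicity and threshold check for $I(\rho)$ are then routine calculus.
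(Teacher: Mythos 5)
Your proposal is correct and follows essentially the same route as the paper's own proof: the same class-wise minimal periods forced by the constraints $F_i$ via the same saturating valuations, the same fractional covering constraint approximated by the identical integral $I(\rho) = (\rho-1)\ln\frac{\rho}{\rho-1} + 2\rho - 3 + 2(\rho-1)\ln\frac{\rho}{2(\rho-1)}$ with the same numerical threshold $\approx 1.524$, and the same fine-grained-instance argument to pass from the chore share to the MMS. The only cosmetic differences are that you argue by contradiction where the paper argues directly (covering fails by a margin $\epsilon$, hence some agent overpicks and exceeds $\rho(1+\Omega(\epsilon))$ against an MMS of $1+O(\frac{1}{m})$), and that you make explicit the monotonicity step for $\rho \le \frac{3}{2}$, which the paper leaves implicit.
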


\begin{proof}
We show that for large $n$, if $\rho$ is too small then the implied periods $p_i$ do not satisfy the fractional covering constraint. This implies that as $m$ tends to infinity, the number of items picked by at least one agent $i$ is at least $\frac{m}{p'_i}$, where $p'_i \le  p_i - \epsilon$, for some $\epsilon$ that depends only on $\rho$. We then fit a valuation function $v_i$ as follows.  In $v_i$, if $i$ is of class~0 than all items have the same value, if $i$ is of class~1 then the first $i$ items each has value~1 and the remaining items all have the same value, and if $i$ is of class~2 then the first $2n-i+1$ items each has value~$\frac{1}{2}$ and the remaining items all have the same value. For this $v_i$ the MMS is at most $1 + O(\frac{1}{m})$, but agent $i$ gets a bundle of disvalue at least $\rho(1 + \Omega(\epsilon))$, which is larger than $\rho(1 + O(\frac{1}{m}))$ when $m$ is sufficiently large.

The assumption that $n$ is very large allows us to approximate the sum in the fractional covering constraint by an integral. We shall denote $\frac{i}{n}$ by $\alpha$, and so $i = \alpha n$. We shall also assume that $\rho \ge \frac{3}{2}$, and so we know that the ridge constraint for class~1 holds. (The analysis indeed gives $\rho > \frac{3}{2}$, confirming this assumption.)

Approximating the fractional covering constraint by an integral, doing integration by parts, and omitting terms that effect the end result only by an $O(\frac{1}{n})$ additive term (this includes omitting $O(1)$ additive terms from the periods, and terms of order $O(\frac{1}{n})$ from ranges of integration), we can rewrite the fractional covering constraint as:

$$\int_0^{\frac{1}{\rho}} \frac{\rho - 1}{1 - \alpha} d\alpha + \int_{\frac{1}{\rho}}^{\frac{2(\rho - 1)}{\rho}} \rho d\alpha
+ \int_{\frac{2(\rho - 1)}{\rho}}^1 \frac{2(\rho-1)}{\alpha} d\alpha \ge 1$$

The first summand is the contribution of class~1, the second is the contribution of class~0, and the third is the contribution of class~2.

After integration this gives:

$$(\rho-1)\ln \frac{\rho}{\rho - 1} + 2\rho - 3 + 2(\rho - 1)\ln \frac{\rho}{2(\rho - 1)} \ge 1$$

% solve (x-1)*ln(x/(x - 1)) + (2x - 3) + (2x - 2)*ln(x/(2x - 2)) - 1 = 0

By WolframAlpha, the smallest value of $\rho$ that satisfies the above is $\rho \simeq 1.52408$.

\end{proof}

%The fact that the covering constraint is satisfied does not mean that the picking order indeed covers all items. To achieve this latter property we will need to increase $\rho$, but not by much. As we shall show, $\rho = 1.6$ suffices.

The following corollary is a restatement of Theorem~\ref{thm:lowerbound}.

\begin{corollary}
\label{cor:largen}
There is some $n_0$ such that for every $n \ge n_0$, $\rho_n \ge \frac{3}{2}$. In other words, if $n$ and $m$ are sufficiently large, then there is no picking sequence with an approximation ratio better than $\frac{3}{2}$.
\end{corollary}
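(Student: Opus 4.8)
The plan is to obtain the corollary as a short combination of Corollary~\ref{cor:ridge} with Proposition~\ref{pro:largen} specialized to $\rho=\frac{3}{2}$; essentially all of the substantive work has already been done in those two results (and in Proposition~\ref{pro:ridge}, which underlies Corollary~\ref{cor:ridge}), so what remains is to assemble them and be careful with the quantifiers. First I would instantiate Proposition~\ref{pro:largen} at the admissible value $\rho=\frac{3}{2}$ (note $\frac{3}{2}\le 1.524$), obtaining a threshold $n_0:=n_{3/2}$ such that $\hat{\rho}_n\ge\frac{3}{2}$ for every $n\ge n_0$. This $n_0$ is the constant claimed in the corollary.

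Next, fix any $n\ge n_0$ and split into two cases according to the value of $\rho_n$. If $\rho_n>\frac{3}{2}$ we are already done. Otherwise $\rho_n\le\frac{3}{2}$, and since $\rho_n=\sup_m \rho_{n,m}$ this forces $\rho_{n,m}\le\frac{3}{2}$ for every finite $m$. For each such $m$, Corollary~\ref{cor:ridge} then applies and gives $\rho_{n,m}=\hat{\rho}_{n,m}$, i.e. the optimal picking order at this $(n,m)$ may be taken to be a ridge picking order. Taking the supremum over $m$ on both sides yields $\rho_n=\hat{\rho}_n$, and combining with the first step gives $\rho_n=\hat{\rho}_n\ge\frac{3}{2}$. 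In either case $\rho_n\ge\frac{3}{2}$, which is exactly the assertion of the corollary. Since Proposition~\ref{pro:largen} already exhibits valuation functions whose MMS exceeds their chore share by only a $1+O(\frac{1}{m})$ factor, the bound is against the MMS benchmark once $m$ is large, matching Theorem~\ref{thm:lowerbound}.

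I do not expect a genuine obstacle: the fractional-covering-constraint computation (the numerical solution giving $1.52408$) and the structural reduction to ridge orders are the hard parts, and both are already established upstream. The only point needing care is the passage between the finite-$m$ equalities $\rho_{n,m}=\hat{\rho}_{n,m}$ supplied by Corollary~\ref{cor:ridge} and the limiting quantities $\rho_n=\sup_m\rho_{n,m}$, $\hat{\rho}_n=\sup_m\hat{\rho}_{n,m}$ that appear in the statement. One must verify that the hypothesis $\rho_{n,m}\le\frac{3}{2}$ of Corollary~\ref{cor:ridge} is available for all $m$ simultaneously before taking the supremum; the case assumption $\rho_n\le\frac{3}{2}$ provides exactly this uniformity, which is why I prefer the two-case phrasing above over a single proof by contradiction.
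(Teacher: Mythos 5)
Your proof is correct and takes essentially the same route as the paper: the paper's proof is exactly the one-line combination of Proposition~\ref{pro:ridge} (equivalently, Corollary~\ref{cor:ridge}, which you invoke) with Proposition~\ref{pro:largen} instantiated at $\rho=\frac{3}{2}$. Your case split on whether $\rho_n\le\frac{3}{2}$ simply spells out the quantifier bookkeeping over $m$ that the paper leaves implicit.
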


\begin{proof}
Follows by combining Proposition~\ref{pro:ridge} and Proposition~\ref{pro:largen}.
\end{proof}
}

{\subsection{Very small $n$}
\label{sec:smalln}

%For small $n$ we find it more convenient to have $p_i$ denote the period, rather than $p_i n$.

For $n = 2$ our ridge picking order gives periods $p_1 = 3$ and $p_2 = \frac{3}{2}$. The ridge picking order is $1221(221)^*$, and the approximation ratio is $\frac{4}{3}$.

For $n = 3$ our ridge picking order gives periods $p_1 = 5$ and $p_2 = p_3 = \frac{5}{2}$. The ridge picking order is $123321(23321)^*$, and the approximation ratio is $\frac{7}{5}$.

In both $n=2$ and $n=3$ the picking orders and approximation ratios are exactly as in~\cite{aziz2022approximate}, and are optimal.

For $n=4$ the SesquiRR picking order presented in~\cite{aziz2022approximate} is $(123443)^*$, giving a ratio of $\frac{3}{2}$. It is claimed in~\cite{aziz2022approximate} that there are other picking sequences giving a ratio better than $1.499$, and that no picking sequence can give a ratio better than $1.405$. We show that ridge picking orders get a ratio better than $1.499$, even compared to the chore share.

\begin{proposition}
\label{pro:r4}
For picking sequences over indivisible chores, $\hat{r}_4 = r_4 = \frac{13}{9} < 1.445$, whereas $\rho_n \ge \frac{10}{7} > 1.428$. In other words, a ridge picking order (explicitly given) achieves an approximation ratio of $\frac{13}{9}$ compared to the chore share, and no picking sequence can achieve an approximation ratio better than $\frac{10}{7}$ compared to the MMS.
\end{proposition}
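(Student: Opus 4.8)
The statement splits into an upper bound on the chore-share ratio, $\hat r_4 = r_4 = \tfrac{13}{9}$, and a lower bound on the MMS ratio, $\rho_4 \ge \tfrac{10}{7}$. The plan is to treat the two benchmarks separately, using the ridge machinery of Section~\ref{sec:ridge} for the chore-share statement and Corollary~\ref{cor:ridge} to reduce the MMS statement to ridge orders. Throughout, the reduction to ridge orders is legitimate because every quantity in play is below $\tfrac32$: by Proposition~\ref{pro:ridge} any non-ridge order has MMS ratio, hence also chore-share ratio, at least $\tfrac32 > \tfrac{13}{9}$, so the optimizing order for both benchmarks is a ridge order. This gives $r_4 = \hat r_4$ directly, and $\rho_4 = \hat\rho_4$ via Corollary~\ref{cor:ridge} (applicable since $\rho_4 \le r_4 < \tfrac32$).

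For the upper bound $\hat r_4 \le \tfrac{13}{9}$ I would instantiate the construction of Section~\ref{sec:ridge} with $\rho = \tfrac{13}{9}$ and $n=4$. The class boundaries $i \le n/\rho = \tfrac{36}{13}$ and $i > 2n+1 - 2n/\rho = 9-\tfrac{72}{13}$ put agents $1,2$ in class~1, agent~3 in class~0, and agent~4 in class~2, with periods $p_1 = \tfrac{n-1}{\rho-1} = \tfrac{27}{4}$, $p_2 = \tfrac{n-2}{\rho-1} = \tfrac{9}{2}$, $p_3 = \tfrac n\rho = \tfrac{36}{13}$, and $p_4 = \tfrac{3}{2(\rho-1)} = \tfrac{27}{8}$. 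Each constraint $F_i$ then holds by construction, so only the covering constraints remain. The fractional covering constraint holds with small slack, $\sum_i 1/p_i = \tfrac{37}{36} > 1$, and I would verify the finitely many relevant integer covering constraints by listing the ridge thresholds $1,\dots,8$ together with the first few later thresholds of each agent; the pattern stabilises once the periodic part takes over. This yields the explicit ridge order certifying $\hat r_4 \le \tfrac{13}{9}$, and with the reduction above gives $r_4 = \hat r_4 = \tfrac{13}{9}$.

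The matching lower bound $\hat r_4 \ge \tfrac{13}{9}$ is the cleanest step and reduces to a counting argument at item~$11$. Since picking a chore earlier (at a smaller round) only raises its disvalue, any ridge order with chore-share ratio at most $\rho$ must, against the worst-case valuation, space each agent's picks no more densely than her period $p_i(\rho)$; for $\rho < \tfrac{13}{9}$ these periods are strictly larger than the values above. Counting the picks each agent can make among the first $11$ rounds then gives at most $2$ for agent~1 (rounds $1,8$), at most $2$ for agent~2 (rounds $2,7$, with her third pick pushed to round $\ge 12$ because $2p_2 > 9$), at most $3$ for agent~3 (rounds $3,6,9$), and at most $3$ for agent~4 (rounds $4,5,9$), for a total of at most $10 < 11$. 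Hence the first $11$ items cannot all be allocated, a contradiction. The single binding constraint is that agent~2's third pick must land by item~$11$, i.e. $\lceil 2p_2\rceil \le 9$, which is exactly $\rho \ge \tfrac{13}{9}$.

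For the MMS lower bound $\rho_4 \ge \tfrac{10}{7}$ I would run the same covering argument with the MMS as benchmark. Having reduced to ridge orders, I would, for each agent $i$, determine the coarsest spacing $p_i^{\mathrm{MMS}}(\rho)$ of her picks that still keeps her worst-case MMS ratio below $\rho$; here the adversary chooses an IDO disvaluation that inflates the items agent $i$ is forced to take while keeping her MMS partition balanced. Because the MMS dominates the chore share, these MMS-periods are no larger than the chore-share periods $p_i(\rho)$, so the covering contradiction now occurs at a smaller threshold, namely $\rho = \tfrac{10}{7}$: for $\rho < \tfrac{10}{7}$ the four MMS-periods are too large for the agents to cover all items, forcing some agent above $\tfrac{10}{7}$. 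I expect this to be the main obstacle of the proposition: unlike the chore share, the MMS is a global function of the entire disvaluation, so each $p_i^{\mathrm{MMS}}(\rho)$ must be read off from an explicit worst-case instance, and one must check that the balancing partition really attains the claimed MMS, rather than from the transparent three-line formulas that settle the chore-share bound.
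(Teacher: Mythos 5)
Your reduction to ridge orders, your upper-bound construction, and your counting argument at item~11 are all sound and essentially match the paper's proof: the paper uses the same class assignment (agents 1,2 in class~1, agent 3 in class~0, agent 4 in class~2), the same period formulas, and locates the same obstruction at the covering constraint $j=11$; its explicit schedule $12344321\,(4324331\;4324321)^*$ is obtained by shrinking agent~2's period, which is exactly your binding constraint $\lceil 2p_2\rceil\le 9$, i.e.\ $\rho\ge\frac{13}{9}$. The only difference on this side is one of execution: the paper exhibits an eventually periodic schedule, while you instantiate the thresholds at $\rho=\frac{13}{9}$ and appeal to the integer covering constraints. That route works, but ``the pattern stabilises'' must be made precise --- e.g.\ via Proposition~\ref{prop:cover}, which with covering ratio $r=\frac{37}{36}$ requires checking constraints up to $j=2n+\frac{n}{r-1}=152$ --- and the check is not vacuous, since several constraints (e.g.\ $j=24,27,33,41$) hold with equality.

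The genuine gap is in the MMS lower bound $\rho_4\ge\frac{10}{7}$. Your argument runs: MMS $\ge$ CS, hence $p_i^{\mathrm{MMS}}(\rho)\le p_i(\rho)$, hence a covering contradiction at $\rho=\frac{10}{7}$. But that inequality points the wrong way: smaller required periods make the covering constraints \emph{easier} to satisfy, so from $p_i^{\mathrm{MMS}}\le p_i$ alone you can only conclude that the MMS threshold is at most the CS threshold $\frac{13}{9}$; you get no lower bound at all, and in particular nothing that singles out $\frac{10}{7}$. What is actually needed, and what the paper does (this is the $n=4$ instance of the machinery in Proposition~\ref{pro:largen}), is the opposite-direction statement: (i) for each agent, the adversarial IDO valuation certifying her period (a value-$1$, resp.\ value-$\frac12$, prefix followed by a uniform tail of $m\to\infty$ tiny items) has MMS tending to the chore share, so asymptotically $p_i^{\mathrm{MMS}}(\rho)$ coincides with $p_i(\rho)$; and (ii) the number $\frac{10}{7}$ comes from solving the \emph{fractional} covering equation $\sum_i \frac{1}{p_i(\rho)}=\frac{\rho-1}{3}+\frac{\rho-1}{2}+\frac{\rho}{4}+\frac{2(\rho-1)}{3}=\frac{7\rho}{4}-\frac{3}{2}=1$, giving $\rho=\frac{10}{7}$ with $p_1=7$, $p_2=\frac{14}{3}$, $p_3=\frac{14}{5}$, $p_4=\frac{7}{2}$. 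For $\rho<\frac{10}{7}$ the pick densities sum to less than~$1$, so for large $m$ not all chores can be allocated unless some agent picks more densely than her period permits, and against her uniform-tail valuation that agent then receives disvalue exceeding $\rho$ times $(1-o(1))$ times her MMS. Note also that your integer obstruction at $j=11$ cannot be recycled for the MMS bound: the valuation witnessing it (two chores of disvalue~$1$ and nine of disvalue~$\frac29$) has MMS $=\frac{10}{9}$, strictly above its chore share, which is precisely why the MMS lower bound is $\frac{10}{7}$ rather than $\frac{13}{9}$.
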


\begin{proof}
As $r_4 < \frac{3}{2}$, ridge picking orders attain the best approximation ratios among picking sequences. 

Let $r$ be the approximation ratio of the ridge picking order. Then the constraints that we derive are based on agents~1 and~2 being of class~1, agent~3 being of class~0, and agent~4 being of class~2.

\begin{itemize}

    \item $1 + \frac{3}{p_1} = r$.

\item $1 + \frac{2}{p_2} = r$.

\item $\frac{4}{p_3} = r$. (This will turn out to imply also that $1 + \frac{1}{p_3} = r$, and hence agent~3 is not of class~3.) 

\item $1 + \frac{3}{2p_4} = r$.

\end{itemize}

Solving together with the fractional covering constraint ($\sum \frac{1}{p_i} = 1$) we get $r = \frac{10}{7}$, $p_1 = 7$, $p_2 = \frac{14}{3}$, $p_3 = \frac{14}{5}$ and $p_4 = \frac{7}{2}$. For each agent this gives a sequence of thresholds, but unfortunately, it cannot be translated to a schedule that covers all items. Specifically, the covering constraint for $j=11$ is not satisfied.  After $12344321$, in the next 3 locations we cannot put agents~1 or~2, and we can put each of agents~3 and~4 only once. Hence the item at location~11 cannot be covered without decreasing one of the periods. Choosing to decrease the period for agent~2 (this turns out to be optimal), we get the schedule $12344321\; (4324331\; 4324321)^*$, with an approximation ratio no worse than $1 + \frac{4}{9}$. The worst case for this schedule with respect to the chore share is when for agent~2 the first two items are each worth~1, and the next nine items are each worth $\frac{2}{9}$. 
\end{proof}

}

%It is not difficult to see that given $v_i(e_i)$ and $v_i(e_{2n-i+1})$, the cost of chores in the bundle that the agent gets is maximized if $v_i(e_j) = v_i(e_i)$ for all $j \le i$, $v_i(e_j) = v_i(e_{2n-i+1})$ for all $i < j \le e_{2n-i+1}$, and the values of all remaining items are equal to each other, and their number is divisible by $p_i$.  We let $t_i$ ($t$ for {\em tail}) denote the value of the set $\{e_{2n-i}, \ldots, e_m\}$, and let $a_i = v_i(e_i)$ and $b_i = v_i(e_{2n-i+1})$. With these conventions we have that $v_i(\items) \le n$, $a_i \le 1$, $b_i \le \frac{1}{2}$, and $t_i \le ia_i + (2n - 2i + 1)b_i$. The cost of the bundle of agent $i$ is at most $a_i + b_i + \frac{t_i}{p_i}$.  

{
\subsection{Upper bounds on $\hat{r}_n$}
\label{sec:85}

Here we develop upper bounds on $\hat{r}_n$ that hold simultaneously for all $n$. Much of the difficulty in developing such bounds stems from the covering constraints. The proof of Proposition~\ref{pro:largen} indicates that dealing with the fractional covering constraint is tractable. However, as is evident from the proof of Proposition~\ref{pro:r4}, the (integral) covering constraints are strictly more demanding than the fractional covering constraint. For every fixed $n$, we can use an approach similar to that of Section~\ref{sec:smalln} in order to satisfy the covering constraints and determine $\hat{r}_n$. Here we develop a technique that handles the covering constraints for all values of $n$ simultaneously. It is based on a combination of Lemma~\ref{lem:doublen} and Lemma~\ref{lem:8r}.

\begin{lemma}
\label{lem:doublen}
For ridge picking orders and every $n \ge 4$, it holds that $\hat{r}_{n} \le \hat{r}_{2n}$.
\end{lemma}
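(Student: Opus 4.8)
The plan is to prove the inequality by a \emph{folding} reduction: starting from an optimal ridge picking order for $2n$ agents, I will collapse it into a ridge picking order for $n$ agents whose approximation ratio (compared to the chore share) is no larger. Set $r=\hat{r}_{2n}$ and let $S$ be a ridge picking order for $2n$ agents that attains it, so $S$ allocates every chore and, for every IDO valuation, gives each of the $2n$ agents a bundle of disvalue at most $r$ times its ($2n$-agent) chore share. Because folding a complete allocation produces a complete allocation, the covering constraints are inherited automatically by the folded order; hence the only thing to verify is the approximation ratio. This is the feature that makes the reduction convenient: I never have to re-examine the covering constraints, which are the genuinely hard part of the whole framework.

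The construction pairs the $2n$ agents by the ``fold'' $\phi(j)=\min(j,\,2n+1-j)$, so that super-agent $i\in\{1,\dots,n\}$ is the amalgamation of agents $i$ and $2n+1-i$. Replacing every symbol $s$ of the sequence $S$ by $\phi(s)$ yields a picking order $S'$ for $n$ agents. The first step is to check that $S'$ is a genuine ridge picking order: on the first $2n$ chores $S$ assigns $e_t$ to agent $t$, so $S'$ assigns $e_t$ to $\min(t,2n+1-t)$, which produces exactly $1,2,\dots,n,n,n-1,\dots,1$; thus super-agent $i$ receives $e_i$ and $e_{2n+1-i}$, which is precisely the $n$-agent ridge. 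Consequently $S'$ is an admissible ridge picking order for $n$ agents, and it remains only to bound its approximation ratio.

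Fix super-agent $i$ and an IDO valuation $v$, normalised so that its $n$-agent chore share equals $1$; thus $v(e_1)\le 1$, $v(e_{n+1})\le \tfrac12$, and $\sum_j v(e_j)\le n$. By the fold, the bundle that $S'$ gives to super-agent $i$ is the disjoint union of the bundles that $S$ gives, under the same $v$, to agents $i$ and $2n+1-i$. Hence I must show
\[
v(A_i)+v(A_{2n+1-i})\;\le\;r,
\]
where $A_i$ and $A_{2n+1-i}$ are the bundles produced by $S$. The naive route---bounding each summand by $r$ times the $2n$-agent chore share of $v$ and adding---loses a factor of two and is too weak, so the argument must exploit the asymmetry of the two constituents.

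This combined bound is the crux and the main obstacle. The key observations that make it go through are: (i) the high-index constituent $2n+1-i$ only ever receives chores $e_j$ with $j\ge 2n+1-i\ge n+1$, each of disvalue at most $\tfrac12$, so the expensive leading term (a chore of disvalue up to $1$) is charged only once, to the low-index constituent $i$; and (ii) the proportional contributions match across the two systems, since $\tfrac1n\sum_j v(e_j)=2\cdot\tfrac1{2n}\sum_j v(e_j)$, so the two slow tails, picked at rates $1/p_i$ and $1/p_{2n+1-i}$, combine to exactly the rate the $n$-agent design would assign. I would carry this out by a case analysis on the class ($0$, $1$, or $2$) of super-agent $i$ in the $n$-agent order, in each case plugging the worst-case shape of $v$ (expensive prefix, cheap tail) into the explicit $F$-constraint forms $x_i+\tfrac{b_i+c_i}{p_i}$ and $x_i+y_i+\tfrac{c_i}{p_i}$ for the two constituents and summing; the hypothesis $n\ge 4$ guarantees that the constituents fall into the regime where these period formulas and the $2n$-agent ridge constraints are valid. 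The bookkeeping at the class boundaries, where $2n+1-i$ switches between class~$2$ and class~$0$, is the tedious part, but in each regime the two effects above reduce the sum to at most $r$, giving $\hat{r}_n\le\hat{r}_n(S')\le r=\hat{r}_{2n}$.
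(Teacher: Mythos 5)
Your reduction goes in the right direction (from a $2n$-agent order to an $n$-agent order), and two of your observations are correct: the fold $\phi(s)=\min(s,2n+1-s)$ does carry the $2n$-agent ridge to the $n$-agent ridge, and a folded complete allocation is automatically complete, so covering is free. But the step you defer as ``tedious bookkeeping'' --- the combined bound $v(A_i)+v(A_{2n+1-i})\le r$ --- is not bookkeeping: it is false for this pairing, and this is exactly where the approach collapses. Concretely, since your argument must apply to an arbitrary order attaining $\hat{r}_{2n}$, take $S$ to be an optimal order built as in Section~\ref{sec:ridge}, completed so that agents $1$ and $2n$ pick as early as their thresholds allow. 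There agent $1$ (class~1) picks at asymptotic rate $(r-1)/(2n-1)$ and agent $2n$ (class~2) at rate $2(r-1)/(2n-1)$, so folded super agent $1$ picks tail items at combined rate $3(r-1)/(2n-1)$. Against the IDO valuation with $v(e_1)=1$ and a long uniform tail of total disvalue $n-1$ (whose $n$-agent chore share is $1$), her folded bundle has disvalue about $1+\tfrac{3(r-1)(n-1)}{2n-1}$, which exceeds $r$ whenever $n>2$, and tends to $1+\tfrac32(r-1)$. Your claim (ii), that the two tails ``combine to exactly the rate the $n$-agent design would assign,'' is thus wrong at the extreme pairs: the $n$-agent design allows super agent $1$ a rate of only about $(r-1)/(n-1)$, and the fold overshoots it by a factor of $\tfrac32$. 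Note also that the worst cases of the two constituents are realized by the \emph{same} valuation, so no slack can be traded between them.

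The failure is structural, not fixable by more careful case analysis. Demanding that every pair $(i,2n+1-i)$ of $S$ have rates obeying the $n$-agent constraints is, after summing over pairs and restoring integral covering, essentially the statement that an $n$-agent ridge order with ratio $\le r$ exists --- the very thing being proved --- and a generic optimal $S$ has no reason to satisfy it. The paper's proof avoids unions of bundles altogether: it pairs \emph{adjacent} agents $2i-1,2i$, compresses rounds by a factor of $2$, and gives super agent $i$ the thresholds $\tau_i^j=\lceil\tfrac12\min[t_{2i-1}^j,t_{2i}^j]\rceil$, i.e., the halved thresholds of a \emph{single} constituent; the domination property (this is where $n\ge4$ is used, via non-emptiness of class~0) guarantees that the minimizing constituent is the same for all $j$. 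The price is that covering must be re-proved --- this is the counting argument comparing round $j$ of the $n$-agent order with round $2j$ of the $2n$-agent order --- but the ratio then transfers exactly, because duplicating each item of an $n$-agent valuation yields a $2n$-agent valuation with the same chore share, matched one-for-one against the halved thresholds. In short, the hard part cannot be made to vanish: your scheme makes covering trivial and the ratio bound impossible, whereas the paper's scheme makes the ratio bound trivial and pays for it in the covering argument.
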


\begin{proof}
Let $r= \hat{r}_{2n}$. Then for every agent $i$ ($1 \le i \le 2n$), this $r$ determines a period $p_i$ as in Section~\ref{sec:ridge}, such that with these $p_i$ values the (integral) covering constraint holds. Each $p_i$ determines a sequence {$t_i$} of thresholds $\{t_i^1, t_i^2, t_i^3, \dots\}$ such that agent $i$ is not allowed to pick her $j$th item before round $t_i^j$. Importantly, these sequences of thresholds have the {\em domination property} that for every $i$, either for all $j \ge 3$ it holds that $t_{2i - 1}^j \ge t_{2i}^j$ (e.g., this happens if $2i-1$ is of class~1), or for all $j \ge 3$ it holds that $t_{2i - 1}^j \le t_{2i}^j$ (e.g., this happens if $2i$ is of class~2). Note that for $n \ge 4$, the number $2n$ of agents is such that class~0 is non-empty, and hence there is no $i$ for which $2i-1$ is of class~1 and $2i$ is of class~2.

Using these sequences of thresholds (that apply to a ridge picking order over $2n$ items), we set values for the thresholds $\tau_i^j$ for an infinite ridge picking order over $n$ items. Specifically, we set $\tau_i^j = \lceil \frac{1}{2}\min[t_{2i - 1}^j, t_{2i}^j] \rceil$. 
%\xhc{the upper letter $j$ disappears afterwards.} \ufc{I am not sure that I understand what you mean. There should not be a superscript $j$. Is it clearer now after I added $2n$ and $n$?} \xhc{Is $t_i$ representing a sequence of $\{t_i^{1},t_i^2,\dots,t_i^j,\dots\}$. This is not defined yet.} \ufc{Added a definition of $t_i$ above.} 
Observe that as the {$2n$} threshold sequences $(t_1, \ldots, t_{2n})$ come from a ridge picking order, the {$n$} threshold sequences $(\tau_1, \ldots, \tau_n)$ also allow for a prefix that is a ridge (namely, $\tau_i^1 \ge i$ and $\tau_i^2 \ge 2n - i + 1$). The domination property above, together with the fact that $\lceil \frac{1}{2}t_{2i - 1}^1 \rceil = \lceil \frac{1}{2} t_{2i}^1 \rceil$ and  $\lceil \frac{1}{2}t_{2i - 1}^2 \rceil = \lceil \frac{1}{2} t_{2i}^2 \rceil$,  implies that for every $i$, there is $i' \in \{2i - 1, 2i\}$ such that for every $j$, $\tau_i^j = \lceil \frac{1}{2}t_{i'}^{j}\rceil$.

We first show that $(\tau_1, \ldots, \tau_n)$ satisfy the covering constraint.  {For a round $j$ and an agent $i$, let $\#\tau_i^{\le j}$ denote the number of thresholds in sequence $\tau_i$ whose value is at most $j$.} For the sake of contradiction, assume that that $(\tau_1, \ldots, \tau_n)$ do not satisfy the covering constraint. Then there is an earliest round $j$ such that {$\sum_i \#\tau_i^{\le j} \le j - 1$}. Observe that $j > 2n$, because the threshold sequences $(\tau_1, \ldots, \tau_n)$ allow for a prefix that is a ridge. Consider now round $2j$ with respect to $(t_1, \ldots, t_{2n})$, {and let $\#t_i^{\le 2j}$ denote the number of thresholds in sequence $t_i$ whose value is at most $2j$. The fact that $2j$ is even and larger than $2n$ implies that for every $i\le n$, $2\#\tau_i^{\le j} \ge  \#t_{2i-1}^{\le 2j} + \#t_{2i}^{\le 2j}$. Hence $\sum_{i=1}^{2n} \#t_i^{\le 2j} \le 2\sum_{i=1}^n \#\tau_i^{\le j} \le 2j-2$, contradicting the assumption that the covering constraints hold for $(t_1, \ldots, t_{2n})$.} 
%By assumption, the covering constraint for this round holds. This means that there is some $i$ such that the number of thresholds associated with $\tau_i$ up to round $j$ is strictly less then half the sum of number of thresholds associated with $t_{2i-1}$ and $t_{2i}$ up to round $j$. Hence either $t_{2i-1}$ or $t_{2i}$ have more thresholds up to round $2j$ than $\tau_i$ has to to round $j$.  This contradicts the definition of $\tau_i$. 
%\xhc{This paragraph is not easy to understand. I suggest to prove it in another direction. Let $i'\in\{2i-1,2i\}$ depends on which class of $i$. For every threshold $t_{i'}^k\le 2j$, we have $\lceil\frac{1}{2}t_{i'}^k\rceil\le j$. Let $k$ be the largest possible value such that $t_{i'}^k\le 2j$. We have $t_i^k\le j$. Then we have $i$ can choose at least half the number.} \ufc{Edited the proof above. Is it clearer now?} \xhc{It is clear now.}

We now show that for every agent $i \le n$, the respective sequence $\tau_i$ of thresholds satisfies the constraint $F_i$ {(defined in Section~\ref{sec:ridge})}, when we substitute $r$ for $\rho$. 
%\xhc{Here is not clear. I don't find the the constraint $F_i$ is defined before. Even if it is defined, it should be presented in a more clear way.} \ufc{Reminded the reader were $F_i$ is defined.} 
Assume for the sake of contradiction that for some $i$, constraint $F_i$ does not hold. Namely, there is a disvaluation function $v_i$ (with $v_i(e_1) \le 1$, $v_i(e_{n+1}) \le \frac{1}{2}$ and $\sum_j v_i(j) \le n$) such that $\sum_{j} v_i(e_{\tau_i^j}) > r$. Recall that there is some $i' \in \{2i - 1, 2i\}$ such that for every $j$, $\tau_i^j = \lceil \frac{1}{2}t_{i'}^{j}\rceil$. Define a valuation function $v_{i'}$ satisfying $v_{i'}(e_{2j - 1}) = v_{i'}(e_{2j}) = v_i(e_j)$ for every $j$. This valuation function obeys the constraints for the chore share over $2n$ agents, namely $v_{i'}(e_1) \le 1$, $v_{i'}(e_{2n+1}) \le \frac{1}{2}$ and $\sum_j v_{i'}(j) \le 2n$. Moreover, $\sum_{j} v_{i'}(e_{t_i^j}) = \sum_{j} v_i(e_{\tau_i^j}) > r$, contradicting the assumption that $r= \hat{r}_{2n}$.
\end{proof}

\begin{corollary}
\label{cor:doublen}
For ridge picking orders, every $n \ge 4$ and every integer $k \ge 0$, it holds that $\hat{r}_{n} \le \hat{r}_{2^kn}$.
\end{corollary}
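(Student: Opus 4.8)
The plan is to prove Corollary~\ref{cor:doublen} by a straightforward induction on $k$, with Lemma~\ref{lem:doublen} supplying the entire inductive step. The base case $k=0$ is immediate, since $\hat{r}_n \le \hat{r}_{2^0 n} = \hat{r}_n$ holds as an equality. For the inductive step, I would assume the claim for $k-1$, namely $\hat{r}_n \le \hat{r}_{2^{k-1}n}$, and then apply Lemma~\ref{lem:doublen} to push one more doubling through.

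The one point that needs a moment of care is verifying that the hypothesis of Lemma~\ref{lem:doublen} remains valid at every stage of the iteration. That lemma asserts $\hat{r}_{n'} \le \hat{r}_{2n'}$ only for $n' \ge 4$. Since we assume $n \ge 4$, and since $2^j n$ is nondecreasing in $j$, we have $2^j n \ge n \ge 4$ for every $j \ge 0$; in particular each intermediate index $2^j n$ with $0 \le j \le k-1$ satisfies the $\ge 4$ condition. Hence I may instantiate Lemma~\ref{lem:doublen} with its ``$n$'' taken to be $2^{k-1}n$, obtaining $\hat{r}_{2^{k-1}n} \le \hat{r}_{2\cdot 2^{k-1}n} = \hat{r}_{2^k n}$, and combining this with the inductive hypothesis gives $\hat{r}_n \le \hat{r}_{2^{k-1}n} \le \hat{r}_{2^k n}$, which completes the step. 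Equivalently, and perhaps more transparently, I would simply telescope: applying Lemma~\ref{lem:doublen} once for each $j \in \{0,1,\dots,k-1\}$ yields the chain $\hat{r}_n = \hat{r}_{2^0 n} \le \hat{r}_{2^1 n} \le \cdots \le \hat{r}_{2^k n}$, and transitivity of $\le$ delivers the conclusion.

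I expect there to be essentially no obstacle here: the corollary is a purely mechanical iteration of Lemma~\ref{lem:doublen}, and all of the genuine work (the construction of the thresholds $\tau_i^j$ from the $t_i^j$, the preservation of the covering constraints under halving, and the preservation of each constraint $F_i$) has already been carried out in the proof of that lemma. The only thing to flag explicitly is the invariance of the $\ge 4$ hypothesis along the chain, which is why the statement is restricted to $n \ge 4$ and why no additional assumption on $k$ is needed.
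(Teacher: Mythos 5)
Your proposal is correct and matches the paper's proof exactly: the paper simply says to apply Lemma~\ref{lem:doublen} consecutively $k$ times, which is the telescoping chain you describe. Your additional check that each intermediate index $2^j n \ge 4$ keeps the lemma's hypothesis valid is a fine (if implicit in the paper) detail.
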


\begin{proof}
Apply Lemma~\ref{lem:doublen} consecutively $k$ times.
\end{proof}

\begin{lemma}
\label{lem:8r}
Let $n$ be divisible by~8. Then $\hat{r}_n \le \frac{8}{5}$.
\end{lemma}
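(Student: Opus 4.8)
The plan is to reduce the infinite family of instances (all $n$ divisible by $8$) to a single finite template on $8$ \emph{super agents}. Write $N = n/8$ and partition the agents into $8$ consecutive blocks $B_1,\dots,B_8$ of size $N$, with $B_k = \{(k-1)N+1,\dots,kN\}$. Group the items the same way: a \emph{super round} $\ell$ stands for the $N$ items $e_{(\ell-1)N+1},\dots,e_{\ell N}$. With $r=\tfrac{8}{5}$ the class boundaries of Section~\ref{sec:ridge} fall at $i = n/r = 5N$ and $i = 2n+1-2n/r = 6N+1$, so (up to a single boundary agent) $B_1,\dots,B_5$ are class~1, $B_6$ is class~0, and $B_7,B_8$ are class~2. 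To super agent $k$ I assign the period $P_k = \tfrac1N\max_{i\in B_k} p_i$, where $p_i$ is the minimal period forced on $i$ by $F_i$ at ratio $r$. Since $p_i=\tfrac{n-i}{r-1}$ is decreasing (class~1), constant (class~0), and $p_i=\tfrac{i-1}{2(r-1)}$ increasing (class~2), the maximum is attained at the first agent of the block, anywhere, or the last agent, respectively; in the large-$N$ limit this gives $P_k=\tfrac{5(9-k)}{3}$ for $k\le 5$, $P_6=5$, $P_7=\tfrac{35}{6}$, $P_8=\tfrac{20}{3}$.

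The lifting is the inverse of the compression in Lemma~\ref{lem:doublen}. Given any valid ridge picking order for the $8$ super agents that respects the periods $P_k$, I expand each super round into its $N$ items: if super agent $k$ picks super round $\ell$, the $N$ agents of $B_k$ receive $e_{(\ell-1)N+1},\dots,e_{\ell N}$, one each (increasing agent order in the first ridge half, decreasing in the second, a fixed order in the tail). Because a ridge over $8$ super agents places super agent $k$ in super positions $k$ and $17-k$, this puts agent $i\in B_k$ exactly at items $e_i$ and $e_{2n-i+1}$, so the lifted order is a genuine ridge picking order on the $n$ agents. Each agent of $B_k$ picks one item per super pick of its block, hence has effective period $NP_k$; as $NP_k\ge\max_{i\in B_k}p_i\ge p_i$ for every $i\in B_k$, the class-$F_i$ averaging bound (as in the proof of Lemma~\ref{lem:FracToSequence}) gives every agent a bundle of cost at most $r=\tfrac{8}{5}$ times its chore share. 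Thus the only thing left is to produce the template super order.

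Producing the template is precisely the requirement that the periods $P_k$ satisfy the integral covering constraint on $8$ super agents, and this is the main obstacle. By the expansion a scheduled super order automatically yields a scheduled order on all $n$ agents: a super order covering super rounds $1,\dots,s$ lifts to a valid assignment of all $sN$ items, since each satisfied super threshold $T_k^j$ contributes $N$ item slots at or before original round $(T_k^j-1)N+1$. The fractional covering constraint holds with room to spare: with the limiting periods, $\sum_{k=1}^8 \tfrac1{P_k} = \tfrac{1473}{1400} > 1$ (for finite $N$ the periods are slightly smaller, making the sum larger). The work is to convert this slack of $\tfrac{73}{1400}$ into an integral schedule, i.e.\ to exhibit an explicit periodic tail over the $8$ symbols whose threshold sequences cover every super round past the opening ridge. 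With only $8$ super agents and a periodic pattern this is a finite verification, but I expect it to require pinning down the tail carefully and possibly shaving a constant off one period (exactly as for $n=4$ in Proposition~\ref{pro:r4}), because the slack is small and the ceilings in the threshold sequences make the integral constraint strictly harder than the fractional one.

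Two loose ends remain. First, the formulas for $p_i$ carry $O(1)$ additive corrections and the super thresholds carry ceilings, so the gap between an agent's ridge item $e_i$ and its first tail item is $NP_k-O(N)$ rather than exactly $NP_k$; these lower-order terms are absorbed by the strict inequality $NP_k\ge\max_{i\in B_k}p_i$ and must be re-checked inside the $F_i$ bound. Second, the case $N=1$ (that is, $n=8$) degenerates, since super agents coincide with agents and $B_7$ then holds only a class-$0$ boundary agent; it is handled directly as the base ridge order on $8$ agents, while the argument above applies for all $N\ge 2$.
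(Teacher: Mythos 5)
Your setup — partitioning into $8$ blocks of size $N$, treating blocks as super agents with periods $P_1=\tfrac{40}{3}$, $P_2=\tfrac{35}{3}$, $P_3=10$, $P_4=\tfrac{25}{3}$, $P_5=\tfrac{20}{3}$, $P_6=5$, $P_7=\tfrac{35}{6}$, $P_8=\tfrac{20}{3}$, lifting a super order back to a ridge order on $n$ agents, and noting the fractional slack $\sum_k \tfrac{1}{P_k}=\tfrac{1473}{1400}>1$ — is exactly the paper's approach, down to the same numbers. But the proof has a genuine gap, and it is the one you flag yourself: you never exhibit the super picking order. You write that converting the fractional slack into an integral schedule ``is a finite verification'' that you ``expect'' to work, possibly after ``shaving a constant off one period.'' That expectation is precisely what cannot be taken on faith here. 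The paper itself demonstrates (Proposition~\ref{pro:r4}, the case $n=4$) that satisfying the fractional covering constraint does \emph{not} imply the integral covering constraints are satisfiable: there the fractional system solves at $r=\tfrac{10}{7}$, yet the covering constraint at $j=11$ fails and the achievable ratio degrades to $\tfrac{13}{9}$. So with only a $\tfrac{73}{1400}$ slack, the existence of a valid integral schedule for the $8$ super agents is a substantive claim, not a formality, and it is the entire content of the lemma once the (easy) lifting argument is in place.

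The paper closes this gap by explicit construction: it lists the threshold sequences of all eight super agents up to round~$50$, imposes the additional restriction that within the $40$-round periodic window each agent picks at most her rate times $40$ rounded down (so the window can repeat indefinitely), and writes out the sequence $abcdefghhg\,(fedcbahgfe\; dfghcebfag\; dhfecgfhbd\; egafchefgh)^{*}$, verifying round by round that no pick precedes its threshold. Without this (or some equivalent certified schedule), your argument establishes only that $\hat{r}_{8\ell}\le\tfrac{8}{5}$ \emph{modulo} a combinatorial existence claim that the paper shows can fail in closely analogous situations. One further small point: your lifting assigns each agent of a block one item per super pick in ``a fixed order in the tail,'' but to make the per-agent bound $F_i$ go through uniformly you need the paper's worst-case convention (each super agent's period is pinned to its most constrained member, with $\ell\to\infty$ absorbing the $O(1)$ corrections); you gesture at this with $NP_k\ge\max_{i\in B_k}p_i$, which is the right idea, but it only matters once the schedule actually exists.
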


\begin{proof}
Suppose that $n$ is divisible by~8, that is $n = 8\ell$. Partition the agents into~8 blocks of size $\ell$, named $\{a,b,c,d,e,f,g,h\}$. Each block can be thought of as a {\em super agent}. Likewise, after sorting chores from highest value to smallest value, the sequence of chores is also partitioned into blocks of size $\ell$, where each block is referred to as a {\em super item}. We shall design a picking order in which super agents pick super items. {We want this picking order for super agents to have the property that it can be lifted to a picking order with $\rho \le 1.6$ for the original instance. The lifting is done by giving a distinct item from the super item to every distinct agent from a the super agent that picked the item. For simplicity of the analysis (that is made possible by the fact that we take $\rho = 1.6$, whereas a smaller value of $\rho$ would probably suffice for a more complicated analysis), we assume a worst case scenario in which for every super agent, there is one agent within the super agent that always receives the item of highest disvalue (when allocating chores of a super item to the agents of the super agent). For this reason, we assume that $\ell$ tends to infinity, and take the period of the picking orders for the super agent to be that of the first agent within the super agent (even though later agents could have shorter periods).} For example, for super agent $a$ that takes super item~1, had we been in the situation that $n=8$ we would make the worst case assumption that item~1 has disvalue~1 and the disvalue left is~7. Then the period $p_a$ would satisfy $1 + \frac{7}{p_a} = \frac{8}{5}$, giving $p_a = \frac{35}{3}$. However, for $n = 8\ell$ for large $\ell$, the value taken by the first agent in the super agent is negligible. To accommodate for this, we take a longer period satisfying $1 + \frac{8}{p_a} = \frac{8}{5}$, namely, $p_a = \frac{40}{3}$.

The periods for our picking orders for every super agent are listed below. 

Super agents of class~1 (first pick super item $i$ and then start a period):

\begin{itemize}

\item $a$: $1$, $p_a = \frac{40}{3} < 14$. %Ratio (for 14) is $1 + \frac{8}{14} = 1 + \frac{4}{7}$.

\item $b$: $2$, $p_b = \frac{35}{3} < 12$. %Ratio $1 + \frac{7}{12}$.

\item $c$: $3$, $p_c = 10$. %Ratio $1 + \frac{6}{10} = 1 + \frac{3}{5}$.

\item $d$: $4$, $p_d = \frac{25}{3} < 9$. %Ratio $1 + \frac{5}{9}$.

\item $e$: $5$, $p_e = \frac{20}{3} < 7$. %Ratio $1 + \frac{4}{7}$.

\item $f$: $6$, $p_f = 5$. (Though super agent $f$ appears to be of class~1, the agents making up super agent $f$ are in fact of class~0.)

\end{itemize}

Super agents of class~2 (first pick super items $i$ and $16 - i + 1$ and then start a period):

\begin{itemize}

\item $g$: $7,10$, $p_g = \frac{35}{6} < 6$. %Ratio $1 + \frac{7}{2*6} = 1 + \frac{7}{12}$.

\item $h$: $8,9$, $p_h = \frac{20}{3} < 7$. %Ratio $1 + \frac{4}{7}$.

\end{itemize}

{We claim that with the above periods, the covering constraints hold. As a sanity check, we first verify that the fractional covering constraint holds. 
As the sum of rates is $\frac{1473}{1400} > 1.05$, this is indeed the case. We now design a picking order based on the given periods.}

Our picking order has a periodic subsequence of length 40, starting at round~11 (namely, after all periods begin, including those of class~2). We list below the thresholds (up to round~50) for each (super) agent, given their own respective periods. 
In our periodic subsequence we enforce the restriction that the number of picks of an agent is not larger than her rate times~40, rounded {down} to the nearest integer. %\ufc{What is rounded? The earliest round or the number of picks?}
This restriction is needed so that we can indeed repeat the periodic subsequence. The actual rounds in which picks are made appear in parenthesis.
%For example, the period of agent $g$ would allow $g$ to also pick at round~45, but $g$ does not get to pick an item at that round, because periodicity implies that $g$ will also pick at round~50 (as $g$ picked in round~10), and that would violate the bound on $p_g$.

\begin{itemize}

\item $a$: $1$, $p_a = \frac{40}{3}$. $15 (16), 28 (29), 41 (43).$ %55 (58), 68 (71)$.

\item $b$: $2$, $p_b = \frac{35}{3}$. $14 (15), 26 (27), 37 (39), 49.$ % (51), 61 (63), 72 (76)$.

\item $c$: $3$, $p_c = 10$. $13 (14), 23 (25), 33 (35), 43 (45).$ % 53 (56), 63 (65), 73 (77)$.

\item $d$: $4$, $p_d = \frac{25}{3}$. $13 (13), 21 (21), 29 (31), 38 (40), 46.$ %  (49), 54 (57), 63 (66), 71 (74), 79 (81)$.

\item $e$: $5$, $p_e = \frac{20}{3}$. $12 (12), 19 (20), 25 (26), 32 (34), 39 (41), 45 (47).$  % 52 (55), 59 (62)$, $65 (69), 72 (77), 79 ({\bf 82})$.

\item $f$: $6$, $p_f = 5$. $11 (11), 16 (19), 21 (22), 26 (28), 31 (33), 36 (37), 41 (44), 46 (48)$. % $51 (53), 56 (59), 61 (64), 66 (70), 71 (75), 76 (79)$.

\item $g$: $7,10$, $p_g = \frac{35}{6}$. $16 (18), 22 (23), 28 (30), 34 (36), 40 (42), 45 (49).$ % 51 ( 54)$, $57 (61), 63 (67), 69 (72), 75 (78), 80 (83)$.

\item $h$: $8,9$, $p_h = \frac{20}{3}$. $16 (17), 23 (24), 29 (32), 36 (38), 43 (46), 49 (50).$ % 56 (60)$, $63 (68), 69 (73), 76 (80)$.

\end{itemize}

The agents picking in rounds~1 up to~50 are listed below, where the suffix of length~40 repeats itself indefinitely.

{\bf abcdefghhg $($fedcbahgfe dfghcebfag dhfecgfhbd egafchefgh$)^*$}

\end{proof}

\begin{theorem}
\label{thm:85}
For every $n$, $\hat{r}_n \le \frac{8}{5}$.
\end{theorem}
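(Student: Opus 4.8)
The plan is to combine the two structural lemmas already established, namely Corollary~\ref{cor:doublen} and Lemma~\ref{lem:8r}, to bound $\hat{r}_n$ uniformly in $n$. The key observation is that these two lemmas together cover all $n$: Lemma~\ref{lem:8r} handles every $n$ divisible by $8$ directly, while Corollary~\ref{cor:doublen} lets me climb from an arbitrary $n$ up to a multiple of $8$ without increasing the ratio. Concretely, for any given $n \ge 4$, I would choose an integer $k \ge 0$ large enough that $2^k n$ is divisible by $8$; this is always possible since $2^3 n = 8n$ is divisible by $8$ (so $k = 3$ always works, though a smaller $k$ may suffice depending on the $2$-adic valuation of $n$). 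By Corollary~\ref{cor:doublen}, $\hat{r}_n \le \hat{r}_{2^k n}$, and by Lemma~\ref{lem:8r} applied to the multiple of $8$, $\hat{r}_{2^k n} \le \frac{8}{5}$. Chaining these gives $\hat{r}_n \le \frac{8}{5}$.

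First I would dispose of the small cases $n \in \{1,2,3\}$ that fall outside the hypothesis $n \ge 4$ of Lemma~\ref{lem:doublen}. For these, the explicit ridge picking orders computed in Section~\ref{sec:smalln} give approximation ratios $\hat{r}_2 = \frac{4}{3}$ and $\hat{r}_3 = \frac{7}{5}$ (and $n=1$ is trivial), all of which are comfortably below $\frac{8}{5}$, so the claimed bound holds in these cases by direct inspection. Then for $n \ge 4$, I would carry out the doubling argument described above, being explicit that taking $k = 3$ yields $2^k n = 8n$, which is divisible by $8$, so Lemma~\ref{lem:8r} applies to $2^k n$.

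The main step requiring care is simply verifying that the two cited results apply to the value $2^k n$ I select: Corollary~\ref{cor:doublen} requires $n \ge 4$ (which I have arranged), and Lemma~\ref{lem:8r} requires divisibility by $8$ (which $8n$ satisfies). There is no genuine obstacle here, since all the hard technical work—the construction of the lifted picking order for super agents, the verification of the covering constraints, and the proof that the doubling operation preserves both the covering constraints and each constraint $F_i$—has already been done inside the proofs of Lemma~\ref{lem:doublen} and Lemma~\ref{lem:8r}. The proof of Theorem~\ref{thm:85} is therefore a short bookkeeping argument that stitches these pieces together. If anything, the only subtlety worth flagging is that one could optimize $k$ (using the largest power of $2$ dividing $n$ to reduce $k$), but this optimization is unnecessary for the stated bound, so I would not pursue it.

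\begin{proof}
For $n \le 3$ the claim follows from the explicit ridge picking orders of Section~\ref{sec:smalln}, which give $\hat{r}_2 = \frac{4}{3} \le \frac{8}{5}$ and $\hat{r}_3 = \frac{7}{5} \le \frac{8}{5}$ (the case $n=1$ being trivial). Now fix any $n \ge 4$, and take $k = 3$, so that $2^k n = 8n$ is divisible by~8. By Lemma~\ref{lem:8r} applied to $8n$, we have $\hat{r}_{8n} \le \frac{8}{5}$. By Corollary~\ref{cor:doublen} (with this value of $k$), we have $\hat{r}_n \le \hat{r}_{2^k n} = \hat{r}_{8n}$. Combining these two inequalities yields $\hat{r}_n \le \frac{8}{5}$, as desired.
\end{proof}
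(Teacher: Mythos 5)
Your proposal is correct and follows essentially the same route as the paper's own proof: handle small $n$ via the explicit ridge picking orders of Section~\ref{sec:smalln}, then for $n \ge 4$ chain Corollary~\ref{cor:doublen} (giving $\hat{r}_n \le \hat{r}_{8n}$) with Lemma~\ref{lem:8r} (giving $\hat{r}_{8n} \le \frac{8}{5}$). The only cosmetic difference is that the paper cites Section~\ref{sec:smalln} for all $n \le 4$ while you cover $n=4$ by the doubling argument instead; both are valid.
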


\begin{proof}
The case of $n \le 4$ is addressed in Section~\ref{sec:smalln}. For $n \ge 4$, Corollary~\ref{cor:doublen} implies that $\hat{r}_n \le \hat{r}_{8n}$, and Lemma~\ref{lem:8r} implies that $\hat{r}_{8n} \le \frac{8}{5}$.
\end{proof}

%\begin{remark}
%One may try proving a Lemma such as Lemma~\ref{lem:8r} for $n$ divisible by a higher power of~2, and then presumably the approximation ratio would improve, translating also into a better approximation ratio in Theorem~\ref{thm:85}. 
%\end{remark}
}

\subsection{Computer assisted analysis of approximation ratio}

%\ufc{Xin, I edited this section. Please check. I left questions for you in the last paragraph. }

Our proof of Theorem~\ref{thm:85} is based on designing picking orders for~8 super agents (Lemma~\ref{lem:8r}). The value of~8 was chosen as it is sufficiently large so as to give good bounds (a ratio of $\frac{8}{5}$) that improve over previously known results, yet not too large, and thus the analysis could be completed and verified by hand. However, it is clear that as the number of super agents grows, the bounds will improve further. In this section we use computer assisted analysis to explore to what extent they can be improved. We find that the best approximation ratio that our picking orders can give compared to the chore share is somewhere between 1.542 and 1.543.  We describe here how we performed the computer assisted analysis, and report the results that we got. We encourage interested readers to perform independent confirmation of our results, by using the principles described here to write their own code and run it. 
%For each super agent, we analyze the worst one of the super agent to get the upper bound. When the number of super agents is small, the analysis suffers more on the accuracy. On the other hand, 
%The ratio 1.6 in Section~\ref{sec:85} is not tight for our methods. 

{The proof of Lemma~\ref{lem:8r} designs a covering sequence that is ultimately periodic. In contrast, the computer program tests whether for a given target approximation ratio $\rho$, there is a sequence that covers the first $t$ super chores, where $t$ is chosen based on the number of super agents and $\rho$. We prove that if the picking order can cover first $t$ super chores for our choice of $t$, then it can be extended to cover any number of super chores (while maintaining an approximation ratio no worse than $\rho$). 

Recall that the target value $\rho$ dictates a period $p_i$ for each agent $i$. Among other constraints, these periods need to satisfy the fractional covering constraint, $\sum_{i} \frac{1}{p_i}\ge 1$. We define the covering ratio as $r=\sum_{i} \frac{1}{p_i}$. The larger $r$ is, the more slackness we have in the fractional covering constraint.

Given a target ratio $\rho$, let $P(i,k)$ be the total number of chores that agent $i$ can pick among the first $k$ chores (without violating $\rho$). A picking order exists if the (integer) {\em covering constraint} $\sum_{i}P(i,k)\ge k$ holds for every $k$. 

\begin{proposition}\label{prop:cover}
Suppose that there are $n$ super agents and target ratio $\rho$ for which the associated periods satisfy the fractional covering constraint with covering ratio $r>1$. If the (integer) covering constraints hold for all $k \le 2n+\frac{n}{r-1}$ chores, then there is a picking order with approximation ratio at most $\rho$. 
\end{proposition}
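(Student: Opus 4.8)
The plan is to prove that the single integer covering constraint $\sum_i P(i,k) \ge k$ is satisfied \emph{automatically} for every round $k > 2n + \frac{n}{r-1}$, so that verifying it only up to this threshold certifies it for all $k$. Feasibility of all covering constraints then yields a schedulable picking order (by the characterization recalled just before the proposition), and that order has ratio at most $\rho$ because each period $p_i$ was chosen precisely so that the worst-case bundle cost $\sum_j v_i(e_{t_i^j})$ obeys the constraint $F_i$, i.e.\ is at most $\rho\cdot CS_i$.

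First I would obtain closed forms for $P(i,k)$ that are valid once $k \ge 2n$ (so that every agent has already received her ridge item or items). From the threshold lists of Section~\ref{sec:ridge} these are $P(i,k) = \lfloor k/p_i\rfloor$ for class~0, $P(i,k) = 1 + \lfloor (k-i)/p_i\rfloor$ for class~1, and $P(i,k) = 2 + \lfloor (k-2n+i-1)/p_i\rfloor$ for class~2. Using $\lfloor x\rfloor \ge x-1$ together with $i/p_i \le 1$ on class~1 (which follows from $i \le n/\rho$ and $p_i = \frac{n-i}{\rho-1}$), this yields $P(i,k) \ge k/p_i - 1$ for classes~0 and~1, and $P(i,k) \ge k/p_i + 1 - \frac{2n-i+1}{p_i}$ for class~2.

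The heart of the argument is then a summation. Writing $C_0,C_1,C_2$ for the three classes and $r = \sum_i 1/p_i$, summing the per-agent bounds gives $\sum_i P(i,k) \ge rk - (|C_0|+|C_1|-|C_2|) - \sum_{i\in C_2}\frac{2n-i+1}{p_i}$. The key inequality is $\frac{2n-i+1}{p_i} \le 2$ for every class~2 agent: since $p_i = \frac{i-1}{2(\rho-1)}$ there, $\frac{2n-i+1}{p_i} = 2(\rho-1)\frac{2n-i+1}{i-1}$ is decreasing in $i$ and attains the value $2$ at the left endpoint $i = 2n+1-\frac{2n}{\rho}$ of class~2. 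Hence $\sum_{i\in C_2}\frac{2n-i+1}{p_i} \le 2|C_2|$, and since $|C_0|+|C_1|+|C_2| = n$ the whole correction term collapses to exactly $n$, giving the clean unconditional bound $\sum_i P(i,k) \ge rk - n$ for all $k \ge 2n$.

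To finish, note this bound gives $\sum_i P(i,k) - k \ge (r-1)k - n$ for $k \ge 2n$, which is nonnegative as soon as $k \ge \frac{n}{r-1}$ (this is where $r>1$ is used). Thus any integer $k > 2n + \frac{n}{r-1}$ satisfies both $k > 2n$ and $k > \frac{n}{r-1}$, so its covering constraint holds unconditionally, while every $k \le 2n + \frac{n}{r-1}$ is covered by hypothesis; together all integer covering constraints hold. I expect the main obstacle to be the summation above---specifically verifying that the correction term equals exactly $n$, which rests entirely on the identity that $\frac{2n-i+1}{p_i}$ attains its maximum value $2$ at the class~1/class~2 boundary. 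Secondary care is needed to confirm that the closed forms hold for all $k \ge 2n$ (when the periodic part of a threshold list has not yet begun, the relevant floor is $0$ and the inequalities still hold), and, in the super-agent regime where slightly longer periods are used, to recheck that $i/p_i \le 1$ and $\frac{2n-i+1}{p_i} \le 2$ persist---which they do, since longer periods only shrink these ratios.
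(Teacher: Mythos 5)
Your high-level strategy is the same as the paper's: lower-bound each $P(i,k)$ by a linear function of $k$ with slope $1/p_i$, sum to get slope $r$, and use $r>1$ to make the covering constraint automatic once $k$ is past the threshold, so that only finitely many constraints need checking. Where you differ is the per-agent estimate. The paper uses a single generic bound, $P(i,k)\ge 2+\left\lfloor\frac{k-2n}{p_i}\right\rfloor\ge 1+\frac{k-2n}{p_i}$, which relies only on the fact (guaranteed by the ridge constraints) that every agent's period begins at some round among the first $2n$; summing gives $\sum_i P(i,k)\ge n+(k-2n)r$, which is at least $k$ exactly when $k\ge 2n+\frac{n}{r-1}$. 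Your class-by-class accounting gives the tighter bound $\sum_i P(i,k)\ge rk-n$ and hence the smaller threshold $\max\{2n,\frac{n}{r-1}\}$, which would be a genuine (if minor) strengthening---but it is also where the gap sits.

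The gap is in the class-1 step and in your closing caveat. Your bound for class-1 agents needs $i/p_i\le 1$, justified from the Section~\ref{sec:ridge} formula $p_i=\frac{n-i}{\rho-1}$ together with $i\le n/\rho$, plus the claim that for super agents ``longer periods only shrink these ratios.'' That claim is false for the periods the proposition is actually about. In the super-agent constructions (see Lemma~\ref{lem:8r}, which the computer program generalizes), class-1 threshold structure is used beyond the range $i\le n/\rho$: super agent $f$ has first threshold at round $i=6$ followed by a period, but its period is $p_f=5<i$ (a class-0 period $n/\rho$ attached to class-1 thresholds), so $i/p_i=\frac{6}{5}>1$ and your chain $P(i,k)=1+\lfloor(k-i)/p_i\rfloor\ge (k-i)/p_i\ge k/p_i-1$ breaks. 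This is not merely a presentational issue: for class-1 structure with $p_i<i$ the inequality $P(i,k)\ge k/p_i-1$ can genuinely fail (take $i=6$, $p_i=5.7$, $k=23$: then $P(i,k)=3$ while $k/p_i-1\approx 3.035$). For agent $f$ itself the inequality happens to survive only because $p_f$ and $k$ are integers, which your argument never invokes. Whether the extra per-agent loss from such boundary agents can be absorbed by the slack between your threshold and the stated one depends on how many of them the program creates and on their $i/p_i$ ratios, an analysis you do not carry out. The paper's proof is immune to all of this precisely because its per-agent bound never compares $p_i$ with $i$ or with any class formula; replacing your class-specific estimates by that generic bound repairs the proof and lands exactly on the stated threshold $2n+\frac{n}{r-1}$.
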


\begin{proof}
The first $2n$ chores are covered by default. Each agents start counting her period at some chore among the first $2n$ chores. Thus, for each agent $i$, we have $$P(i,k)\ge 2+\left\lfloor\frac{k-2n}{p_i}\right\rfloor\ge 1+\frac{k-2n}{p_i}.$$

When $k\ge 2n+\frac{n}{r-1}$, we have 
\begin{equation*}
    \begin{split}
        \sum_{i}P(i,k)&\ge n+\sum_{i}\frac{k-2n}{p_i}\\
        &= n+(k-2n)\cdot r\\
        &= k-n+(k-2n)\cdot (r-1)\\
        &\ge k-n+\frac{n}{r-1}\cdot (r-1)\\
        &=k
    \end{split}
\end{equation*}
%$$\sum_{i}P(i,k)\ge n+\sum_{i}\frac{k-2n}{p_i}\ge n+(k-2n)\cdot r$$

This means that, when $k\ge 2n+\frac{n}{r-1}$, the associated covering constraint is satisfied.
Hence it suffices to check the first $2n+\frac{n}{r-1}$ covering constraints.
\end{proof}

Proposition \ref{prop:cover} implies that $t = 2n+\frac{n}{r-1}$ suffices. Consequently, we design the following algorithm. %1. Choose a large $n$ and guess a target ratio for $\hat{r}_n$. 2. Compute the periods of each super agent and the covering ratio $r$. 3. Test the constraint $\sum_{i}P(i,k)\ge k$ for all $k\le 2n+\frac{n}{r-1}$.
}

\begin{algorithm}%[H]
\KwIn{The number of super agents $n$, a target approximation ratio $\rho$}
\KwOut{Pass or Fail}
\BlankLine
	%Suppose the \mmax of
    Based on $\rho$, compute the period $p_i$ of each super agent $i$\;
    Compute the covering ratio $r=\sum\frac{1}{p_i}$\;
    \For{$k\rightarrow 2n$ to $ 2n+\frac{n}{r-1}$}
    {
     Test the covering constraint $\sum_{i}P(i,k)\ge k$\;
     \If{Test Failed}{\Return {Fail}}
     }
    
	\Return {Pass}
\caption{Ratio test}\label{alg-test}	
\end{algorithm}

We ran the algorithm with $n=2^{14}=16384$ super agents and $\rho=1.543$. This gave values of $r\simeq 1.03448$ and $t\simeq 30.9991\cdot n \le 507890$. The result was {\em Pass}.  To verify that increasing the number of super agents will not significantly improve the value of $\rho$, we ran the algorithm with $n=2^{14}=16384$ agents (not super agents) and $\rho=1.542$. 
%\ufc{Xin, how large were $r$ and $t$ in this case? What was the smallest value of $k$ for which the covering constraint was violated?}
This gave $r\simeq 1.03277$ and $t\simeq 32.5166\cdot n\le 532752$. The result was {\em Fail} (at $k=42465$). 
%But when $n$ is small, normal agents are easier than super agents to pass the test. 

\section{Envy in picking sequences}
\label{sec:envy}
%\xhc{I think the picking sequence in this section is different from other sections. We should use different name. Here are two difference: 1). Other sections specify which agent $i$ would get item $e_j$. In this section, agent $i$ have the freedom to choose the item. 2). Other sections, we allocate the worst choice first. Here, agent choose the item in a greedy manner. }
%\ufc{I rewrote Section~\ref{sec:IDO}, and introduced the terminology of {\em picking order} in Section~\ref{sec:order}. I also made changes in other part to be compatible with this new terminology. Is it clearer now?}\xhc{I checked Section~\ref{sec:order}. I think it is OK now.}

In this section we consider picking sequences for agents with additive valuations, and items might be either goods or chores.

A picking sequence $\pi$ specifies for every label $i$ the set of {\em picking rounds} $R_i$ in which the agent who gets the label $i$ picks items. 
A picking sequence is an allocation mechanism. As such, it induces a game between the agents. A strategy $s_i$ for agent $i$ in such a game specifies which item to pick in each of her picking rounds, given the history of picks in all rounds up to that round. A picking sequence $\pi$ together with the vector $(s_1, \ldots, s_n)$ of strategies for the agents determines the resulting allocation {$B_1, \ldots, B_n$}. The value of this allocation to agent $i$ is $v_i(B_i)$. For an agent $i$, the choice of which strategy to use may depend on her valuation function $v_i$, on her beliefs concerning which strategies other agents will be using, and possibly on other factors (e.g., computational constraints, cognitive biases, random noise, etc.). In particular, as other agents are free to choose their own strategies, an agent $i$ who knows only $s_i$ and $\pi$ does not know which bundle she will get. Likewise, an agent $i$ that knows only $\pi$ and $v_i$ does not know which strategy $s_i$ will give her the best bundle.

We say that agent $i$ is {\em risk averse} if her goal is to maximize her {\em guaranteed} utility (equivalently, minimize her worst possible disutility, in the case of chores). That is, for every strategy $s_i$ the agent assumes that the vector of strategies that the other agents pick is such that it minimizes the utility of $i$, given that $i$ uses strategy $s_i$. Under this assumption, the agent wishes to pick a strategy $s_i$ that maximizes her utility. Luckily, for picking sequences, the optimal risk averse strategy is straightforward (when agents have additive valuations). It is the greedy strategy that in every one of the agent's picking rounds picks the most desirable item among those remaining (breaking ties arbitrarily). The value guaranteed to the agent $i$ who uses the greedy strategy in a picking sequence in which $R_i = \{r_1, r_2, \ldots \}$ is the set of her picking rounds is that of the bundle that contains those items of indices $\{r_1, r_2, \ldots \}$, if items are sorted in order of decreasing desirability (increasing disutility) for the agent.

In this section we consider potential sources of envy among agents, and ways of addressing, or at least partly addressing, such concerns. 

We first classify types of envy that we will be considering. We shall consider envy of an agent $p$ towards an agent $q$. Our classification involves two attributes. The first attribute refers to the entitlement, stating whether $p$ and $q$ have equal entitlement, and if not, which of the two has higher entitlement. The second attribute refers to the timing of the envy. Ex-post envy means that after the allocation is complete, agent $p$ prefers the bundle $B_q$ received by $q$ over her own bundle $B_p$. Ex-ante envy refers to envy before the allocation mechanism is actually run. As it involves the beliefs of the agent concerning what would happen when the mechanism is run, we need to model this belief. In this paper we assume that the agent is risk averse, aiming to maximize her guaranteed utility. (This assumption is made when the allocation mechanism is deterministic. We shall later extend this definition to randomized allocation mechanisms.) For picking sequences, this means that agent $p$ envies agent $q$ if the value (under valuation function $v_p$) guaranteed by the greedy strategy on the set $R_q$ of picking rounds is strictly better than the value guaranteed on the set $R_p$. 

Sometimes, when given a picking sequence $\pi$, we shall consider allocation mechanisms that include a preliminary phase in which it is determined for each agent which label of the picking sequence she gets. In this case, the allocation mechanism includes two phases (the preliminary one, and then the item picking phase). Ex-ante considerations refer to the stage before the first phase. Likewise, risk aversion refers to the complete strategy over both phases.

\subsection{Envy for agents with equal entitlement}

{\bf Ex-post envy is unavoidable.}
There are allocation instances with equal entitlement that do not have any envy-free allocation that allocates all items (e.g., an instance with only one item). Hence we shall not be concerned with ex post envy for our picking sequences (in settings with equal entitlement). 
%For every picking sequence $\pi$, there is a choice of valuations for the agents such that there is ex-post envy (regardless of the picking strategy for the agents). For the case of goods, this happens when there is one good that all agents view as more valuable then all the rest of the goods combined (and then everyone envies the agent who picked that good). For the case of chores, this happens when there is one chore that all agents view as more costly then all the rest of the chores combined (and the agent who picked that chore envies all other agents).

{\bf Ex-ante envy is avoidable.} 
Any picking sequence $\pi$ can be made ex-ante envy free by adding a preliminary phase in which one picks a random permutation over the names of the agents, so that each agent has probability $\frac{1}{n}$ of getting each of the $n$ labels associated with the picking sequence. Using this preliminary phase, risk averse agents do not envy other agents ex-ante, and also receive in expectation a bundle of value not worse than their proportional share (regardless of the strategies of other agents). 

{\bf A picking sequence for labels.}
In a picking sequence $\pi$ there are $n$ labels, each associated with a single agent.
%and in a run of the picking order each agent gets one of the labels. To achieve ex-ante envy-freeness, we  proposed assigning the labels uniformly at random to the agents. 
Here we propose a preliminary phase that we refer to as {\em random picking sequence for labels}. First select a uniformly random permutation $\sigma$ over the agents. 
%(For efficiency purposes, the uniformly random permutation can be replaced by selecting an arbitrary order, and then selecting at random one of its $n$ cyclic permutations.) 
Thereafter, each agent in her turn (according to $\sigma$) selects a label among those labels that are still available. This completes the preliminary phase. Thereafter, in the item picking phase, the picking sequence $\pi$ is {used}, %\xhc{? It seems something wrong here.} \ufc{changed the word "run" to "used", and added "in $\pi$". Was that the problem, or is there some other problem?} \xhc{Yes. That is the problem.} 
where every agent selects items in the rounds associated with her label {in $\pi$}. 

Let us present an example for the use of a picking sequence for labels. Suppose that there are two agents ($A$ and $B$), three indivisible chores ($e_1, e_2, e_3$), and the picking sequence (for items) $(1,1,2)$ (the agent labeled one first picks two chores, and the agent labeled~2 gets the remaining chore). Suppose that the additive valuations (disutilities) over the chores are $(6,4,4)$ for agent $A$ and $(6,2,2)$ for agent $B$. Then whichever agent gets label~1, that agent will pick items $e_2$ and $e_3$, and the other agent will get item $e_1$.

If we just use $\pi$ arbitrarily, we might give agent $A$ label~1, and then agent $A$ suffers a disutility of~8, whereas agent $B$ suffers a disutility of~6.

If we decide at random which agent gets label~1 (eliminating ex-ante envy), then in expectation agent $A$ gets a disutility of~7, whereas agent $B$ gets a disutility of~5. Hence the ex-ante disutility levels decrease (to the proportional share), whereas the ex-post guarantees are not harmed.

If we use a random picking sequence for labels, then a risk averse agent $A$ will pick label~2 and a risk averse agent will $B$ pick label~1. (The two agents have the same ordinal preference over items -- the instance is IDO. However, they have different cardinal preferences, and this causes them to prefer different labels.) Then, using $\pi$, agent $A$ gets a disutility of~6, whereas agent $B$ gets a disutility of~4. Hence the ex-ante disutility is even smaller than the proportional share, and likewise for the ex-post utility.

\begin{proposition}
\label{pro:equal}
Let $\pi$ be a picking sequence for allocation of $m$ indivisible chores to $n$ agents who have additive disvaluation functions. If we employ a preliminary phase of a random picking sequence for labels, then the resulting mechanism has the following properties:

\begin{enumerate}
    \item Every risk averse agent has no ex-ante envy towards any other agent.
    \item The expected disvalue of the bundle received by a risk averse agent is no worse than her proportional share, regardless of the strategies of all other agents.
    \item There are IDO instances such that ex-post every agent gets a bundle of value strictly better than her proportional share.
    
    %\item Suppose that for some agent $p$ there is some other agent $q$ such that:
    %\begin{enumerate}
    %    \item Both $p$ and $q$ are risk averse.
    %    \item In $\pi$ there are two sets of picking rounds $R_i$ and $R_j$ (equivalently, labels $i$ and $j$) so that $p$ strictly prefers $R_i$ over $R_j$ and $q$ strictly prefers $R_j$ over $R_i$. (Note that this condition may hold even for IDO instances.)
    %\end{enumerate}
    %Then the expected value (disutility) of the bundle received by a risk averse agent is strictly better than her proportional share, regardless of the picking strategies of all agents other than $p$ and $q$.
    % NOT TRUE: additive goods, three agents, r=(2,2,1), p=(2,2,1) q=(2,0,1), and r picks the second item if r is first.
\end{enumerate}
\end{proposition}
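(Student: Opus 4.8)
The plan is to treat the three claims separately, since parts~2 and~3 are concrete while part~1 is the conceptual core. For every label $\ell$ write $g_p(\ell)$ for the disvalue that the greedy strategy guarantees agent $p$ on the picking rounds $R_\ell$ associated with $\ell$. Because the guaranteed bundle for $R_\ell=\{r_1,r_2,\dots\}$ consists of the items of $p$-rank $r_1,r_2,\dots$ (items sorted by increasing disvalue) and the labels partition all $m$ rounds, we get the basic identity $\sum_\ell g_p(\ell)=v_p(\items)$. For part~2 I would argue as follows. Since the bijection between agents and pickers is drawn by a uniformly random permutation, agent $p$'s turn in the label-picking phase falls at a position $t$ that is uniform on $\{1,\dots,n\}$, independently of what the other agents do. When $p$ picks at position $t$ at most $t-1$ labels are already gone, so at least one available label has $g_p$-value no larger than the $t$-th smallest value of $g_p(\cdot)$; being greedy, $p$ secures a label with $g_p(L_p)$ at most this $t$-th smallest value. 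Averaging over the uniform $t$ gives $E[g_p(L_p)]\le \frac1n\sum_\ell g_p(\ell)=\frac1n v_p(\items)$, the proportional share, and the realized bundle disvalue is at most the guaranteed value $g_p(L_p)$, so the bound holds ex-ante regardless of the other agents' strategies.

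For part~3 I would generalize the two-agent example preceding the proposition. The idea is to produce an IDO instance in which the agents have complementary cardinal disvaluations, arranged so that the $n$ agents have \emph{distinct} top labels: agent $i$ strictly prefers a label $\ell_i$, and the $\ell_i$ are pairwise different. When top choices are distinct, each agent, on her turn, still finds her top label untouched (earlier agents took only their own distinct tops), so the label-picking phase outputs the perfect matching $i\mapsto\ell_i$ in \emph{every} realization of the permutation; the outcome is therefore order-independent. Since for agent $i$ the labels are not all equivalent, her minimum value $g_i(\ell_i)$ is strictly below the average $\frac1n\sum_\ell g_i(\ell)$, i.e.\ strictly below her proportional share. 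Constructing such a profile reduces to choosing, for each agent, an increasing round-rank weighting that makes one designated label cheapest, which is possible for any picking sequence whose labels are not all of identical shape.

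Part~1 is the main obstacle. One direction is immediate and holds in every realized order $\sigma$: because $p$ is greedy, $L_p$ is her $g_p$-best label among those still available, so any label taken by an agent who picks \emph{after} $p$ was available to $p$ and is $g_p$-weakly worse than $L_p$; hence $p$ never envies a later picker. The remaining difficulty is envy toward agents who pick earlier, where one must exploit the symmetry of the uniformly random order. Here I would invoke the measure-preserving involution on orders that swaps $p$ and $q$, reducing the ex-ante comparison to a pairwise inequality between $p$'s guaranteed value and the label $q$ chooses evaluated under $v_p$, using the two bounds $g_p(L_p(\sigma))=\min_{\text{avail to }p}g_p$ and $g_p(L_q(\sigma))\ge\min_{\text{avail to }q}g_p$.

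The hard part is that this reduction does not close pointwise: swapping $p$ and $q$ alters one early pick, and its effect cascades through all intermediate greedy pickers, so that for an unrestricted unit-demand profile (plain random serial dictatorship over labels) ex-ante envy can genuinely arise. What rescues the claim is the special structure of these guaranteed-value ``valuations'': by the IDO assumption every agent ranks labels through one common increasing round-rank scale, so a label whose rounds are cheap for one agent is cheap for all, which forbids precisely the preference reversals that would create envy. I would therefore isolate the core as a coupling/monotonicity lemma --- that running a fixed suffix of greedy pickers on two sets that differ by a single element preserves, in the relevant $g_p$-sense, the advantage gained by letting $p$ rather than $q$ move first --- and prove it by tracking the single-element discrepancy between the two coupled runs. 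Verifying this lemma, and confirming that the common round-rank structure provides the comparability it needs (so that the argument separates cleanly from the failure in the unrestricted case), is where essentially all of the work for part~1 lies.
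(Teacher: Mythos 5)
Your part~2 is sound and is essentially the paper's own argument (the paper phrases the same bound as stochastic dominance of the greedy label-picker's label distribution over the uniform distribution; your position-averaging yields the identical inequality). The genuine gap is in part~1, which you yourself flag as carrying ``essentially all of the work''---and which you do not finish. The paper's proof of part~1 is a two-line symmetry argument, and the reason it suffices is definitional: ex-ante envy is evaluated \emph{before the preliminary phase}, and risk aversion refers to the guaranteed utility of the complete two-phase strategy. Since the permutation $\sigma$ is uniform over all agents, every agent occupies a symmetric role at that point in time, so the guaranteed utility an agent attributes to her own position equals what she attributes to any other agent's position; hence no envy. You instead set out to prove a per-realization (conditional on $\sigma$) statement, which is strictly stronger than what the definition demands, and you correctly discover that the swap-involution argument does not close pointwise. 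You then try to rescue it with the claim that under IDO ``a label whose rounds are cheap for one agent is cheap for all,'' which would forbid preference reversals over labels. That claim is false, and it is refuted by the very example you reuse in part~3: with $\pi=(1,1,2)$ and disvaluations $(6,4,4)$ and $(6,2,2)$, agent $A$ strictly prefers label~2 (guaranteed disvalue $6<8$) while agent $B$ strictly prefers label~1 ($4<6$). IDO fixes a common ordinal ranking of \emph{items}, not of \emph{labels}: sums of item disvalues over different sets of rounds can compare oppositely for different agents. So your proposed coupling/monotonicity lemma rests on a false premise and is in any case left unproven; part~1 is not established.

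Part~3 also has a defect, though a more repairable one. Your logical skeleton (distinct strict top labels make the label-matching order-independent, and a strict unique minimum lies strictly below the average, i.e., below the proportional share) is fine, but you never exhibit an instance, and your claim that such a profile exists ``for any picking sequence whose labels are not all of identical shape'' is false: for $\pi=(1,2)$ with $n=m=2$ the two labels have different shapes, yet under IDO every agent weakly prefers the label that picks in round~2 (its guaranteed item is the cheaper one), so distinct strict tops are impossible---and indeed property~3 itself fails for that $\pi$. This shows the statement must be read, as the paper reads it, as an existence claim over the picking sequence as well as the instance. The paper closes this part with an explicit construction: $n+1$ chores, the first $n-1$ of disvalue~$3$ for everybody, the last two of disvalue~$2$ for $n-1$ of the agents and disvalue~$1$ for one agent, under the picking order in which the first label picks twice and every other label picks once; note that this instance does not even need distinct tops, only that the special agent's preferred label is strictly avoided by everyone else.
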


\begin{proof}
We address below the three properties.

\begin{enumerate}
    \item Ex-ante (before the preliminary phase) the roles of all agents are symmetric, and hence a risk averse agent has no ex-ante envy. (An agent $i$ that is not risk averse according to our definitions might have envy. For example, $i$ might believe that some other agent $j$ has the same valuation function as $i$ does, and that all other agents intend to use strategies that give $j$ the best bundle possible. An agent $i$ with such beliefs envies $j$, in the sense that she would like to switch identities with $j$. But a risk averse $i$ would not care about switching identities with $j$, because she believes that regardless of her identity, the strategies of other agents are those that will minimize her utility.)
    
    \item Fix a risk averse agent $i$. For each label $j$, let $v_i^j$ denote the value that the greedy picking strategy guarantees to agent $i$ if in the picking order $\pi$ agent $i$ has label $j$. Then the expected value of $v_i^j$ (over a uniformly random choice of $j$) equals the proportional share of $i$. Order the labels in order of decreasing order of $v_i^j$ (so that $v_i^1 \ge v_i^2 \ge \ldots \ge v_i^n$). The risk averse agent $i$ employs a greedy picking strategy in the preliminary phase of random picking sequence for labels, meaning that when it is her turn to pick a label, she picks the best among the remaining labels. Under this strategy, the distribution over labels that agent $i$ obtains stochastically dominates the uniform distribution (for every $j$, she has probability at least $\frac{j}{n}$ of getting one of her $j$ most preferred labels). Consequently, in expectation, the value of the bundle that $i$ receives is no worse than her proportional share.
    
    \item We have seen such an example with two agents before the proposition. The example generalizes to any number $n$ of agents as follows. There are $n+1$ chores. Each of the first $n-1$ chores has disutlity~3 for every agent.  For $n-1$ agents, the last two chores each has disutility~2, whereas for one agent, they each have disultility~1. In the picking order, the first labeled agent picks two items, and then each of the remaining agents picks one item. (A similar example can be designed for goods.) 
    \end{enumerate}
\end{proof}

Corollary~\ref{cor:equal} is proved by combining Theorem~\ref{thm:equal} with Proposition~\ref{pro:equal}.

\subsection{Envy for agents with unequal entitlement}

For allocation of goods, we expect agents with higher entitlement to get bundles of higher value than agents with lower entitlements. Consequently, it is natural that agents of low entitlement envy the agents of high entitlement, and we do not attempt to eliminate such envy. However, it seems inappropriate that in a given allocation, agents of high entitlement envy agents of lower entitlement. Hence this is the type of envy that we would like to avoid. Likewise, for allocation of chores, we wish there to be no envy of agents of low responsibility towards agents of high responsibility. 

For picking sequences over chores for agents with arbitrary responsibility (and additive disvaluations), there is a necessary and sufficient condition that ensures that there is no envy of one agent towards the other. (An analogous condition holds for the case of goods but is omitted here.)

\begin{proposition}
\label{pro:EnvyCondition}
Let $\pi$ be a picking sequence for chores. Then a risk averse picker $i$ (with an additive disvaluation function) does not envy picker $j$ (not even ex-post) if in every suffix of $\pi$, picker $j$ has at least as many picks as agent $i$. (For goods the condition is that in every prefix of $\pi$, picker $j$ has at least as many picks as agent $i$.) Moreover, if the condition fails to hold, then there is a choice of additive disvaluation function for picker $i$ under which $i$ envies $j$.
\end{proposition}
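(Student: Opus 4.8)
The plan is to reduce both directions to a pairwise comparison between the picking rounds of $i$ and $j$, and to exploit that a risk-averse $i$ plays greedily. Write $R_i = \{a_1 < \cdots < a_s\}$ and $R_j = \{b_1 < \cdots < b_t\}$ for the picking rounds of the two labels. First I would record the combinatorial reformulation of the hypothesis: ``in every suffix $j$ has at least as many picks as $i$'' is equivalent to $s \le t$ together with $b_{t-k+1} \ge a_{s-k+1}$ for every $k \in \{1,\dots,s\}$ (the $k$-th largest pick of $j$ is no earlier than the $k$-th largest pick of $i$). This is a routine counting argument: evaluating the suffix starting at round $a_{s-k+1}$ shows $j$ has at least $k$ picks there, which is exactly $b_{t-k+1}\ge a_{s-k+1}$, and the converse is the standard discrete-domination statement. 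Re-indexing, this reads $b_{t-s+\ell} \ge a_\ell$ for every $\ell \in \{1,\dots,s\}$, and since no round hosts two distinct pickers, $R_i$ and $R_j$ are disjoint, so these inequalities are in fact strict.

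For sufficiency I would prove the stronger ex-post statement: whenever $i$ follows the greedy strategy, $v_i(B_i) \le v_i(B_j)$ for every realized allocation, regardless of the strategies of all other agents (including $j$). Let $x_\ell$ be the chore $i$ picks in round $a_\ell$ and $y_\ell$ the chore $j$ picks in round $b_\ell$. I match $x_\ell$ with $y_{t-s+\ell}$ for each $\ell$; these indices lie in $\{t-s+1,\dots,t\}$, so the matched $j$-chores are distinct. Since $b_{t-s+\ell} > a_\ell$, the chore $y_{t-s+\ell}$ (taken by $j$ only in the strictly later round $b_{t-s+\ell}$) is still available when $i$ picks in round $a_\ell$; as $i$ greedily takes the cheapest available chore, $v_i(x_\ell) \le v_i(y_{t-s+\ell})$. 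Summing over $\ell$ and using $v_i \ge 0$ to discard the $t-s$ unmatched chores of $j$ gives $v_i(B_i) = \sum_\ell v_i(x_\ell) \le \sum_\ell v_i(y_{t-s+\ell}) \le v_i(B_j)$. Hence $i$ never envies $j$ ex-post, and a fortiori not ex-ante (the guaranteed-value comparison is just the worst-case instance, equivalently: the sorted disutilities $g_1 \le \cdots \le g_m$ satisfy $g_{a_\ell} \le g_{b_{t-s+\ell}}$ by the same domination).

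For necessity, suppose the condition fails, so some round $r$ satisfies $|R_i \cap \{r,\dots,m\}| > |R_j \cap \{r,\dots,m\}|$. I would exhibit the additive disvaluation that assigns disutility $0$ to the $r-1$ most desirable chores and disutility $1$ to all the rest; this is a legal sorted (IDO) function. Under the greedy strategy the guaranteed bundle of a label consists of the sorted-position chores indexed by its picking rounds, so the guaranteed disutility of label $i$ is $|R_i \cap \{r,\dots,m\}|$ and that of label $j$ is $|R_j \cap \{r,\dots,m\}|$. The former strictly exceeds the latter, so $i$ strictly prefers $j$'s label, i.e. $i$ envies $j$ (and the worst-case realization witnesses ex-post envy too).

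I expect the main obstacle to be the availability step inside sufficiency: arguing that the chore $j$ will eventually take in round $b_{t-s+\ell}$ is genuinely still on the table when $i$ greedily picks in the earlier round $a_\ell$, and that this conclusion survives arbitrary (possibly adversarial or non-greedy) behavior of the other agents. The clean reformulation of the global suffix-count condition into the pointwise inequalities $b_{t-s+\ell} \ge a_\ell$ is precisely what makes the matching well-defined and converts the count condition into a usable per-pick domination.
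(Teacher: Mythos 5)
Your proof is correct, and it is in fact more complete than the paper's. The paper's proof of this proposition consists of a single sentence for the necessity direction only (it opens with ``We only explain why the condition is necessary''), using exactly your construction: a disvaluation that is $0$ on the items preceding the offending suffix and $1$ on the $\ell$ items of the suffix, so that the guaranteed disutilities are the suffix pick-counts and $i$ envies $j$. Your necessity argument is therefore identical to the paper's. What you add is a full proof of the sufficiency direction, which the paper treats as self-evident: your reformulation of the suffix-count condition as the pointwise domination $b_{t-s+\ell} \ge a_\ell$ (strict, by disjointness of picking rounds), followed by the matching argument --- the chore that $j$ takes in round $b_{t-s+\ell}$ is still on the table in the earlier round $a_\ell$, so greedy $i$'s pick is weakly cheaper, and summing over $\ell$ and discarding $j$'s unmatched chores (legitimate since disvalues are nonnegative) gives $v_i(B_i) \le v_i(B_j)$ --- is sound, works against arbitrary strategies of all other agents, and correctly establishes the ex-post claim that the proposition asserts. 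The one point worth stating explicitly if this were written up is the observation you already make in passing: the matching argument uses only that $j$'s realized pick in round $b_{t-s+\ell}$ was unavailable to no one before that round, so no assumption on $j$'s (or anyone else's) strategy is needed; this is precisely why the conclusion is ex-post rather than merely ex-ante.
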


\begin{proof}
We only explain why the condition is necessary. Suppose that in a suffix of length $\ell$ agent $i$ has $k \ge 1$ picks whereas agent $j$ has fewer picks (without loss of generality, $k - 1$ picks). Then if the additive disvaluation function of $i$ is such that $m - \ell$ items have disvalue~0 and each of the remaining $\ell$ items has disvalue~1, picker $i$ envies picker $j$. 
\end{proof}

For an arbitrary vector of responsibilities, it is not difficult to design picking sequences in which no risk averse agent envies an agent of higher responsibility, and moreover, every risk averse agent gets a bundle of disvalue at most twice her APS. Start with the proportional fractional allocation, and round it to a picking sequence as in Lemma~\ref{lem:FracToSequence} (this ensures a factor two approximation to the APS). In this rounding (as described in the proof of the lemma), in every round $t$, if there is a choice of several eligible pickers, use the one of highest responsibility. This will satisfy the conditions of Proposition~\ref{pro:EnvyCondition}, ensuring no envy (not even ex-post) of agents of low responsibility towards agents of high responsibilities. 

However, if we wish to have approximation factors $\rho < 2$ (for a fixed $\rho$ independent of $n$)  compared to the APS, this conflicts with ex-post envy freeness.

\begin{proposition}
Consider picking sequences for allocating indivisible chores to agents with additive disvaluation functions. For every $n$, there is a vector of responsibilities for the agents, such that for every picking sequence, either an agent with lower responsibility might envy an agent with higher responsibility, or an agent might get a bundle of value $(2 - O(\frac{1}{n}))$ times her APS. 
\end{proposition}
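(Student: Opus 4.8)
The plan is to reduce the statement to a purely combinatorial question about monotone picking sequences and then to exhibit a responsibility vector for which no such sequence can beat the trivial factor of~$2$ by more than~$O(1/n)$. The first step uses Proposition~\ref{pro:EnvyCondition} to reformulate the envy clause: for chores, a risk-averse agent $i$ fails to envy $j$ under \emph{every} disvaluation function exactly when, in every suffix of the picking sequence, $j$ has at least as many picks as $i$. Hence a picking sequence avoids the event ``some agent of lower responsibility envies some agent of higher responsibility'' (for all valuations simultaneously) if and only if it is \emph{responsibility-monotone}: ordering agents by increasing responsibility $b_1\le\cdots\le b_n$, in every suffix the number of picks is monotone non-decreasing in responsibility. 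If a sequence is not responsibility-monotone the bad envy event already occurs, so it suffices to produce a $\bar b$ for which every responsibility-monotone sequence admits a disvaluation profile under which some agent receives disvalue at least $(2-O(1/n))\,APS$.

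Next I would identify the two adversarial disvaluation functions that drive the bound, both of the IDO threshold type (only a suffix of the worst items is charged, the rest are free), which by Definition~\ref{def:CS} keeps the chore share controlled. For an agent $i$ with $\lfloor 1/b_i\rfloor=K_i$: charging only the worst $L\le K_i$ items at unit disvalue makes $CS_i=APS_i=1$, so agent $i$'s ratio equals the number of its picks among the worst $L$ items; charging all items uniformly makes its ratio essentially its total pick-fraction divided by $b_i$. The first valuation punishes any agent that takes two of the worst $K_i$ items (``window doubling''), the second punishes any agent whose overall pick-rate exceeds $\approx 2b_i$. The crux is that, to stay below a ratio $\rho$, a responsibility-monotone sequence must respect, simultaneously for every agent and every suffix length, both per-agent caps of this kind and the global covering requirement that every round be filled --- a constrained version of the covering analysis used in Proposition~\ref{pro:largen}, but now subject to the extra responsibility-sorted order.

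I would then choose $\bar b$ so that these constraints become jointly infeasible below $2-O(1/n)$. The responsibilities should be spread over many scales (for instance taken so that the windows $\lfloor 1/b_i\rfloor$ realize a wide, roughly geometric range of values), and the point is that the responsibility-monotone order forbids precisely the interleaved ``ridge''-type schedules that, in the unconstrained arbitrary-entitlement setting, allowed ratios below~$2$: here each agent's picks are gated, near every suffix, by the picks of all higher-responsibility agents, so the efficient even spreading that would give ratio close to~$1$ cannot be realized. The resulting covering deficit, accumulated across scales, should force some agent's effective period to be short enough that one of the two valuations above yields disvalue $(2-O(1/n))\,APS$.

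I expect the main obstacle to be exactly this last quantitative covering step. A single threshold cannot by itself force a ratio of~$2$ --- proportional sharing among the high-responsibility agents always suffices at any one scale --- so the bound must come from the impossibility of meeting the caps at \emph{all} suffix lengths at once under the monotone order; ruling out every monotone schedule, and every near-even interleaving, while pinning down the additive $O(1/n)$ slack, is the delicate heart of the argument. A secondary technical point is to verify that for the chosen threshold valuations the APS in fact equals the chore share (no APS--CS gap), so that the lower bound holds against the APS and not merely against the weaker chore-share benchmark.
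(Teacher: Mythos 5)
Your first step --- using Proposition~\ref{pro:EnvyCondition} to reduce the claim to showing that some responsibility vector defeats every ``responsibility-monotone'' picking sequence --- is correct, and it is also how the paper begins. But everything after that is a plan rather than a proof, and you say so yourself: the ``quantitative covering step'' that is supposed to force the ratio $2-O(\frac{1}{n})$ is left open, and it is the entire content of the proposition. Moreover, the route you sketch (responsibilities spread over geometrically many scales, a covering-deficit argument in the style of Proposition~\ref{pro:largen} accumulated across scales) is aimed at the wrong mechanism. Nothing in your outline explains why monotone schedules cannot have all agents pick at nearly even, proportional rates; recall that Theorem~\ref{thm:arbitrary} achieves ratio $1.733$ for arbitrary entitlements, so any obstruction you invoke must come specifically from the suffix-monotonicity constraint, and your sketch never isolates what that obstruction is, let alone why it pushes the bound all the way to $2-O(\frac{1}{n})$ rather than to some constant below~$2$.

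The mechanism the paper actually exploits is much more local and much simpler. Take $n\ge 3$ agents, $k=n-2$, $m=kn+1$ chores, and responsibilities $(\frac{k+1}{m},\frac{k}{m},\ldots,\frac{k}{m})$: all agents essentially equal, with one agent (agent~1) slightly higher. By Proposition~\ref{pro:EnvyCondition}, avoiding envy toward agent~1 forces agent~1 to have at least as many picks as every other agent in \emph{every} suffix; applied to the suffix consisting of the last round alone, this forces agent~1 to pick in round $m$, and applied to the whole sequence, an averaging argument forces her to pick at least $k+1$ times in total. Under greedy play on an IDO instance, the round-$m$ pick is the single worst chore $e_1$. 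A single valuation then finishes the job: give $e_1$ disvalue $k$ and each of the other $kn$ chores disvalue~$1$. Agent~1 receives disvalue at least $k+k=2k$, while an explicit pricing argument shows $APS_1\le k+2$ (for any pricing, either $e_1$ is priced at least $k+1$, so $\{e_1\}$ alone exhausts her budget at disvalue $k$, or the $k+2$ most expensive unit chores have total price at least $(k+2)\frac{kn-k}{kn}=k+1$ and exhaust it at disvalue $k+2$), giving ratio $\frac{2k}{k+2}=2-\frac{4}{n}$. Note that this also disposes of your ``secondary technical point'': the APS is bounded directly by pricing, with no need to relate it to the chore share. The forced last-round pick of the top-responsibility agent, combined with her forced high pick count --- absent from your proposal --- is the idea your sketch is missing; without it, or a concrete substitute, your outline cannot be completed as stated.
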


\begin{proof}
Consider $n \ge 3$ agents, $m = kn+1$ chores (where $k = n-2$), and the vector $(\frac{k+1}{m}, \frac{k}{m}, \ldots, \frac{k}{m})$ of responsibilities. We show that in every picking sequence, either an agent with responsibility $\frac{k}{m}$ might envy the agent with responsibility $\frac{k+1}{m}$, or the agent of responsibility $\frac{k+1}{m}$ might get a bundle of disvalue $2 - \frac{4}{n}$ times her APS.

We refer to the agent with responsibility $\frac{k+1}{m}$ as agent~1. If no agent with lower responsibility envies agent~1, then by Proposition~\ref{pro:EnvyCondition}, in the picking sequence agent~1 picks in round $m$, and also at least $k$ times in the other rounds. Consider a disvaluation function $v_1$ for agent~1 in which chore $e_1$ has disvalue~$k$, and each of the $kn$ other chores has disvalue~1. In the picking sequence, agent~1 might get a bundle of disvalue $2k$, whereas her APS is at most $k+2$. (In any pricing function in which chore prices sum up to $kn + 1$, either $e_1$ is priced at least $k+1$, or the $k+2$ other chores of highest price have total price at least $(k+2)\frac{kn+1 - (k + 1)}{kn} = k+1$, where equality holds because $k = n-2$.) 
\end{proof}

As we do want to get approximation ratios better than~2 compared to the APS, we shall allow for ex-post envy of an agent towards an agent of higher responsibility. However, given any picking sequence $\pi$, such ex-ante envy can be eliminated by the use of an auxiliary picking sequence $\sigma$ for picking identities. Specifically, $\sigma$ orders the agents from lowest responsibility to highest, breaking ties uniformly at random. Risk averse agents will not have ex-ante envy towards agents with higher or equal entitlement. This together with Theorem~\ref{thm:arbitrary} proves Corollary~\ref{cor:arbitrary}. 

\section{Analysis of AlgChores}

%\ufc{We are done with this section.}

{
%We assume without loss of generality that the instance is ordered such that $\vai{e_1}\ge\vai{e_2}\dots\ge\vai{e_m}$ holds for every agent $i$. 
In this section we consider allocation of indivisible chores to agents of equal entitlement, and 
prove that the polynomial time algorithm {\em AlgChores}, introduced in~\cite{BarmanK20}, gives every agent a bundle of disvalue not larger than $\frac{4n-1}{3n}$ times her APS, 
thus proving Theorem~\ref{thm:APS}.
Specifically, Lemma~\ref{lem:APS} proves Theorem~\ref{thm:APS} for the special case of IDO instances (see Definition~\ref{def:IDO}).  As explained in~\cite{BarmanK20}, this suffices in order to imply that the version of AlgChores that is run on general instances also has the same approximation ratio as proved in Lemma~\ref{lem:APS}, thus establishing Theorem~\ref{thm:APS}.

\begin{algorithm}%[H]
\KwIn{An IDO instance}
\KwOut{An allocation}
\BlankLine
	%Suppose the \mmax of
	Initialize allocation $B$ to empty\;
	\For{$r\rightarrow 1$ to $m$}
	{
	Let agent $i$ be an agent that envies no other agent\;
	Allocate chore $e_r$ to bundle $B_i$\;
	Resolve envy cycles\;
	}
    
	\Return {Allocation $B$}
\caption{AlgChores}\label{alg-equalaps}	
\end{algorithm}

\begin{lemma}
\label{lem:APS}
In every IDO instance in which $n$ agents have additive valuations over chores and equal entitlements, the AlgChores allocation gives every agent $i$ a bundle of disvalue at most $\frac{4n-1}{3n}APS_i$
\end{lemma}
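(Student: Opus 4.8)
The plan is to run the entire accounting from the viewpoint of a single agent $i$, and to replace every appeal to the MMS in the known analysis of~\cite{BarmanK20} by a matching lower bound on $APS_i$. This replacement is the whole point: since $APS_i \le MMS_i$, a bound against the MMS says nothing against the (smaller) APS, so each inequality used in the MMS proof must be re-established directly as an inequality about the APS. First I would record the single structural property of AlgChores on an IDO instance that the argument needs. Because chores are processed in order of decreasing disvalue and each $e_r$ is handed to an agent who envies no one, whenever agent $i$ receives a chore her current bundle has minimum $c_i$-disvalue among all bundles; consequently, if $e_\ell$ denotes the last chore that ends up in $B_i$, then at the round it was assigned $i$ was least loaded, which (averaging over the $n$ bundles, whose chores all lie in $\{e_1,\dots,e_{\ell-1}\}$) gives $c_i(B_i)-c_i(e_\ell)\le \frac1n\sum_{j<\ell}c_i(e_j)$.

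Next I would assemble a small toolbox of APS lower bounds, each proved by exhibiting one price vector $P\in\mathcal P$ and reading off the cheapest budget-feasible set (recall $b_i=\frac1n$). Pricing $e_1$ at $1$ gives $APS_i\ge c_i(e_1)$; pricing all items uniformly gives the proportional bound $APS_i\ge \frac1n c_i(\items)$; pricing $e_1,\dots,e_{n+1}$ uniformly at $\frac1{n+1}$ forces any feasible set to contain at least two of them, so $APS_i\ge c_i(e_n)+c_i(e_{n+1})$ (property~3 of Definition~\ref{def:CS}, now read as an APS statement). The key quantitative tool is a cardinality bound: \emph{at most $2n$ chores have disvalue exceeding $\frac13 APS_i$}. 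Indeed, if there were $2n+1$ such ``large'' chores, pricing the first $2n+1$ of them uniformly at $\frac1{2n+1}$ would force every budget-feasible set to contain at least three of them, so its disvalue would exceed $3\cdot\frac13 APS_i=APS_i$, contradicting the definition of $APS_i$.

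With the averaging inequality and the proportional bound the easy case is immediate: from $\sum_{j<\ell}c_i(e_j)\le c_i(\items)-c_i(e_\ell)\le nAPS_i-c_i(e_\ell)$ we get $c_i(B_i)\le APS_i+(1-\frac1n)c_i(e_\ell)$. Hence if $c_i(e_\ell)\le\frac13 APS_i$ then $c_i(B_i)\le (1+\frac{n-1}{3n})APS_i=\frac{4n-1}{3n}APS_i$, as required. (Bundles with at most one chore are covered even more easily by $APS_i\ge c_i(e_1)$.)

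The remaining case $c_i(e_\ell)>\frac13 APS_i$ is where the work concentrates, and is the step I expect to be the main obstacle, because the crude averaging bound alone only yields roughly $\frac32 APS_i$ and is genuinely too weak here. Since $e_\ell$ is the smallest-disvalue chore of $B_i$ and the instance is IDO, every chore of $B_i$—indeed every one of $e_1,\dots,e_L$, where $L$ is the total number of large chores—exceeds $\frac13 APS_i$, and the cardinality bound gives $L\le 2n$. I would then restrict attention to the subinstance $\mathcal L=\{e_1,\dots,e_L\}$, which the algorithm allocates exactly during its first $L$ rounds; applying the guarantee of~\cite{BarmanK20} to $\mathcal L$ bounds $c_i(B_i)$ by $\frac{4n-1}{3n}$ times the MMS of $\mathcal L$, and the proof is closed by showing $MMS^{\mathcal L}_i\le APS_i$. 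This last inequality splits into two facts that must be nailed down carefully: first, that adding the omitted (small) chores back cannot decrease the APS (price them at $0$, so the adversary never includes them), giving $APS^{\mathcal L}_i\le APS_i$; and second, that for an instance with at most $2n$ chores the APS and MMS coincide, which I would prove by exhibiting, for the MMS-optimal $n$-partition, a price vector under which every budget-feasible set has disvalue at least the MMS value. A secondary bookkeeping obstacle is to confirm that envy-cycle resolution and the later insertion of small chores do not invalidate the identification of $B_i$ with the bundle produced on $\mathcal L$ (a bundle whose last addition is a small chore falls back into the easy case), so that the subinstance reduction is legitimate.
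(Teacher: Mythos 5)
Your easy case (last chore of disvalue at most $\frac13 APS_i$) is sound and is essentially the paper's own argument: the no-envy condition at the round of the last receipt plus $APS_i\ge \frac1n c_i(\items)$ gives $c_i(B_i)\le APS_i+(1-\frac1n)c_i(e_\ell)$, and your cardinality bound (at most $2n$ chores exceed $\frac13 APS_i$, via pricing $2n+1$ of them uniformly) is correct and is a nice observation. But the hard case is where your proof has a genuine gap, and it sits exactly at the step you compress into one sentence: that for an instance with at most $2n$ chores the APS and MMS coincide, ``which I would prove by exhibiting, for the MMS-optimal $n$-partition, a price vector under which every budget-feasible set has disvalue at least the MMS value.'' No such pricing falls out of the MMS partition. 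Take $n=2$ and chores of disvalue $10,1,1,1$: the MMS-optimal partition is $\{10\},\{1,1,1\}$ with $MMS=10$, and the natural partition-based prices ($\frac12$ on the big chore, $\frac16$ on each small one) admit the budget-feasible set $\{1,1,1\}$ of price $\frac12$ and disvalue $3<10$. (Here a different pricing does work --- price the big chore at $1$ --- but it is not read off the partition; in general the statement ``$m\le 2n\Rightarrow APS=MMS$'' is a nontrivial claim about the integrality of a configuration-LP/fractional-matching relaxation under the threshold compatibility structure of additive disvalues, and proving it takes machinery --- LP duality for the APS plus a Berge--Tutte-type argument --- far heavier than anything in the lemma you are trying to prove.) So your reduction replaces the statement to be proved by an unproven lemma that is arguably harder, and the one-line justification offered for it fails. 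There is also the secondary issue you flag yourself: ``the last chore that ends up in $B_i$'' conflates the history of the bundle-object (which may have received its last chore while another agent held it) with agent $i$'s own trajectory; the averaging inequality is only available at a round where \emph{agent $i$ herself} receives a chore while envying no one, so the case split must be organized around the agent's last receipt, not the bundle's last addition.

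The paper avoids your problematic step entirely by never comparing APS to MMS on a sub-instance. At the critical round $t$ (defined via agent $i$'s trajectory), it prices the items of each current bundle $B_j^{t-1}$ uniformly at $\frac{1-\epsilon}{|B_j^{t-1}|}$, prices $e_t$ at $n\epsilon$, and everything later at $0$. The APS then yields a budget-feasible set $B$ with $c_i(B)\le APS_i$, and a short case analysis does double duty: if $B$ contains at least three positively priced items then $c_i(e_t)\le\frac13 APS_i$ and your easy case applies; if $B$ contains exactly two items, then one of them must be a singleton bundle $\{e_k\}$ of price $1-\epsilon$, and $c_i(e_k)+c_i(e_t)\le APS_i < c_i(B_i^{t-1})+c_i(e_t)$ shows agent $i$ envied $\{e_k\}$, contradicting the fact that she received $e_t$. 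In other words, the paper uses the algorithm's own no-envy invariant, rather than a structural APS-versus-MMS theorem for small instances, to kill the case $c_i(e_t)>\frac13 APS_i$. If you want to salvage your route, that is the ingredient to import: the price vector must be built from the algorithm's state, not from an MMS partition.
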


\begin{proof}
Recall that in IDO instances, items are ordered in decreasing order of disvalue, with $e_1$ being the item of highest disvalue. Recall that in AlgChores, item $e_r$ is allocated in round $r$ to an agent that envies no other agent, and then the round is completed by eliminating envy-cycles (a more detailed description can be found in~\cite{BarmanK20}). For simplicity of notation, we assume without loss of generality that valuations are scaled so that the APS of every agent is~1. As the APS is not smaller than the PS (proportional share), in every round $r$ AlgChore allocates item $e_r$ to an agent $i$ who holds a bundle of disvalue at most her APS. (As $i$ did not envy any other agent at the beginning of round $r$, her disvalue was not larger than her proportional share, and hence not larger than her APS.)

Consider an arbitrary agent $a$ whose final bundle has disvalue strictly larger than~1 (if there is no such agent then we are done). Let $t$ be smallest so that for all rounds $t$ and above the bundles that $a$ holds have disvalue larger than~1. (Agent $a$ does not get any new chore after round $t$, because there always is at least one bundle of value less than the proportional share, which is at most 1.)  Then in the end of round $t-1$ (after envy-cycle elimination, if round $t-1$ created an envy-cycle) agent $a$ holds a bundle $B^{t-1}$ of disvalue at most~1. %\xin{Needs more explanation for why the cost is less than 1.} \ufc{Please see if by now the comment is addresses in a satisfactory way.} 
(In fact, as will be shown later in the proof, the fact that agent $a$ gets item $e_t$ implies that $c_a(B^{t-1}) \le 1 - \frac{c_a(e_t)}{n}$.) Let $e_i$ be the item of highest disvalue in $B^{t-1}$. ($B^{t-1}$ cannot be empty because no single item has disvalue more than the APS, and consequently $e_t$ would not bring the disvalue of the bundle to above the APS). Rename agent $a$ to be agent $i$ (and $B^{t-1}$ to be $B_i^{t-1}$), and likewise, rename all other agents according to the item of highest disvalue in their bundle at the end of round $t-1$. (Note that it could be that the first item that agent $a$ received was not $e_i$, if the agent subsequently participated in the elimination of an envy cycle.) With a slight abuse of notation, we denote  $B_i^{t-1} \cup \{e_t\}$ by $B_i^t$ {(this is an abuse of notation, because if round $t$ ends with an envy cycle elimination, then the bundle held by agent $i$ after round $t$ might not be $B_i^{t-1} \cup \{e_t\}$)}. The disvalue of the final bundle received by agent $i$ is not larger than $c_i(B_i^t)$, as after round $t$ agent $i$ never receives an item (though $i$ can participate in elimination of envy cycles). Hence it remains to bound $c_i(B_i^t)$.

Consider a price function in which for every $1 \le j \le n$ (including $j = i$), every item in bin $B_j^{t-1}$ gets price $\frac{1 - \epsilon}{|B_j^{t-1}|}$, item $e_t$ gets price $n\epsilon$, and the remaining items (of index higher than $t$) have price~0. Here $\epsilon > 0$ is chosen to be sufficiently small so that $e_t$ has the smallest price among all items of positive price. By definition of the APS, there is a bundle $B$ of price at least~1 whose disvalue is at most~1. Being of price~1, bundle $B$ must contain at least two items. Suppose that $B$ contains at least three items. As the instance is ordered and $c_i(B) \le 1$, this implies that $c_i(e_t)\le \frac{1}{3}$. As we are running AlgChores, we also have the inequalities $c_i(B_i^{t-1}) \le c_i(B_j^{t-1})$ for every $1 \le j \le n$. As the APS is at least as large as the proportional share we have the inequality $n \cdot APS_i = n \ge c_i(\items) \ge c_i(e_t) + \sum_{j=1}^n c_i(B_j^{t-1}) \ge c_i(e_t) + n\cdot c_i(B_i^{t-1})$. Hence $c_i(B_i^{t-1}) \le 1 - \frac{c_i(e_t)}{n}$. This implies that $c_i(B_i^t) = c_i(B^{t-1}) + c_i(e_t) \le 1 + (1 - \frac{1}{n})c_i(e_t) \le \frac{4n-1}{3n}$. 
  
It remains to prove that $B$ does not contain exactly two items. Suppose for the sake of contradiction that $B$ contains two items. {Then one of the items of $B$ (say $e_k$) must be priced $1 - \epsilon$ (because any item of smaller price has price at most $\frac{1-\epsilon}{2}$).} By the pricing rule, the corresponding $B_k^{t-1}$ has a single item. Moreover, $c_i(\{e_k , e_t\}) \le c_i(B) \le 1$, where the first inequality follows from the fact that $B$ contains an item of positive cost other than $e_k$, and among items of positive cost, $e_t$ has the smallest disvalue. Consequently, $c_i(B_k^{t-1}) < c_i(B_i^{t-1})$ (because $c_i(B_i^{t-1}) + c_i(e_t) = c_i(B_i^t) > 1$), implying that agent $i$ envied bin $B_k^{t-1}$ in round $t$. This contradicts the fact that $e_t$ is given to agent $i$.  
\end{proof}

AlgChores does not guarantee a ratio better than $\frac{4n - 1}{3n}$ (not even compared to the MMS). Consider $n$ agents, all having the same additive disvaluation function $c$. There are $2n+1$ items, indexed both by subscripts and superscripts (items with the same subscript have the same disvalue). Three items $e_0^1$, $e_0^2$, $e_0^3$ each has disvalue~1. 
For each $1 \le j \le n-1$, there are two items $e_j^1$ and $e_j^2$, each of disvalue $1 + \frac{j}{n}$. The MMS is~3 (the MMS partition has one bundle with $\{e_0^1, e_0^2, e_0^3\}$, and for every $1 \le j \le n-1$, the bundle $\{e_j^1, e_{n-j}^2\}$). AlgChores will allocate chores starting with the chore of highest disvalue. After allocating $2n$ chores (all chores except for chore $e_0^1$), it will have $n$ bundles, each of value $3n - \frac{1}{n}$. Thereafter, regardless of where $e_0^1$ is placed, it will creates a bundle of cost $4 - \frac{1}{n}$. Compared to the MMS (and likewise to the APS, which equals the MMS in this allocation instance), this gives a ratio of $\frac{4n - 1}{3n}$. 
}

The example shown in Section~\ref{sec:CS} implies that there are instances in which in every allocation some agent $i$ gets a bundle of at least $\frac{3n}{2n-1}CS_i$. Consequently, $APS_i$ cannot be replaced by $CS_i$ in Lemma~\ref{lem:APS}.

\subsection*{Acknowledgements}

This research was supported in part by the Israel Science Foundation (grant No. 1122/22). Xin Huang was supported in part at the Technion Israel Institute of Technology by an Aly Kaufman Fellowship. %\xhc{I added my support here.}

\bibliographystyle{acm}
\bibliography{ref}

\end{document}